\newcommand{\ie}{i.e.,\ }
\newcommand{\eg}{e.g.,\ }
\newcommand{\closure}{\texttt{lc}}
\newcommand{\parent}{\P}
\newcommand{\mainproblem}{{\sc MCL}}
\tikzstyle myBG=[line width=3pt,opacity=1]
\newcommand{\drawLinewithBG}[2]
{
  \draw[white,myBG]  (#1) -- (#2);
  \draw[black,very thick] (#1) -- (#2);
}
\newcommand{\drawPolarLinewithBG}[2]
{
  \draw[white,myBG]  (#1) -- (#2);
  \draw[black,very thick] (#1) -- (#2);
}
\newcommand{\etal}{\textit{et al.}\xspace}
\newcommand{\wrt}{w.r.t.\ }
\newcommand{\suchthat}{\;\ifnum\currentgrouptype=16 \middle\fi|\;}
\newcommand{\RR}{\mathbb{R}\xspace}
\newcommand{\Q}{\mathbb{Q}\xspace}
\newcommand{\N}{\mathbb{N}\xspace}
\def\C{{\cal C}}
\def\I{{\cal I}}
\def\B{{\cal B}}
\def\K{{\cal K}}
\def\T{{\cal T}}
\def\P{{\cal P}}
\newcommand{\figref}[1]{Fig.~\ref{#1}}
\newtheorem{Thm}{Theorem}%[section]
\newtheorem{Lem}{Lemma}
\newtheorem{Prop}{Proposition}
\providecommand*{\cupdot}{%
  \mathbin{%
    \mathpalette\@cupdot{}%
  }%
}
\newcommand*{\@cupdot}[2]{%
  \ooalign{%
    $\m@th#1\cup$\cr
    \sbox0{$#1\cup$}%
    \dimen@=\ht0 %
    \sbox0{$\m@th#1\cdot$}%
    \advance\dimen@ by -\ht0 %
    \dimen@=.5\dimen@
    \hidewidth\raise\dimen@\box0\hidewidth
  }%
}
\newcommand\restr[2]{{% we make the whole thing an ordinary symbol
  \left.\kern-\nulldelimiterspace % automatically resize the bar with \right
  #1 % the function
  \vphantom{\big|} % pretend it's a little taller at normal size
  \right|_{#2} % this is the delimiter
}}
\newcommand{\removelatexerror}{\let\@latex@error\@gobble}
\let\oldnl\nl
\newcommand{\nonl}{\renewcommand{\nl}{\let\nl\oldnl}}
\tikzset{
  shorten/.style={/tikz/shorten >={#1},/tikz/shorten <={#1}}}
\tikzstyle{node}=[circle,draw,inner sep=2pt,transform shape,minimum size=1.75em]
\Crefname{observation}{Observation}{Observations}
\Crefname{hypothese}{Assumption}{Assumptions}
\title{An Improved Upper Bound on % the Time Delay of 
Maximal Clique Listing via Rectangular Fast Matrix Multiplication
\footnote{This work was supported by the \emph{Department of Computer Science, University of Verona, Italy}, under Ph.D. grant 
``Computational Mathematics and Biology`` on a co-tutelle agreement with \emph{LIGM, Universit\'e Paris-Est in Marne-la-Vall\'ee, Paris, France}.}} 
\author{Carlo Comin\footnote{\emph{Department of Mathematics, University of Trento, Italy}. (carlo.comin@unitn.it)} 
 \and Romeo Rizzi\footnote{\emph{Department of Computer Science, University of Verona, Italy}. (romeo.rizzi@univr.it)}} 
\date{} 
\begin{document} 

\maketitle

\begin{abstract} 
The first output-sensitive algorithm for the Maximal Clique Listing problem was given by Tsukiyama \etal in 1977. 
As any algorithm falling within the Reverse Search paradigm, it performs a DFS visit of a directed tree (the RS-tree) 
having the objects to be listed (\ie maximal cliques) as its nodes. 
In a recursive implementation, the RS-tree corresponds to the recursion tree of the algorithm. 
The time delay is given by the cost of generating the next child of a node, and Tsukiyama showed it is $O(mn)$. 
In 2004, Makino and Uno sharpened the time delay to $O(n^{\omega})$ 
by generating all the children of a node in one single shot, which is performed by computing a \emph{square} fast matrix multiplication.
In this paper, we further improve the asymptotics for the exploration of the same 
RS-tree by grouping the offsprings' computation even further. Our idea is to rely on \emph{rectangular} fast matrix multiplication 
in order to compute all the children of $n^2$ nodes in one single shot. According to the current upper bounds on square and rectangular 
fast matrix multiplication, with this the time delay improves from $O(n^{2.3728639})$ to $O(n^{2.093362})$.
\end{abstract} 

\emph{Keywords:} Maximal Clique Listing, Rectangular Fast Matrix Multiplication, 
Output Sensitive Algorithm, Polynomial Time Delay, Reverse Search Enumeration, Backtracking.

\section{Introduction}\label{sect:introduction}
In an undirected graph $G$, a \emph{clique} is any subset $K$ 
of the vertex set such that any two vertices in $K$ are adjacent. 
A clique is \emph{maximal} when it is not a subset of any larger clique. 
This paper addresses the problem of generating all the maximal cliques of a given graph, 
namely \emph{Maximal Clique Listing} (\mainproblem).
Maximal cliques are fundamental graph objects, so the {\mainproblem} problem  
may be regarded as one of the central problems in the field of graph enumeration, 
and indeed it attracted a considerable attention also in the 
past~\cite{TsukiyamaIAS77, Chiba85, Johnson88, MU04}.  
The problem has not only theoretical interest in computational complexity, 
but it possesses several consolidated applications as well, \eg in bioinformatics, 
clustering, computational linguistics and 
data-mining~\cite{MU04, Eppstein10, Eppstein11}.

As shown by Moon and Moser in 1965, any graph on $n$ vertices contains at most $3^{n/3}$ 
maximal cliques~\cite{MoonMoser65}. 
It is therefore particularly interesting to focus on \emph{polynomial time delay} 
algorithms for generating all of them without repetitions. 
An {\mainproblem} algorithm has $O(f(n))$ \emph{time delay}  
whenever the time spent between the outputting of any two consecutive maximal cliques is $O(f(n))$;
for this, the procedure is allowed to undertake a polynomial time pre-processing phase, if needed. 
 
Both in the past and more recently, a considerable number of algorithms have been presented 
and evaluated (experimentally or theoretically) for {\mainproblem}.
Tsukiyama, \etal~\cite{TsukiyamaIAS77} first proposed in 1977 a polynomial time delay solution for generating 
all maximal independent sets (thus, by complementarity, all maximal cliques) in a given graph $G=(V,E)$. 
Their procedure works with $O(n+m)$ space and $O(mn)$ time delay. Here, $m=|E|$ and $n=|V|$.
In 1985, Chiba and Nishizeki~\cite{Chiba85} reduced the time delay to $O(\gamma(G)m)$, 
where $\gamma(G)$ is the arboricity of $G$ and $m/(n-1)\leq \gamma(G)\leq m^{1/2}$. 
Johnson, Yannakakis and Papadimitriou~\cite{Johnson88} proposed in 1988 
an algorithm for enumerating all the maximal cliques in the lexicographical order. 
Their procedure runs with $O(mn)$ time delay, but it also uses $O(nN)$ space 
(where $N$ denotes the total number of maximal cliques of $G$).
A summary of previous and present results is offered in Table~\ref{Table:Algorithms}.

\begin{table}[!htb]
\caption{Time and Space Complexity of the main Output-Sensitive Algorithms for {\mainproblem}.}
\label{Table:Algorithms}
\centering
\bgroup
\def\arraystretch{1.4}
\begin{tabular}{| c   c  c  c | }
\hline
\vtop{\hbox{\strut }\hbox{\strut Algorithm }\hbox{\strut }}& \vtop{\hbox{\strut }\hbox{\strut Time to First $x$ }\hbox{\strut }} & \vtop{\hbox{\strut }\hbox{\strut Time Delay }\hbox{\strut }}  
  & \vtop{\hbox{\strut }\hbox{\strut Work Space }\hbox{\strut }} \\
\hline
%\vtop{\hbox{\strut 
	Algo.~\ref{ALGO:strict} 
%}\hbox{\strut }} 
	& 
%\vtop{\hbox{\strut 
	$O(n^{\omega+3} + x n^{2\omega(1,1,1/2)-2})$ 
%} \hbox{\strut 
%	$=O(n^{5.3728639}+xn^{2.093362})$ 
%}}   
	& 
%\vtop{\hbox{\strut 
	$O(n^{2\omega(1,1,1/2)-2})$ 
%} \hbox{\strut 
%	$=O(n^{2.093362})$ 
%}}   
 &   $O(n^{4.2796})$  \\ 
%\vtop{\hbox{\strut 
	Algo.~\ref{ALGO:main} 
%}\hbox{\strut }} 
	& 
%\vtop{\hbox{\strut 
	$O(n^{\omega+3} + x n^{2\omega(1,1,1/2)-2})$ 
%}\hbox{\strut 
%	$=O(n^{5.3728639}+xn^{2.093362})$ 
%}} 
& 
% \vtop{\hbox{\strut 
	$O(n^{\omega+2})$ 
%}\hbox{\strut 
%	$=O(n^{4.3728639})$ 
%}}  
&  $O(n^4)$ \\ 
%\vtop{\hbox{\strut 
	MU04~\cite{MU04} 
%}\hbox{\strut }} 
  & 
%\vtop{\hbox{\strut 
	$O(xn^{\omega})$ 
%} \hbox{\strut 
%	$=O(xn^{2.3728639})$ 
%}}   
	& 
%\vtop{\hbox{\strut 
	$O(n^{\omega})$ 
%} \hbox{\strut 
%	$=O(xn^{2.3728639})$  
%}} 
&    $O(n^2)$ \\
CN85~\cite{Chiba85} &  $O(x \alpha(G)m)$ & $O(\alpha(G)m)$ &   $O(m+n)$   \\
TIAS77~\cite{TsukiyamaIAS77} &  $O(xmn)$ & $O(mn)$ & $O(m+n)$ \\
JYP88~\cite{Johnson88} & $O(xmn)$  &  $O(mn)$ &  $O(mnN)$ \\
\hline
\end{tabular}
\egroup
\end{table}

%\begin{table}[!htb]
%\caption{Time and Space Complexity of the Main Algorithms for {\mainproblem}.}
%\label{Table:Algorithms}
%\centering
%\bgroup
%\def\arraystretch{1.4}
%\begin{tabular}{| c   c  c  c | }
%\hline
%\vtop{\hbox{\strut }\hbox{\strut Algorithm }\hbox{\strut }}& \vtop{\hbox{\strut }\hbox{\strut Time to First $x$ }\hbox{\strut }} & \vtop{\hbox{\strut }\hbox{\strut Time Delay }\hbox{\strut }}  
%  & \vtop{\hbox{\strut }\hbox{\strut Space Usage }\hbox{\strut }} \\
%\hline
%\vtop{\hbox{\strut 
%	Algo.~\ref{ALGO:strict} 
%}\hbox{\strut }} 
%	& 
%\vtop{\hbox{\strut 
%	$O(n^{\omega+3} + x n^{2\omega(1,1,1/2)-2})$ 
%} \hbox{\strut 
%	$=O(n^{5.3728639}+xn^{2.093362})$ 
%}}   
%	& 
%\vtop{\hbox{\strut 
%	$O(n^{2\omega(1,1,1/2)-2})$ 
%} \hbox{\strut 
%	$=O(n^{2.093362})$ 
%}}   
% &   $O(n^{4.2796})$  \\ 
%\vtop{\hbox{\strut 
%	Algo.~\ref{ALGO:main} 
%}\hbox{\strut }} 
%	& 
%\vtop{\hbox{\strut 
%	$O(n^{\omega+3} + x n^{2\omega(1,1,1/2)-2})$ 
%}\hbox{\strut 
%	$=O(n^{5.3728639}+xn^{2.093362})$ 
%}} 
%& 
% \vtop{\hbox{\strut 
%	$O(n^{\omega+2})$ 
%}\hbox{\strut 
%	$=O(n^{4.3728639})$ 
%}}  
%&  $O(n^4)$ \\ 
%\vtop{\hbox{\strut 
%	\cite{MU04} 
%}\hbox{\strut }} 
%  & 
%\vtop{\hbox{\strut 
%	$O(xn^{\omega})$ 
%} \hbox{\strut 
%	$=O(xn^{2.3728639})$ 
%}}   
%	& 
%\vtop{\hbox{\strut 
%	$O(n^{\omega})$ 
%} \hbox{\strut 
%	$=O(xn^{2.3728639})$  
%}} 
%&    $O(n^2)$ \\
%\hline
%\end{tabular}
%\egroup
%\end{table}

Both the algorithm of Tsukiyama, \etal and that of Johnson, \etal can be placed within the framework  
of the \emph{Reverse Search Enumeration} (\emph{RSE}), a technique which was first introduced by Avis and Fukuda 
in the context of efficient enumeration of vertices of polyhedra and arrangements of hyperplanes~\cite{AvisFukuda93}.
Very briefly, the RSE is a technique for listing combinatorial objects by \emph{reversing} a given optimization objective function $f$.
Let $G=(V,E)$ be a connected graph whose nodes are precisely the objects to be listed. 
Suppose we have some objective function $f:V\rightarrow\N$ to be maximized over all nodes of $G$. 
Also, assume we are given a local search algorithm on $G$ that is a deterministic procedure to move 
from any node to some neighboring node which is larger with respect to $f$, until there exists no better neighbor. 
The algorithm is finite if for any starting node it terminates within a finite number of steps. 
We may consider the digraph $\T_G$ with the same node set as $G$ and in which the edges are all the ordered pairs $(x,x')$ of 
consecutive nodes $x$ and $x'$ generated by the same local search algorithm. Assuming that there is only one local optimal node $x^*$, then 
$\T_G$ is a single directed tree spanning all the nodes of $G$ and having $x^*$ as its only sink and root. 
In this manner, if we trace $\T_G$ from $x^*$ backwards, say with a Depth-First Search, 
we can enumerate all nodes of $G$, i.e., all combinatorial objects. The major operation involved is tracing each edge against its orientation, 
which corresponds to \emph{reversing} the local search optimization algorithm in order to compute 
a parent-child relation that fully describes $\T_G$; notice that, in this case, the minor work of backtracking is 
simply that of performing a single local search step itself. Whence, the key ingredient of any RSE is the computation of 
the parent-child relation in an efficient way. If the height of $\T_G$ is at most $n$, 
then the memory consumed throughout the listing process is polynomial in $n$.

Indeed, the algorithm of Tsukiyama,~\etal performs a DFS visit of a directed tree -- namely, the RS-tree -- having the objects to be listed (\ie maximal cliques) as its nodes. 
In a recursive implementation, the RS-tree corresponds to the recursion tree of the algorithm. Tsukiyama, \etal showed in \cite{TsukiyamaIAS77} that the time delay of their procedure is $O(mn)$. 
In 2004, Makino and Uno sharpened the time delay of {\mainproblem} to $O(n^{\omega})$, by generating all the 
children of a node in one single shot which is performed by computing a \emph{square} fast matrix multiplication~\cite{MU04}. 
In particular, the procedure of Makino and Uno runs with $O(M(n))=O(n^{\omega})$ time delay and works with $O(n^2)$ space,  
where $M(n)=O(n^{\omega})$ denotes the minimum number of arithmetic operations needed to multiply two $n\times n$ \emph{square} matrices. 
The best upper bound on $\omega$ which is currently known was shown by Le~Gall~\cite{Gall14} in 2014, it is $\omega \leq 2.3728639$. 
By these results, the algorithm of Makino and Uno runs with $O(n^{2.3728639})$ time delay. 
To the best of our knowledge, this is the tightest upper bound on the time 
delay complexity of {\mainproblem} which is currently known in the literature. 

\smallskip
\textbf{Contribution.} 
In this work we improve the tightest known upper bound on the time delay complexity of {\mainproblem}.  
In particular, we show that the parent-child relation of the corresponding RS-Tree   
admits an asymptotically faster (with respect to that devised by Makino and Uno~\cite{MU04}) computing procedure
that works by grouping the offsprings' computation even further than in~\cite{MU04}.
Briefly, our procedure works by grouping together multiple children generation problems into batches 
of $n^2$ problems (where each single problem consists into the task of computing all children of a given maximal clique) 
and then by reducing the job of solving a whole batch of $n^2$ problems, in one single shot,   
to that of multiplying two \emph{rectangular} matrices. We remark that, in so doing, 
this work proposes a novel representation for the basic task of generating the children nodes for {\mainproblem}. 
This conceptual shift is the essential lever which allows for a suitable adoption of 
rectangular matrix multiplication methods in {\mainproblem}. In this way, 
we prove a sharpened upper bound on the time delay of {\mainproblem}, 
improving it from $O(n^{\omega})=O(n^{2.3728639})$ to $O(n^{2\omega(1,1,1/2)-2})=O(n^{2.093362})$; 
here, $O(n^{2\omega(1,1,1/2)})$ denotes the minimum number of arithmetic operations needed 
to perform any $n^2\times n$ by $n\times n^2$ matrix product, 
and it is the standard\footnote{%Indeed, finding the optimal value of $\omega$ is currently  
%one of the most important open questions in algebraic complexity theory.
%Another way to interpret this major open problem is that of considering the 
%multiplication of an $n \times \lfloor n^k \rfloor$ 
%matrix by an $\lfloor n^k \rfloor \times n$ one, for any $k\in \RR$ and $k>0$. 
Following the notation as in~\cite{Gall12}, for any $k\in\Q$ such that $k>0$, 
let $C(n, \lfloor n^k \rfloor, n)$ be the minimum number of \emph{arithmetic} 
operations needed to multiply an $n\times \lfloor n^k \rfloor$ matrix by an $\lfloor n^k \rfloor\times n$ one. 
The corresponding complexity exponent is defined as follows:
\[
\omega(1,k,1) = \inf\{ \tau\in\RR \mid C(n,\lfloor n^k \rfloor, n)=O(n^\tau)\},\;\;\;\; \text{for every $(0, +\infty)\cap\Q$.} 
\]
Notice that $\omega(1,1,1)=\omega$ is the complexity exponent of the $n\times n$ square case.
As for $ \lfloor n^i\rfloor \times \lfloor n^k \rfloor$ by $\lfloor n^k \rfloor\times \lfloor n^j\rfloor$ matrix products, 
the corresponding complexity exponent is $\omega(i,k,j) = 
\inf\{ \tau\in\RR \mid C( \lfloor n^i \rfloor, \lfloor n^k \rfloor,\lfloor n^j \rfloor)=O(n^\tau)\}$. 
} 
notation for expressing significant bounds on rectangular matrix multiplication. Our main results are summarized below.

\begin{Thm}\label{thm:firstx}
There is a procedure (Algorithm~\ref{ALGO:main}) for listing all the 
maximal cliques of any given $n$-vertex graph $G=(V,E)$, without repetitions, 
and in such a way that for every $x\in\N$ the first $x$ maximal cliques are outputted within the following time bound:
\[\tau_{\texttt{first\_}x}= O\big(n^{\omega+3}+xn^{2\omega(1,1,1/2)-2}\big) = O\big(n^{5.3728639} + xn^{2.093362}\big).\] 
For this, the procedure employs $O(n^4)$ space.
\end{Thm}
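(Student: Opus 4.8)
The plan is to reuse the Reverse Search tree $\T_G$ of Tsukiyama~\etal~\cite{TsukiyamaIAS77} as refined by Makino and Uno~\cite{MU04}, and to replace MU04's per-node ``single shot'' (one $n\times n$ product) by a per-batch single shot (one $n^2\times n$ by $n\times n^2$ product). First I would recall the structure of $\T_G$: after fixing a vertex order $v_1,\dots,v_n$, its root is the lexicographically least maximal clique $\closure(\emptyset)$, for a non-root maximal clique $K$ the parent is obtained from a suitable prefix $K\cap\{v_1,\dots,v_{i-1}\}$ via the greedy completion operator $\closure(\cdot)$, and the children of $K$ are exactly the cliques $\closure\big((K\cap\{v_1,\dots,v_{i-1}\})\cup\{v_i\}\big)$ over the indices $i$ for which this set is a clique and its parent equals $K$. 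Since every node of $\T_G$ is itself a maximal clique and $\T_G$ has height at most $n$, a traversal that visits each node once and never stores more than $O(n^2)$ nodes at a time will be both output-sensitive (one output per visited node) and polynomial-space; correctness of the listing then reduces to the correctness of the routine that, given a node, produces exactly its children, which is inherited from~\cite{MU04,TsukiyamaIAS77}.

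The heart of the argument is a batch subroutine: given any list of $q=n^2$ maximal cliques $K^{(1)},\dots,K^{(q)}$, it outputs all their children — with multiplicities, removed afterwards — in time $O\big(n^{2\omega(1,1,1/2)}\big)$. I would establish this through the ``novel representation'' advertised in the introduction: the $q$ cliques, encoded as $0/1$ rows, form an $n^2\times n$ matrix $X$; the graph, together with the prefix-monotone bookkeeping that a single greedy sweep of $\closure(\cdot)$ performs, is encoded once and for all into an $n\times n^2$ matrix $Y$ that does \emph{not} depend on which clique of the batch one looks at; and the product $XY$, an $n^2\times n^2$ matrix, contains in its $s$-th block of rows all the data needed to reconstruct, in $O(n^2)$ additional time, every candidate child of $K^{(s)}$ together with the information to test the parent condition. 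The point is that $Y$ is shared, so the $q$ children-problems collapse to one genuinely rectangular product rather than to $q$ square ones; the post-processing costs $O(q\cdot n^2)=O(n^4)=O(n^{2\omega(1,1,1/2)})$, so a batch is dominated by the multiplication, and rescaling the $n^2\times n$ by $n\times n^2$ product to base $n^2$ gives cost $O\big(n^{2\omega(1,1,1/2)}\big)$ by definition of $\omega(1,1,1/2)$ and the symmetry of the exponent in its three arguments.

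With the subroutine in hand, Algorithm~\ref{ALGO:main} explores $\T_G$ in blocks: it keeps a buffer of discovered-but-not-yet-expanded nodes and, whenever the buffer reaches $n^2$ nodes (or the exploration is otherwise stalled), flushes it through the batch subroutine, emitting those cliques and re-inserting their children; the order of discovery versus expansion is arranged (by interleaving flushes with a depth-bounded sweep and exploiting the height bound on $\T_G$) so that at any moment we hold only $O(n^2)$ cliques together with the children list and product matrix of a single batch, i.e.\ $O(n^4)$ words in all. For the running time, after a one-time preprocessing of cost $O(n^{\omega+3})$ — described in Algorithm~\ref{ALGO:main}, computing $\closure(\emptyset)$ and the shared data underlying $Y$ — producing the first $x$ maximal cliques uses at most $\lceil x/n^2\rceil+O(1)$ batches, hence
\[
\tau_{\texttt{first\_}x}=O\!\big(n^{\omega+3}\big)+\big(\lceil x/n^2\rceil+1\big)\cdot O\!\big(n^{2\omega(1,1,1/2)}\big)=O\!\big(n^{\omega+3}+x\,n^{2\omega(1,1,1/2)-2}\big),
\]
where $n^{2\omega(1,1,1/2)}\le n^{\omega+3}$ absorbs the $O(1)$ extra batches; plugging in $\omega\le2.3728639$~\cite{Gall14} and $\omega(1,1,1/2)\le2.046681$~\cite{Gall12} yields the numerical bound, and dividing one batch's cost by the $n^2$ cliques it emits gives the amortized $O\big(n^{2\omega(1,1,1/2)-2}\big)$ time delay.

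I expect the main obstacle to be the batch subroutine: designing $X,Y$ so that (i) $Y$ is genuinely independent of the individual clique — otherwise one is left with $n^2$ unrelated square products and nothing is gained — while still capturing enough of the sequential greedy sweep of $\closure(\cdot)$ for the $O(n^2)$-time per-clique post-processing to finish each child and decide the parent test, and (ii) the reduction stays within the $n^2\times n$ by $n\times n^2$ shape, which is the shape that makes the current rectangular-multiplication bounds of~\cite{Gall12} pay off. A secondary, more bookkeeping-flavoured difficulty is the traversal schedule itself: one must flush in batches of (essentially) $n^2$ nodes, yet keep only $O(n^2)$ nodes live, and still guarantee that the $x$-th maximal clique is emitted after only $O(x/n^2+1)$ flushes, which is exactly where the height bound on $\T_G$ and a careful ordering of discovery versus expansion enter.
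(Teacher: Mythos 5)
Your overall architecture --- the same RS-tree, a shared $n\times n^2$ matrix $Y$ that does not depend on the individual clique, one $n^2\times n$ by $n\times n^2$ product per batch of $n^2$ cliques, and $O(n^2)$ post-processing per clique --- is exactly the paper's reduction (Proposition~\ref{prop:reduction}, Algorithm~\ref{ALGO:solve-IB}), so the heart of the argument matches. The genuine gap is in the traversal accounting: your claim that the first $x$ cliques are produced in ``$\lceil x/n^2\rceil+O(1)$ batches'' is unjustified and wrong as stated. In a batched LIFO traversal the buffer can stall --- the stack empties before $n^2$ nodes have been collected --- and you give no argument ruling out that this happens many times, each stall producing an under-full batch that still costs a full flush; were stalls frequent, the number of flushes needed to emit $x$ cliques would not be $O(x/n^2+1)$ and neither the additive term nor the amortized delay would follow. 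The paper needs a separate lemma here (Lemma~\ref{lem:batch2} and Proposition~\ref{prop:batch2}): a three-way colouring argument showing that every stall completely exhausts one more level of $\T_G$, hence there are at most $n$ stalls in the whole run; the $\le n$ under-full batches are then dispatched by Makino--Uno at cost $O(n^{\omega}|\B|)$ each, which is where the $O(n^{\omega+3})$ additive term actually comes from. Your ``$O(1)$'' must be replaced by ``$\le n$'' and proved; you correctly sense that the height bound enters here but do not supply the argument.

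A second, smaller gap: the assertion that ``at any moment we hold only $O(n^2)$ cliques'' is stronger than what a batch-DFS guarantees, and you describe no schedule achieving it while still always assembling full batches. The paper instead proves (Proposition~\ref{prop:batch1}) that the LIFO stack can grow to $n^2B=n^4$ node-descriptors and recovers $O(n^4)$ space by storing each batch's offspring compactly as pairs $(P,\texttt{list}_P)$ rather than as explicit cliques. Your final $O(n^4)$ figure is saved only because the product matrix already has $n^4$ entries; the bound on the live frontier itself is unproved. (Cosmetic: the paper roots $\T_G$ at the lexicographically greatest maximal clique, not the least.)
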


\begin{Thm}\label{thm:strict}
There is a procedure (Algorithm~\ref{ALGO:strict}) for listing all the 
maximal cliques of any given $n$-vertex graph $G=(V,E)$, without repetitions, and with the following time delay:
\[\tau_{\texttt{delay}}= O\big(n^{2\omega(1,1,1/2)-2}\big) = O\big(n^{2.093362}\big),\] 
which is a worst-case upper bound on the time spent between the outputting of any two consecutive maximal cliques.
For this, the procedure firstly performs a bootstrapping phase, 
whose worst-case time complexity is bounded as follows:
\[\tau_{\texttt{boot}} = O(n^{\omega+3})=O\big(n^{5.3728639}\big).\]
Moreover, the procedure employs  
$O(n^{\omega - 2\omega(1,1,1/2) + 6}) = O(n^{4.2796})$ 
space.
\end{Thm}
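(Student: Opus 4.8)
The plan is to obtain Algorithm~\ref{ALGO:strict} from Algorithm~\ref{ALGO:main} of Theorem~\ref{thm:firstx} by a \emph{delay-equalization} wrapper: a FIFO queue $Q$ feeding the output, a throttled step-by-step simulation of Algorithm~\ref{ALGO:main} that pushes each clique it produces onto $Q$, and an initial bootstrapping phase. Write $\tau:=2\omega(1,1,1/2)-2$, so that Theorem~\ref{thm:firstx} supplies constants $c_1,c_2>0$ with the property that Algorithm~\ref{ALGO:main} lists all maximal cliques of $G$ without repetition and outputs its first $x$ of them within $c_1 n^{\omega+3}+c_2\,x\,n^{\tau}$ steps while using $O(n^4)$ workspace. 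Recall, moreover, that Algorithm~\ref{ALGO:main} makes progress in batches, one rectangular matrix multiplication per batch, where a batch costs $\Theta(n^{2\omega(1,1,1/2)})=\Theta(n^{\tau+2})$ steps and causes at most $O(n^{2})$ maximal cliques to be output; in particular, if $P(s)$ denotes the number of cliques it has output after $s$ simulated steps, then $P(s)=O(s/n^{\tau})$. Algorithm~\ref{ALGO:strict} now does the following. In the bootstrapping phase it simulates Algorithm~\ref{ALGO:main} for $t_0:=\lceil c_1 n^{\omega+3}\rceil+c_2 n^{\tau}$ steps, appending to $Q$ every clique that gets output. In the steady-state phase it repeats: dequeue one clique from $Q$ and output it, then resume the simulation for exactly $c_2 n^{\tau}$ further steps, appending to $Q$ any clique produced in the meantime; it halts once Algorithm~\ref{ALGO:main} has terminated and $Q$ is empty. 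Each steady-state iteration costs $c_2 n^{\tau}$ plus $O(n)$ for the dequeue and for writing out a clique, \ie $\Theta(n^{\tau})$ (note $\tau\ge 2$), so the bootstrapping time is $\tau_{\texttt{boot}}=t_0+O(n)=O(n^{\omega+3})$ and the time delay will be $O(n^{\tau})$ once we verify that $Q$ is never empty at a dequeue.

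For the delay I would prove the \emph{no-starvation} invariant by inverting the first-$x$ bound. Namely, $P(s)\ge (s-c_1 n^{\omega+3})/(c_2 n^{\tau})$ as long as the right-hand side is nonnegative and Algorithm~\ref{ALGO:main} has not halted. At the start of the $k$-th steady-state iteration the simulation has run for $s_k:=t_0+(k-1)c_2 n^{\tau}$ steps while only $k-1$ cliques have been dequeued, and $s_k-c_1 n^{\omega+3}\ge c_2 n^{\tau}+(k-1)c_2 n^{\tau}=k\,c_2 n^{\tau}$, whence $P(s_k)\ge k$; thus $Q$ still holds an un-output clique, unless Algorithm~\ref{ALGO:main} has already terminated, in which case all maximal cliques have been produced and nothing remains to list. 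So every dequeue succeeds, consecutive outputs are exactly $\Theta(n^{\tau})$ apart, and $\tau_{\texttt{delay}}=O(n^{2\omega(1,1,1/2)-2})$.

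For the workspace — where the exponent $\omega-2\omega(1,1,1/2)+6$ shows up — observe that, besides the $O(n^4)$ used internally by Algorithm~\ref{ALGO:main}, the only extra storage is $Q$, which at the $k$-th steady-state iteration holds $P(s_k)-(k-1)$ cliques, each of size $O(n)$. Choosing the constant $c_2$ of the wrapper to equal (a constant times) Algorithm~\ref{ALGO:main}'s amortized running time per clique, the bound $P(s)=O(s/n^{\tau})$ together with $s_k=t_0+(k-1)c_2 n^{\tau}$ yields $P(s_k)-(k-1)=O(t_0/n^{\tau})+O(n^{2})=O(n^{\omega+3-\tau})$, where the $O(n^{2})$ absorbs the at most one batch that can complete within the last $c_2 n^{\tau}$-step window (and we use $\omega+3-\tau>2$). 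Multiplying by the $O(n)$ size of a clique, $Q$ occupies $O(n^{\omega+4-\tau})=O(n^{\omega-2\omega(1,1,1/2)+6})=O(n^{4.2796})$ space, which dominates $O(n^4)$.

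The crux is precisely this interplay between the two invariants, and it is where I expect the real work. If the simulation were run faster than one batch per $\Theta(n^{\tau})$ steps — or not throttled at all — then $Q$ could grow well beyond the $O(n^{\omega+3-\tau})$ cliques the bootstrapping phase deposits, potentially up to a number exponential in $n$; whereas if it were run any slower than Algorithm~\ref{ALGO:main}'s amortized rate of $\Theta(n^{\tau})$ per clique, $Q$ would eventually starve and the delay bound would fail. Fixing the emission period to the amortized per-clique cost $c_2 n^{\tau}$ and the bootstrap length to $t_0=\Theta(n^{\omega+3})$ is exactly what makes the no-starvation invariant and the queue-size bound hold at the same time; the part demanding genuine care is to certify this compatibility, \ie that the first-$x$ estimate of Theorem~\ref{thm:firstx} is tight enough when read simultaneously as a lower bound on progress (via no-starvation) and — through the $\Theta(n^{\tau+2})$-per-batch, $O(n^2)$-cliques-per-batch structure of Algorithm~\ref{ALGO:main} — as an upper bound on its rate of progress, so that $Q$ neither empties nor overflows.
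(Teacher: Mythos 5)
Your wrapper has the same architecture as the paper's Algorithm~\ref{ALGO:strict} (a FIFO queue fed by a step-by-step simulation of Algorithm~\ref{ALGO:main}, an $O(n^{\omega+3})$ bootstrap, one output per $\Theta(n^{2\omega(1,1,1/2)-2})$ simulated steps), and your no-starvation argument --- inverting the first-$x$ bound of Theorem~\ref{thm:firstx} to get $P(s_k)\ge k$ --- is sound and in fact tidier than the paper's case analysis on full versus partial batches. The bootstrap and delay bounds go through.

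The space bound, however, has a genuine gap, and it sits exactly where you say you ``expect the real work''. Your bound on $|Q|$ needs $P(s)=O(s/n^{\tau})$ with the implied constant matching the dequeue rate, i.e.\ it needs every output clique to cost Algorithm~\ref{ALGO:main} at least $\Omega(n^{\tau})=\Omega(n^{2.093})$ amortized steps, equivalently every full batch to cost $\Omega(n^{\tau+2})=\Omega(n^{2\omega(1,1,1/2)})$. Nothing established in the paper gives such a lower bound: Theorem~\ref{thm:firstx} and Proposition~\ref{prop:rectangular} are upper bounds on time (hence \emph{lower} bounds on progress), and the only provable lower bound on the cost of a full-batch iteration is the $\Omega(n^{4})$ needed to write the $n^2\times n^2$ matrix $M_{\B,G}$, which yields only $\Omega(n^{2})$ per clique --- strictly below your dequeue period. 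Since $\omega(1,1,1/2)$ is an infimum, you cannot assume the concrete rectangular-multiplication routine runs in time $\Theta(n^{2\omega(1,1,1/2)})$ rather than, say, $\Theta(n^{4})$; in the latter case production outpaces your fixed dequeue rate by a factor $n^{\tau-2}$ per window, and over the up to $3^{n/3}$ maximal cliques the queue grows to exponential size, destroying the $O(n^{4.2796})$ claim. The ``compatibility'' you flag cannot be certified as stated. The paper sidesteps this entirely by adding a second dequeue trigger at line~11 of Algorithm~\ref{ALGO:strict}: a clique is also emitted whenever $|Q|>T+n^2$, which caps the queue size structurally without any lower bound on the simulation's speed, and which cannot cause starvation since it only fires when $Q$ is far from empty. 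Adding this overflow valve (and checking it is harmless for the delay invariant) is the missing ingredient in your proposal.
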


%In that manner, the procedure mentioned in Theorem~\ref{thm:strict} solves the 
%{\mainproblem} problem within the following total time: 
%\begin{align*} 
%	\tau_{\texttt{total}} & = O(\tau_{\texttt{boot}} + N\tau_{\texttt{delay}}) \\
%		 	      & = O(n^{\omega+3} + N n^{2\omega(1,1,1/2)-2}) \\ 
%                              & = O(n^{5.3728639} + N n^{2.093362}),
%\end{align*} 
%where $N$ denotes the total number of maximal cliques in $G$.

In passing, we shall introduce a backtracking technique, named Batch Depth-First Search (Batch-DFS), 
whose aim is to keep the search of maximal cliques going on, 
solving one batch of problems after another, 
consuming only polynomial space overall. An in-depth time and space analysis of Batch-DFS is offered, as 
we believe that it may be of general interest for applying a similar approach to some other 
listing problems that admit polynomial time delay algorithms.

\paragraph{Organization.}
The rest of this paper is organized as follows.
In Section\ref{sect:background}, some background notation is 
introduced in order to support the subsequent sections.
Section~\ref{sect:RS-Tree} recalls some major aspects of 
Tsukiyama~\etal, Johnson~\etal, and Makino Uno's solutions; 
in particular, the construction of the RS-tree $\T_G$ is recalled and revised, 
as this is actually the enumeration tree of all the maximal cliques that we aim to list.
In Section~\ref{sect:reduction}, we describe our reduction from the problem of 
computing all children of any batch of $n^2$ nodes of $\T_G$ to that of performing rectangular matrix products.
The Batch-DFS backtracking is introduced and analyzed in Section~\ref{sect:backtracking}.
Our Maximal Clique Listing algorithms are offered in Section~\ref{sect:algorithm}.
Finally, Section~\ref{sect:conclusion} closes the paper.

\section{Background and Notation}\label{sect:background}

\begin{figure}[]
\center
\begin{tikzpicture}
  \begin{scope}
    \foreach \a in { 18, 90, 162, 234, 306 } {
      \foreach \b in { 18, 90, 162, 234, 306 } {
        \drawPolarLinewithBG{\a:2}{\b:2};
      }
    }
    \drawLinewithBG{0,2}{0,4};
    \drawLinewithBG{1.9,0.7}{1.9,3};
    \drawLinewithBG{-1.9,0.7}{-1.9,3};
 
    \drawLinewithBG{-1.9,3}{1.9,3};
    \drawLinewithBG{-1.9,3}{0,4};
    \drawLinewithBG{0,4}{1.9,3};
    
    \node[label={$6$}] at (0, 4) [circle,fill=black] {};
    \node[label={$7$}] at (1.9, 3) [circle,fill=black] {};    
    \node[label={$8$}] at (-1.9, 3) [circle,fill=black] {};    

    \node[label={right:$2$}] at (19:2cm) [circle,fill=black] {};
    \node[label={above left:$1$}] at (90:2cm) [circle,fill=black] {};
    \node[label={left:$5$}] at (162:2cm) [circle,fill=black] {};
    \node[label={left:$4$}] at (234:2cm) [circle,fill=black] {};
    \node[label={right:$3$}] at (306:2cm) [circle,fill=black] {};
%    \node at (0,-2.5) {\Large$\K_5 + \K_3$};
  \end{scope}
\end{tikzpicture}
\caption{An example graph obtained by gluing together the complete graphs $\K_5$ and $\K_3$. 
The corresponding maximal cliques are $K_0=\{1, 2, 3, 4, 5\}$, $K_1=\{1, 6\}$, 
$K_2=\{2, 7\}$, $K_3=\{5, 8\}$, $K_4=\{6, 7, 8\}$. 
Here, $1<2<3<4<5<6<7<8$ and $K_0>_{\text{lex}} K_1 >_{\text{lex}} 
K_2 >_{\text{lex}} K_3 >_{\text{lex}} K_4$.}\label{fig:k5}
\end{figure}
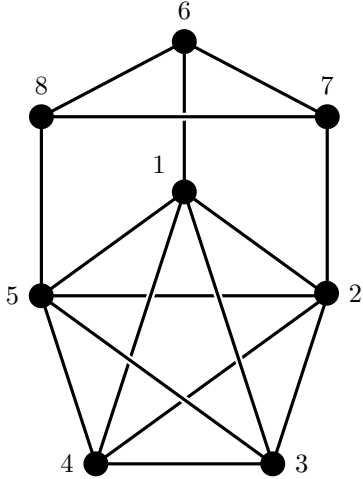

To begin with, our graphs are \emph{undirected} and \emph{simple}, \ie they have no self-loops nor parallel-edges. 
%Since we consider maximal clique generation problems, 
%we shall assume without loss of generality that our graphs are connected. 
Let $[n]=\{1, \ldots, n\}$ for every $n\in\N$. Let $G = (V,E)$ be a graph with  
vertex set $V = [n]$ and edge set $E = \{e_1,\ldots,e_m\}$. Here, $|E|=m$ and $|V|=n$.
Moreover, for any vertex subset $S \subseteq V$, let $x(S)$ be the \emph{characteristic vector} of $S$, i.e., for every $i\in [n]$ 
the $i$-th coordinate of $x(S)$ is $1$ if $i \in S$, and it is $0$ otherwise.
For any vertex $v\in V$ of $G$, let $\Gamma(v)=\{u\in V \mid \{u,v\}\in E\}$ be the \emph{neighbourhood} of $v$, 
and let $\delta(v)=|\Gamma(v)|$ be the degree of $v$. Let $\Delta=\max_{v\in V}\delta(v)$ be the maximum degree of $G$. 
%Similarly, for a vertex set $S$, 
%let $\Gamma(S)=\{u\in V \mid (u,v)\in E \text{ for some } v\in S\}$, 
%and $\Gamma(S)$ is called the neighbor of $S$. 
%Let $\Lambda(S)$ be the set of all $v \in V \setminus S$ 
%such that $(v,u)\in E$ for any $u\in S$. 
For any vertex subset $S\subseteq V$ and any index $i\in [n]$, define $S_{\leq i} = S\cap [i]$ 
and $S_{< i} = S\cap [i-1]$ (where $S_{<1}=\emptyset$). For any two vertex sets $X$ and $Y$, 
we say that $X$ is \emph{lexicographically greater} than $Y$, denoted $X>_{\text{lex}}Y$, 
if the smallest vertex (\ie the smallest natural number $i$) in the symmetric difference 
$(X\setminus Y) \cup (Y \setminus X)$ is contained in $X$. The usual common ordering on $\N$ is denoted $<$ 
(\ie without the subscript \texttt{lex}). A \emph{clique} is any subset $K$ 
of the vertex set $V$ such that any two vertices in $K$ are adjacent. 
A clique is \emph{maximal} when it is not a subset of any larger clique. For any clique $K$ (not necessarily a maximal one), 
let $\closure(K)$ be the \emph{lexicographic completion} of the clique $K$, namely, 
the lexicographically greatest among all the maximal cliques containing $K$. 
It is clear from its definition that $\closure(K)$ is not lexicographically smaller than $K$. 
To conclude this section, let $K_0$ be the maximal clique which is the lexicographically greatest among all the maximal cliques of $G$. 
Notice $K_0=\closure(\emptyset)$. This notation is exemplified in \figref{fig:k5}, where a running example for this paper is proposed.

\section{The RS-Tree of Maximal Cliques}\label{sect:RS-Tree}
This section recalls some major aspects of the previous algorithms for {\mainproblem}, 
which were devised by Tsukiyama~\etal, Johnson~\etal, Makino and Uno, as these comprise the backstage and the backbone of our present solution.
In particular, this section reworks the construction of the \emph{Reverse Search Tree} (\emph{RS-tree}) $\T_G$ for enumerating 
all the maximal cliques of any given graph $G$. This is done by studying the corresponding parent-child relations.
In the original paper of Makino and Uno, all proofs about the characterization of $\T_G$ were omitted due to space restrictions. 
In the present work full proofs are presented for the sake of completeness.
Indeed, offering a simple and self-contained exposition of what in~\cite{MU04} was one of our purposes.
In cleaning out the arguments, and to help the understanding of the reader,
we opted for restructuring also the statements and the network of their relations. 
To begin with, let us observe some introductory properties. 
\begin{Prop}\label{prop:closureproperties} 
Let $K$ and $K'$ be two cliques of $G=(V,E)$. If $K\subseteq K'$ then $\closure(K)\geq_{\text{lex}}\closure (K')$.
\end{Prop}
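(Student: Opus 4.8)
The plan is to unwind the definition of the lexicographic completion $\closure(\cdot)$ and exploit monotonicity of set inclusion on the family of maximal cliques over which the lexicographic maximum is taken. Recall that for a clique $K$, $\closure(K)$ is the lexicographically greatest element of the set
\[
\mathcal{M}(K) = \{\, M \suchthat M \text{ is a maximal clique of } G,\ K \subseteq M \,\}.
\]
First I would remark that this set is nonempty (every clique is contained in some maximal clique, by greedily adding vertices) and finite (since $G$ is finite), so the lexicographic maximum is well-defined; this is implicitly assumed by the excerpt, but it is worth keeping in mind that $\closure(\cdot)$ makes sense.

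The core observation is that $K \subseteq K'$ implies $\mathcal{M}(K') \subseteq \mathcal{M}(K)$: any maximal clique $M$ that contains $K'$ certainly contains $K$, and maximality is a property of $M$ alone. In particular, $\closure(K') \in \mathcal{M}(K')\subseteq \mathcal{M}(K)$, so $\closure(K')$ is one of the candidates among which $\closure(K)$ is chosen to be the lexicographic maximum. Hence $\closure(K) \geq_{\text{lex}} \closure(K')$, which is exactly the claim.

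I do not anticipate any real obstacle here — the statement is essentially a one-line consequence of "maximum over a larger set is at least maximum over a subset." The only points requiring a word of care are (i) confirming that $\mathcal{M}(K)$ is finite and nonempty so that the maximum is attained, and (ii) checking that the inclusion $\mathcal{M}(K') \subseteq \mathcal{M}(K)$ uses nothing beyond transitivity of $\subseteq$ together with the fact that the defining condition ``$M$ is a maximal clique'' does not depend on $K$ or $K'$. Both are immediate, so the write-up will be short; if anything, the main value of the proposition is that it will later be invoked repeatedly when reasoning about the parent-child relation in $\T_G$.
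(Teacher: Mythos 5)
Your argument is correct and coincides with the paper's own one-line proof: both observe that $\closure(K')$ is a maximal clique containing $K$ (since $K\subseteq K'\subseteq\closure(K')$) and then invoke the definition of $\closure(K)$ as the lexicographically greatest such clique. The extra remarks on well-definedness are fine but not needed beyond what the paper already assumes.
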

\begin{proof}
Notice that $K\subseteq K'\subseteq \closure(K')$ and recall
that $\closure(K)$ is the lexicographically greatest 
maximal clique containing $K$.
\end{proof}

We proceed by observing a simple characterization of $\closure(\cdot)$.
\begin{Prop}\label{prop:char_closure} 
Let $K$ be a clique of $G=(V,E)$. For any $v\in [n]$, precisely one of the following two must occur:
\begin{enumerate}
\item $v\in\closure(K)$; 
\item there exists $z\in K\cup \big([v-1]\cap\closure(K)\big)$ such that $v\not\in\Gamma(z)$. 
\end{enumerate}
\end{Prop}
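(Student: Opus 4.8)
The plan is to set $C := \closure(K)$ (recall that $K \subseteq C$ and that $C$ is a clique) and to split the ``precisely one'' assertion into two halves: (a) items (1) and (2) cannot hold together, and (b) at least one of them holds.

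For part (a) I would assume $v \in C$ and take an arbitrary $z \in K \cup \big([v-1]\cap C\big)$. Since $K \subseteq \closure(K) = C$ and $[v-1]\cap C \subseteq C$, such a $z$ lies in the clique $C$ alongside $v$; if $z \ne v$ then $\{v,z\}\in E$, i.e. $v \in \Gamma(z)$, whereas $z = v$ can occur only if $v \in K$, a degenerate case one discards since $v \notin \Gamma(v)$ ($G$ has no self-loops). Hence no admissible $z$ witnesses item (2), so (1) forbids (2).

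For part (b) I would argue by contradiction: suppose $v \notin C$ and, at the same time, $v$ is adjacent to every vertex of $K \cup \big([v-1]\cap C\big)$. Then $M_0 := \{v\} \cup K \cup \big([v-1]\cap C\big)$ is a clique — its part other than $v$ is a subset of the clique $C$, and $v$ is adjacent to all of it by assumption — so $M_0$ extends to a maximal clique $M \supseteq M_0 \supseteq K$. The crux is then to show $M >_{\text{lex}} C$, contradicting the fact that $C = \closure(K)$ is the lexicographically greatest maximal clique containing $K$. For this I would consider the least element $w$ of the symmetric difference $(M\setminus C)\cup(C\setminus M)$ — nonempty because $v \in M \setminus C$, so $w$ exists and $w \le v$ — and observe that every vertex of $C$ that is $< v$ already lies in $[v-1]\cap C \subseteq M_0 \subseteq M$, hence is common to $M$ and $C$ and is not in the symmetric difference; therefore $w$, being $\le v$, must lie in $M \setminus C$, which is exactly the definition of $M >_{\text{lex}} C$. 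Combining (a) and (b) gives the claim.

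The main obstacle is precisely this lexicographic comparison in part (b): one has to notice that confining the ``blocking'' vertex $z$ to $K \cup \big([v-1]\cap C\big)$, rather than to all of $C$, is what forces the freshly built maximal clique $M$ to beat $C$ in lexicographic order; merely invoking set-maximality of $C$ from $v \notin C$ would only produce a non-neighbour of $v$ somewhere in $C$, possibly with index exceeding $v$, which would be too weak. Everything else — that $M_0$ is a clique, that a maximal extension $M$ exists, and that $\closure$ enjoys the properties recalled earlier — is routine.
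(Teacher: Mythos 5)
Your proof is correct and rests on the same underlying idea as the paper's --- reading off membership in $\closure(K)$ from its definition as the lexicographically greatest maximal clique containing $K$ --- but you actually supply the argument that the paper's one-line proof only asserts: the construction of the competing maximal clique $M \supseteq \{v\}\cup K\cup\big([v-1]\cap\closure(K)\big)$ and the verification that its least symmetric-difference element with $\closure(K)$ falls in $M$, giving $M>_{\text{lex}}\closure(K)$, is precisely the content left implicit there. The only loose end is the degenerate case $v\in K$, where $z=v$ literally witnesses item (2) even though item (1) holds; this is a quirk of the statement itself rather than of your argument, and you rightly flag and set it aside.
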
 
\begin{proof} 
It is sufficient to show that 
$v\not\in\closure(K)$ if and only if (2) holds on $v$.
Since $\closure(K)$ is the lexicographically greatest maximal clique containing $K$, 
then $v\not\in\closure(K)$ if and only if at least one of the following two conditions hold:
either $v$ is not adjact to all vertices in $K$ (\ie there exists $z\in K$ such that 
$v\not\in\Gamma(z)$), or $v$ is not adjacent to some $z\in\closure(K)$ which is smaller than $v$ 
(\ie there exists $z\in [v-1]\cap\closure(K)$ such that $v\not\in\Gamma(z)$).
For this reason, $v\not\in\closure(K)$ if and only if there exists   
$z\in K\cup \big([v-1]\cap\closure(K)\big)$ such that $v\not\in\Gamma(z)$.
\end{proof}

The next proposition shows that $\closure(\cdot)$ is computable within $O(m)$ time.
\begin{Prop}\label{compute-C}
Let $K$ be a clique of any given graph $G=(V,E)$, where $|V|=n$ and $|E|=m$. 
The lexicographical completion $\closure(K)$ is computable within $O(\text{min}\{m, \Delta^2\})$ time. 
\end{Prop}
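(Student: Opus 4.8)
The plan is to show that the obvious greedy construction outputs $\closure(K)$ and then to give two implementations of it, one running in $O(m)$ time and one in $O(\Delta^2)$ time; selecting the faster of the two for the input graph yields the claimed $O(\min\{m,\Delta^2\})$ bound. The greedy construction keeps a working clique $C$, initialised to $C\leftarrow K$, and scans the vertices $v=1,2,\dots,n$ in increasing order, inserting $v$ into $C$ whenever $v\notin K$ and $v$ is adjacent to every vertex currently held in $C$; the answer is the final value of $C$.

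For correctness I would argue by induction on $v$ that, after the vertices $1,\dots,v$ have been processed, $C\cap[v]=\closure(K)\cap[v]$. Throughout the run one has $K\subseteq C$ (the set only grows from $K$) and $K\subseteq\closure(K)$, so by the inductive hypothesis the members of $C$ at the moment $v$ is examined are precisely $K\cup\big([v-1]\cap\closure(K)\big)$; hence $v$ gets inserted if and only if $v$ is adjacent to every vertex of $K\cup\big([v-1]\cap\closure(K)\big)$, which by Proposition~\ref{prop:char_closure} is exactly the condition $v\in\closure(K)$. This gives $C=\closure(K)$ at termination, consistently with $\closure(K)\geq_{\text{lex}}K$.

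Since the naive adjacency tests cost $O(n)$ apiece, i.e.\ $O(n^2)$ overall, the point of the first implementation is to avoid rescanning $C$. I would keep, for every vertex $u$, a counter $g[u]$ equal to the number of vertices of the current $C$ adjacent to $u$; then ``$u$ is adjacent to all of $C$'' becomes the $O(1)$ test $g[u]=|C|$ (for $u\notin C$). Initialising $g$ for $C=K$ costs $\sum_{w\in K}\delta(w)$, and each insertion of a vertex $w$ into $C$ costs $\delta(w)$ to refresh $g$ along $\Gamma(w)$; as the vertices ever held in $C$ are exactly those of $\closure(K)$, the total refresh cost telescopes to $O\!\big(\sum_{w\in\closure(K)}\delta(w)\big)=O\!\big(\sum_{w\in V}\delta(w)\big)=O(m)$, the scan itself being $O(n)$ thanks to the $O(1)$ tests (the $O(n)$ for allocating and clearing $g$ is absorbed into $O(m)$ when $G$ has no isolated vertex, or is removed altogether by touching and later restoring only the $O(m)$ relevant cells).

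For the $O(\Delta^2)$ implementation I would first shrink the vertex set: if $K\neq\emptyset$ then every maximal clique containing $K$ lies inside $\{w\}\cup\Gamma(w)$ for any fixed $w\in K$, while $\closure(\emptyset)=K_0$ always contains the vertex $1$ (any maximal clique missing $1$ is lexicographically smaller than one containing it) and so lies inside $\{1\}\cup\Gamma(1)$; either way $\closure(K)$ is confined to a set $N$ with $|N|\leq\Delta+1$. Running the greedy over the vertices of $N$ only is still correct, because no vertex outside $N$ can belong to $\closure(K)$, and $G[N]$ has at most $\binom{\Delta+1}{2}=O(\Delta^2)$ edges; after extracting these edges in $O(\Delta^2)$ time — scanning the lists $\Gamma(u)$ for $u\in N$ and filtering through a membership flag that is set and cleared only on the $O(\Delta)$ vertices of $N$ — the greedy performs at most $|N|\cdot|\closure(K)|=O(\Delta^2)$ constant-time adjacency tests. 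The computation is conceptually routine given Proposition~\ref{prop:char_closure}; the only real content is these two accountings — that the counter refreshes telescope to $\sum_{w\in\closure(K)}\delta(w)\le 2m$, and that confining the search to a single closed neighbourhood caps every quantity by $\Delta$ — together with the standard allocate-once-and-clean-up device that prevents the auxiliary arrays from silently reintroducing an $O(n)$ term, which I expect to be the only delicate point.
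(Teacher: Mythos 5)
Your proposal is correct and follows essentially the same route as the paper: the same increasing-order greedy whose correctness is exactly Proposition~\ref{prop:char_closure}, with the $\Delta^2$ bound obtained by confining the search to the closed neighbourhood of a vertex of $K$. The only packaging difference is that the paper's Algorithm~\ref{ALGO:compute-C} gets both bounds from a single pass --- it scans only $\Gamma(\hat v)$ for some $\hat v\in K$ and tests each candidate $u$ in $O(\delta(u))$ time, so the cost $\sum_{u\in\Gamma(\hat v)}\delta(u)$ is simultaneously $O(m)$ and $O(\Delta^2)$ --- whereas you run two implementations and take the minimum; you also explicitly patch the $K=\emptyset$ case (via vertex $1\in K_0$), which the paper's pseudocode leaves undefined.
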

\begin{proof} 
Consider Algorithm~\ref{ALGO:compute-C}. It takes as input a clique $K$ of $G$.
Moreover, it employs the subprocedure \texttt{is-complete()} in order to test, 
on input $(u,X)$ for some $u\in V$ and $X\subseteq V$, whether $\{u,x\}\in E$ for every $x\in X$. 
This check can be done in $O(\delta(u))$ time. So, Algorithm~\ref{ALGO:compute-C} works as follows.
Firstly, an auxiliary set $S$ gets initialized as $S=K$ at line~1. Soon after, a vertex $\hat{v}\in K$ is picked up arbitrarily at line~2.
In the rest of the algorithm the auxiliary set $S$ will be incremented. 
The rationale, here, is that every node taking part to this augmentation must be among the neighbours of $\hat{v}$.
In fact, at line~5, Algorithm~\ref{ALGO:compute-C} augments $S$ with vertex $u$ if and only if $u$ is 
the lexicographically greatest vertex (\ie the smallest natural number) which lies in  
$\closure(K)\setminus S$. At the end, $S$ is returned at line~6. Let $\hat{S}$ be the set outputted by Algorithm~1.
Notice that, for every $v\in [n]$, precisely one of the following conditions hold: 
either $v\in\hat{S}$ or there exists $z\in K\cup([v-1]\cap\hat{S})$ 
such that $v\not\in\Gamma(z)$. Thus, by Proposition~\ref{prop:char_closure}, $\hat{S}=\closure(K)$.
Of course, Algorithm~\ref{ALGO:compute-C} halts within time 
$O\big(\sum_{u_i\in\Gamma(v)} \delta(u_i)\big) = O\big(\text{min}\{m, \Delta^2\}\big)$. 

\smallskip
\begin{algorithm}[H]\label{ALGO:compute-C}
\caption{Computing the Lexicographical Completion $\closure(\cdot)$.}
\DontPrintSemicolon
\nonl \SetKwProg{Fn}{Procedure}{}{}
\Fn{$\texttt{compute-C}(K, G)$}{
\KwIn{A clique $K$ of $G=(V,E)$.}
\KwOut{The lexicographical maximal clique $\closure(K)$ containing $K$.}
$S\leftarrow K$;\tcp{initialize the set $S$ to $K$} 
$\hat{v}\leftarrow$ pick any vertex $v\in K$; \;
\ForEach{$u\in \Gamma(\hat{v})$ (in ascending order w.r.t $\N$)}{ 
 \If{ $u\not\in K$ \textbf{ and } \texttt{is-complete}($u$, $S$) = \textbf{true} }{
 $S\leftarrow S\cup\{ u \}$; \;
}
}
\Return{$S$}; \;
}
\end{algorithm} 
\smallskip

Here above, the pseudocode of Algorithm~\ref{ALGO:compute-C} closes the proof.
\end{proof}

Given any $n$-vertex graph $G=(V,E)$, for any maximal clique $C$ ($\neq K_0$) 
there exists at least one index (\ie one vertex) $i\in [n]$ such that $\closure(C_{< i})\neq C$.
Indeed, $\closure(C_{\leq 0})=K_0\neq C$. 
In virtue of this fact it makes sense to define the 
\emph{parent} of $C$ as $\parent(C)=\closure(C_{< i})$,  
provided that $i\in [n]$ is the greatest index satisfying $\closure(C_{< i})\neq C$. 
Such an index $i$ is called the \emph{index of $C$}, and it is also denoted by $i(C)$. 
As mentioned, these indices are well defined. Moreover, notice $\parent(C)>_{\text{lex}}C$, 
\ie the parent $\parent(C)$ of any maximal clique $C$ ($\neq K_0$) is 
not lexicographically smaller than $C$. 
This implies that the corresponding parent-child binary relation is acyclic and creates an in-tree, 
denoted $\T_G$, which is directed towards its root $K_0$. 
We say that $\T_G$ is the RS-tree of $G$.
Of course, the nodes of $\T_G$ corresponds to the maximal cliques of $G$ that we aim to list. 

\figref{fig:tree} depicts the RS-Tree $\T_G$ associated to the running example graph of \figref{fig:k5}.
Every node of $\T_G$ depicts a maximal clique of $G$ and its corresponding index.

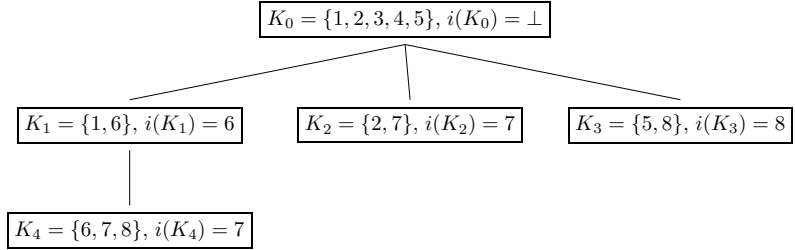
\begin{figure}[!htb]
    \centering
    \begin{tikzpicture}[scale=.8, level distance=50pt,sibling distance=15pt]
    \Tree [. \framebox{$K_0=\{1, 2, 3, 4, 5\}$, $i(K_0)=\bot$}
    	\edge node[left, xshift=0ex]{};
   	 [. \framebox{$K_1=\{1,6\}$, $i(K_1)=6$}
		\edge node[left, xshift=-1ex]{};
   	 	[. \framebox{$K_4=\{6,7,8\}$, $i(K_4)=7$}
  	 	 ]
  	  ]
    	\edge node[xshift=0ex]{};
   	 [. \framebox{$K_2=\{2,7\}$, $i(K_2)=7$}
  	  ]
 	   \edge node[right, xshift=0ex]{};
	    [. \framebox{$K_3=\{5,8\}$, $i(K_3)=8$}
	    ]
    	]
    \end{tikzpicture}
    \caption{The RS-Tree $\T_G$ corresponding to the running example graph $G$ of \figref{fig:k5}.}
\label{fig:tree}
\end{figure}

At this point, we shall provide an effective algorithm for computing $\parent(\cdot)$. 
Indeed, it is not difficult to see that $\parent(C)$ is computable from $C$ in linear $O(m+n)$ time. 
Here below, Proposition~\ref{compute-i} shows how to compute the index $i(\cdot)$, 
while Proposition~\ref{compute-P} finally provides an $O(m+n)$ time algorithm for computing the parent relation $\parent(\cdot)$. 
\begin{Prop}\label{compute-i} 
Let $C$ be a maximal clique of $G=(V,E)$, where $|V|=n$ and $|E|=m$. 

The index $i(C)$ %(\ie the greatest natural number 
%$i\in [n]$ such that $\closure(C_{< i})\neq C$) 
is computable within $O(m+n)$ time.
\end{Prop}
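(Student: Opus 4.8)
The plan is to avoid the naive route --- recomputing $\closure(C_{<i})$ from scratch for every $i$, which by Proposition~\ref{compute-C} costs $\Theta(mn)$ in total --- and instead to extract a purely combinatorial closed form for $i(C)$ that is evaluable in a single linear scan. For a vertex $v\notin C$, maximality of $C$ forces $C\not\subseteq\Gamma(v)$, so
\[\mu(v):=\min\big(C\setminus\Gamma(v)\big)\]
is well defined: it is the least vertex of $C$ that is \emph{not} a neighbour of $v$. I would first establish the identity
\[i(C)=\max\{\,\mu(v)\mid v\in[n]\setminus C,\ \mu(v)\geq v\,\},\]
with the understanding that one returns $\bot$ (the root case $C=K_0$) exactly when the set on the right is empty.

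To prove this, I would pin down when $\closure(C_{<i})=C$. By Proposition~\ref{prop:closureproperties}, $i\mapsto\closure(C_{<i})$ is lexicographically non-increasing (since $C_{<i}\subseteq C_{<i'}$ for $i\leq i'$) and always $\geq_{\text{lex}}C$ (as $C_{<i}\subseteq C$ and $C$ is maximal), so $\{\,i\mid\closure(C_{<i})\neq C\,\}$ is a down-closed interval of $[n]$, nonempty iff $C\neq K_0$ because $\closure(C_{<1})=\closure(\emptyset)=K_0$; its top element, when it exists, is $i(C)$. Now apply Proposition~\ref{prop:char_closure} with $K=C_{<i}$: assuming $\closure(C_{<i})=C$, every $v\notin C$ satisfies condition~(2), i.e.\ there is $z\in\big(C\cap[i-1]\big)\cup\big([v-1]\cap C\big)=C\cap[\max(i,v)-1]$ with $v\notin\Gamma(z)$, which says exactly $\mu(v)<\max(i,v)$. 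For the converse I would argue by contradiction: if $\closure(C_{<i})\neq C$, let $v^*$ be the least element of the symmetric difference $\closure(C_{<i})\triangle C$; then $v^*\in\closure(C_{<i})\setminus C$ and $\closure(C_{<i})\cap[v^*-1]=C\cap[v^*-1]$, and a non-neighbour $z\in C$ of $v^*$ with $z<\max(i,v^*)$ would lie either in $C\cap[v^*-1]\subseteq\closure(C_{<i})$ or in $C\cap[i-1]\subseteq\closure(C_{<i})$, contradicting that $\closure(C_{<i})$ is a clique containing $v^*$. Hence $\closure(C_{<i})=C$ iff $\mu(v)<\max(i,v)$ for all $v\notin C$; taking complements, $i$ lies in the interval iff some $v\notin C$ has $\mu(v)\geq v$ and $\mu(v)\geq i$, and maximising $i$ gives the stated formula (note $\mu(v)\in C\subseteq[n]$, so $i(C)\in[n]$).

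It remains to evaluate the formula in $O(m+n)$ time, which amounts to computing $\mu(v)$ --- together with the test $\mu(v)\geq v$ --- for every $v\notin C$. List $C=\{c_1<c_2<\cdots<c_k\}$ and record the rank map $c_t\mapsto t$ and the prefix counts $p(v):=|C\cap[v-1]|$ (all in $O(n)$). Then, processing $t=1,\ldots,k$ in increasing order and scanning the adjacency list $\Gamma(c_t)$, append $c_t$ to a list $L_u$ for each $u\in\Gamma(c_t)$; because the $c_t$ are handled in increasing order, each $L_v$ comes out sorted in increasing order and equal to $\Gamma(v)\cap C$, at a total cost of $O\big(n+\sum_{c\in C}\delta(c)\big)=O(m+n)$. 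For $v\notin C$, let $r(v)$ be the length of the longest common prefix of $L_v$ and the sequence $c_1,c_2,\ldots$ --- obtained by walking $L_v$ until the first index $s$ with $\mathrm{rank}(L_v[s])\neq s$. Since $L_v\subseteq C$ is sorted, this $r(v)$ equals $\max\{\,t\mid\{c_1,\ldots,c_t\}\subseteq\Gamma(v)\,\}$, so $\mu(v)=c_{r(v)+1}$ (with $r(v)<k$, for otherwise $C\subseteq\Gamma(v)$ contradicts maximality), and $\mu(v)\geq v$ iff $r(v)\geq p(v)$. Returning the maximum of $c_{r(v)+1}$ over all $v\notin C$ with $r(v)\geq p(v)$ --- or $\bot$ if no such $v$ exists --- then yields $i(C)$, and every phase is a single linear pass, for $O(m+n)$ overall.

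The main obstacle is precisely this last step: a face-value computation of all the $\mu(v)$ (or of the whole map $i\mapsto\closure(C_{<i})$) is quadratic, and the point is that the bucketed construction of the already-sorted lists $L_v=\Gamma(v)\cap C$ --- which hinges on sweeping the vertices of $C$ in increasing order --- lets every $r(v)$, hence every $\mu(v)$, be read off in time proportional to $|L_v|$, summing to $O(m)$. A secondary care point is the correctness of the combinatorial formula, specifically the converse direction above, where both possible locations of the offending non-neighbour $z$ of $v^*$ must be shown to land inside the clique $\closure(C_{<i})$.
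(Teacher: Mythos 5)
Your proposal is correct, and it reaches the $O(m+n)$ bound by a genuinely different route from the paper. The paper's proof is purely algorithmic: it exhibits a procedure that sweeps the vertices from $n$ down to $1$, maintains for each $z\notin C$ a counter $d_C(z)=|\Gamma(z)\cap C_{<v}|$ together with an active/deactive flag encoding $z<v$, and returns the first (largest) $v\in C$ at which some active $z$ satisfies $d_C(z)\geq |C_{<v}|$; its correctness is only asserted to ``follow easily from the definitions.'' You instead first isolate a closed-form identity, $i(C)=\max\{\mu(v)\mid v\in[n]\setminus C,\ \mu(v)\geq v\}$ with $\mu(v)=\min(C\setminus\Gamma(v))$, prove it rigorously from Propositions~\ref{prop:closureproperties} and~\ref{prop:char_closure} (your two-sided argument, including the symmetric-difference analysis for the converse, is sound, and the monotonicity of $i\mapsto\closure(C_{<i})$ correctly identifies $i(C)$ as the top of a down-closed interval), and only then evaluate the formula by a bucketing pass that builds the sorted lists $\Gamma(v)\cap C$ in $O(m+n)$ total. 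It is worth noting that the two computations coincide: unwinding the paper's test, ``$z$ active and $d_C(z)\geq\texttt{size}$'' at clique vertex $v$ says exactly that $z<v\leq\mu(z)$, so Algorithm~\ref{ALGO:compute-i} returns precisely your maximum. What your version buys is the correctness argument the paper omits, plus a reusable combinatorial characterization of $i(C)$; what the paper's version buys is a slightly more direct single-sweep implementation with no precomputed formula.
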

\begin{proof}
Consider Algorithm~\ref{ALGO:compute-i}, it takes as input a maximal clique $C$ of $G$ and it aims to 
return the corresponding index $i(C)$ as output. The procedure works as follows: at the beginning,
each vertex $v\in V\setminus C$ is marked as \texttt{active}. Moreover, the procedure keeps track of a counter $d_{C}(v):V\rightarrow \N$, 
which is initialized to be the degree of $v$ with respect of $C$, for each $v\in V$.
Then, for each $v\in V$ in descending ordering from $n$ to $1$, 
Algorithm~\ref{ALGO:compute-i} checks whether ``$i(C)=v$" in the following manner:
\begin{inparaenum} 
\item if $v\not\in C$, then $v$ becomes \texttt{deactive} at line~14;
\item otherwise $v\in C$, then $v$ is (roughly speaking) turned-off within $C$, 
and thus the counter of every $u\in \Gamma(v)\setminus C$ is decremented at line~7.
The size of $C$ is also decremented at line~8. At this point, if there exists $z\in V\setminus C$ which is still \texttt{active} 
and such that $d_C(z)\geq\texttt{size}$, then $v$ is returned as output at line~12.
The existence of such $z$ can be checked quite efficiently as follows: at line~9 the procedure 
picks the greatest vertex (\ie the greatest natural number) $u\in C$ such that $u<v$, say $\hat{u}$, then, 
at line~10 and line~11, the neighbourhood of $\hat{u}$ is inspected in order to 
check whether there is any $z\in \Gamma(\hat{u})\setminus C$ which is still \texttt{active} and such that $d_C(z)\geq\texttt{size}$.
\end{inparaenum}
Also notice that, if line~12 is never reached, the procedure returns $\bot$ at line~15 
(because the index of the root $K_0$ is undefined). 

This concludes the description of Algorithm~\ref{ALGO:compute-i}.

\smallskip
\begin{algorithm}[H]\label{ALGO:compute-i}
\caption{Computing the Index $i(\cdot)$.}
\DontPrintSemicolon
\nonl \SetKwProg{Fn}{Procedure}{}{}
\Fn{$\texttt{compute-i}(C)$}{
\KwIn{a maximal clique $C$ of $G=(V,E)$.}
\KwOut{the index $i(C)$ of $C$.}
$d_C(v)\leftarrow|\{u\in C\mid \{u,v\}\in E\}|$; \tcp{\ie the degree of $v\in V$ in $C$.}
$\texttt{size} \leftarrow |C|$;\;
label each $v\in V\setminus C$ as \texttt{active};\;
\ForEach{ $v\in V$, from $n$ to $1$ }{
\If{ $v \in C$ }{
\ForEach{ $u\in\Gamma(v)\setminus C$}{
$d_C(u)\leftarrow d_C(u)-1$;\;
}
$\texttt{size} \leftarrow \texttt{size}-1$;\;
$\hat{u}\leftarrow $ the greatest vertex $u\in C$ such that $u < v$;\;
\ForEach{$z\in \Gamma(\hat{u})\setminus C$}{
\If{$z$ is \texttt{active} and $d_C(z)\geq\texttt{size}$}{
\Return{$v$;}
}
}
}
\Else{
label $v$ as \texttt{deactive};\;
}
}
\Return{$\bot$};
}
\end{algorithm} 
\smallskip

The correctness of the procedure follows easily from the definition of 
lexicographic completion $\closure(\cdot)$ and that of index $i(\cdot)$. 
Concerning its time complexity, observe that the procedure visits each vertex and each edge at most $O(1)$ times, 
and the work done at each one of those is $O(1)$ as well, so that Algorithm~\ref{ALGO:compute-i} always halts within $O(m+n)$ time.
\end{proof} 

\begin{Prop}\label{compute-P} 
Let $C$ be a maximal clique of $G=(V,E)$, where $|V|=n$ and $|E|=m$. 

The parent maximal clique $\parent(C)$ of $C$ is computable within $O(m+n)$ time. 
\end{Prop}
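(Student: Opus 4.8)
The plan is to reduce the computation of $\parent(C)$ to two sub-tasks that are each already known to be cheap: (i) computing the index $i(C)$, and (ii) computing a lexicographic completion $\closure(\cdot)$. By the definition of parent we have $\parent(C)=\closure(C_{<i(C)})$, so once $i:=i(C)$ is in hand it suffices to form the truncated clique $C_{<i}=C\cap[i-1]$ and then invoke the completion routine on it.

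Concretely, the algorithm I would write does the following. First, call Algorithm~\ref{ALGO:compute-i} on $C$ to obtain $i:=i(C)$; by Proposition~\ref{compute-i} this costs $O(m+n)$ time. If $i=\bot$ then $C=K_0$ is the root and $\parent(C)$ is undefined (we may simply report this). Otherwise, build $C_{<i}$ by scanning $C$ once and keeping only the vertices strictly smaller than $i$; since $C$ is a clique this set is also a clique, and the scan costs $O(n)$. Then call Algorithm~\ref{ALGO:compute-C} on the clique $C_{<i}$; by Proposition~\ref{compute-C} this returns $\closure(C_{<i})$ within $O(\min\{m,\Delta^2\})=O(m)$ time. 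Return this set. Correctness is immediate from the definition of $\parent(\cdot)$ recalled just before Proposition~\ref{compute-i}, together with the already-established correctness of the two subroutines; the total running time is $O(m+n)+O(n)+O(m)=O(m+n)$, as claimed.

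There is essentially no hard part here: the statement is a straightforward corollary of the two preceding propositions, and the only thing to be careful about is the boundary case $C=K_0$ (where $i(C)=\bot$), plus the trivial observation that truncating a clique to an initial segment of the vertex order still yields a clique, so that Algorithm~\ref{ALGO:compute-C} is legitimately applicable to $C_{<i}$. One could also note that $C_{<i}$ is nonempty whenever $i>1$ is the index of a non-root clique, so line~2 of Algorithm~\ref{ALGO:compute-C} (picking an arbitrary vertex of the input clique) is well defined; if one wanted to handle $C_{<i}=\emptyset$ as well, one sets $\parent(C)=\closure(\emptyset)=K_0$ directly. I would present the pseudocode as a three-line procedure ("compute $i(C)$; form $C_{<i(C)}$; return \texttt{compute-C}($C_{<i(C)}$, $G$)") and let it close the proof, exactly in the style of the earlier propositions.
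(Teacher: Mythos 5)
Your proof is correct and follows exactly the paper's route: compute $i(C)$ via Algorithm~\ref{ALGO:compute-i} (Proposition~\ref{compute-i}), then return $\closure(C_{<i(C)})$ via Algorithm~\ref{ALGO:compute-C} (Proposition~\ref{compute-C}), for a total of $O(m+n)$ time. Your added remarks on the boundary cases ($C=K_0$ and $C_{<i}=\emptyset$) are careful touches the paper's two-line proof omits, but they do not change the argument.
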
 
\begin{proof}
Firstly, compute the index $i(C)$ with Algorithm~\ref{ALGO:compute-i} 
(defined in Proposition~\ref{compute-i}). 
Secondly, compute $\parent(C)=\closure(C_{i(C)})$ by invoking Algorithm~\ref{ALGO:compute-C}
(defined in Propostion~\ref{compute-C}). 
\end{proof}

As already mentioned, our algorithm, that of Tsukiyama~\etal~\cite{TsukiyamaIAS77}, 
Johnson~\etal~\cite{Johnson88}, as well as that of Makino and Uno~\cite{MU04}, 
traverse the nodes of $\T_G$ in a DFS-like fashion starting from the root $K_0$.
However, in order to traverse $\T_G$, we first need to show how to effectively characterize  
all the children $C$ of any given maximal clique $P$ of $G$. 
The following is a simple but crucial observation. 
In order for $P$ to be the parent of $C$, 
two \emph{reconstructability} conditions should hold at the same time, 
namely: \begin{enumerate} 
\item the parent $P$ should be ``reconstructible" from its child $C$; 
\item the child $C$ should be ``reconstructible" from its parent $P$ and index $i(C)$.
\end{enumerate}

On this way, the following fact turns out to play a twofold pivotal role.
\begin{Lem}[Reconstructability Lemma]\label{lem:rec} Let $P$ and $C$ be maximal cliques of $G$. 
Assume that $P=\parent(C)$ and let $i$ be the index of $C$. Then, $C_{< i} = P_{< i}\cap \Gamma(i)$. 
\end{Lem}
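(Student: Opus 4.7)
The plan is to prove both inclusions, after two preliminary observations on the index $i = i(C)$. First, $i \in C$: otherwise $C_{<i} = C_{\leq i}$, and one reaches a contradiction with the defining property of $i$ by splitting on $i < n$ (where the maximality of $i$ gives $\closure(C_{<i+1}) = C$, hence $\closure(C_{<i}) = C$) and on $i = n$ (where $C \subseteq [n-1]$ forces $C_{<n} = C$, hence $\closure(C_{<n}) = C$). A similar case split also yields $\closure(C_{\leq i}) = C$: it equals $\closure(C_{<i+1})$ when $i < n$, and $\closure(C)$ when $i = n$.

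For the inclusion $C_{<i} \subseteq P_{<i} \cap \Gamma(i)$, any $v \in C_{<i}$ lies in $[i-1]$ by definition, lies in $P$ because $C_{<i} \subseteq \closure(C_{<i}) = P$, and is adjacent to $i$ because $v$ and $i$ are distinct vertices of the clique $C$ (using $i \in C$).

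For the converse $P_{<i} \cap \Gamma(i) \subseteq C_{<i}$, I take an arbitrary $v \in P_{<i} \cap \Gamma(i)$ and argue $v \in C$ by contradiction. If $v \notin C$, then Proposition~\ref{prop:char_closure} applied to the clique $C_{\leq i}$ (whose completion is $C$) furnishes a witness $z \in C_{\leq i} \cup ([v-1] \cap C)$ with $v \notin \Gamma(z)$. The possibility $z = i$ is ruled out by $v \in \Gamma(i)$, and every other possibility places $z$ in $C_{<i}$ (directly if $z \in C_{\leq i}$ with $z < i$, or else via $[v-1] \cap C \subseteq C_{<v} \subseteq C_{<i}$). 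But then a second application of Proposition~\ref{prop:char_closure}, this time to $C_{<i}$ (whose completion is $P$) and the vertex $v \in P$, forbids the existence of any such $z \in C_{<i}$ with $v \notin \Gamma(z)$, a contradiction. Since $v < i$, this gives $v \in C_{<i}$.

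The delicate step is the second inclusion: it requires threading two applications of Proposition~\ref{prop:char_closure}, one characterizing membership in $C$ via $C_{\leq i}$ and one characterizing membership in $P$ via $C_{<i}$, and verifying that any would-be witness of $v \notin C$ must either conflict with $v \in \Gamma(i)$ (if the witness is $i$) or with $v \in P$ (otherwise). The first inclusion and the preliminary observations on $i$ are bookkeeping by comparison.
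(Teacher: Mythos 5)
Your proof is correct and follows essentially the same route as the paper's: both directions hinge on Proposition~\ref{prop:char_closure} applied to $C_{\leq i}$ (whose completion is $C$) together with the fact that $v\in P=\closure(C_{<i})$ is adjacent to everything in $C_{<i}$; your contrapositive phrasing of the second inclusion (extract a witness $z$ from $v\notin C$ and rule it out) is just the paper's direct verification read backwards. The only addition is that you explicitly justify $i\in C$ and $\closure(C_{\leq i})=C$ from the definition of the index, which the paper asserts without proof.
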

\begin{proof} 
\begin{itemize}
\item Firstly, we argue $C_{< i} \subseteq P_{< i}\cap\Gamma(i)$.

Since $i$ is the index of $C$, then $i\in C$. Thus, $C_{< i}\subseteq \Gamma(i)$. 
Moreover, $P=\P(C)=\closure(C_{< i})$ implies $C_{< i}\subseteq P_{< i}$.
\item Secondly, we argue $P_{< i}\cap \Gamma(i)\subseteq C_{< i}$.

Let $v$ be any node in $P_{<i}\cap \Gamma(i)$.
Notice $v<i$ and recall that $C=\closure(C_{\leq i})$.
In order to show $v\in C_{<i}$, it is thus sufficient to prove $v\in \closure(C_{\leq i})$.
For this, we shall rely on Proposition~\ref{prop:char_closure}. 
Now, observe the following two facts:
\begin{enumerate}
\item No $z\in C_{\leq i}$ is such that $v\not\in\Gamma(z)$. 
In fact, $v$ is adjacent to all vertices in $C_{<i}$ (because $v\in P=\closure(C_{<i})$) and 
$v$ is adjacent to $i$ (because $v\in \Gamma(i)$).
\item No $z\in [v-1]\cap \closure(C_{\leq i})$ is such that $v\not\in\Gamma(z)$.
In fact, since $P$ is a clique containing $v$, no $z\in [v-1]\setminus \Gamma(v)$ belongs to $P$, 
whence neither to $C_{<i}$ (because $C_{< i}\subseteq P_{< i}\subseteq P$), 
nor to $C_{\leq i}$ (because $v<i$). This implies that no 
$z\in [v-1]\setminus \Gamma(v)$ belongs to $\closure(C_{\leq i})$, 
because $\closure(C_{\leq i})=C$ and $v<i$.
\end{enumerate}
At this point, $v\in\closure(C_{\leq i})$ follows directly from Proposition~\ref{prop:char_closure}. 

%For this, let us consider any $v\in P_{< i}\cap \Gamma(i)$. 
%Since $v\in P_{< i}$ and  $P=\P(C)=\closure(C_{< i})$, 
%then $v\in \closure(C_{< i})_{< i}$. Notice $v<i$. 
%At this point, we argue that $v\in \closure(C_{\leq i})$.
%Assume $v\not\in \closure(C_{\leq i})$ for the sake of contradiction.  
%By Proposition~\ref{prop:char_closure}, we must have one of two cases:
%\begin{itemize}
%\item \emph{Case} $C_{\leq i}\setminus \Gamma(v)\neq\emptyset$. But this is not possible, in fact $v$ is adjacent to every 
%vertex in $C_{< i}$ because $v\in \closure(C_{< i})_{< i}$, and $v$ is adjacent to $i$ because $v\in P_{< i}\cap \Gamma(i)$.
%\item \emph{Case} $\big(\closure(C_{\leq i})\setminus \Gamma(v)\big)_{v}\neq\emptyset$. Recall that  
%$\closure(C_{\leq i})=C$ because $i$ is the index of $C$. Thus, $\big(C\setminus \Gamma(v)\big)_{v}\neq\emptyset$ in this case. 
%But this is not possible, in fact $v<i$ and we have already observed that $v$ is adjacent to every 
%vertex in $C_{< i}$ because $v\in \closure(C_{< i})_{< i}$.
%\end{itemize}
%Since no case is possible, we have found the sought contradiction. Whereby,   
%$v\in \closure(C_{\leq i})$. Now, since $\closure(C_{\leq i})=C$ (because $i$ is the index of $C$), 
%then $v\in C$. At this point $v\in C_{<i}$ follows because $v<i$.
\end{itemize}
\end{proof}

We are now in position to characterize the parent and children reconstructability conditions. 
Indeed, at this point, they both turn out to be a direct consequence of Lemma~\ref{lem:rec}.
\begin{Prop}[Parent and Child Reconstructability]\label{prop:rec}
Let $P$ and $C$ be maximal cliques of $G$. Assume that $P=\parent(C)$ and let $i$ be the index of $C$.
Then, the following two conditions hold: 
\begin{enumerate}
\item $P=\closure\big( C_{< i} \big) 
	= \closure\big( P_{< i}\cap \Gamma(i) \big)$ \;\;\;\; (Parent Reconstructability)
\item $C=\closure\big(( P_{< i}\cap \Gamma(i) )\cup \{i\} \big)$ \;\;\;\; (Child Reconstructability)
\end{enumerate}
\end{Prop}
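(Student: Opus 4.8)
The plan is to obtain both conditions as immediate corollaries of the Reconstructability Lemma (Lemma~\ref{lem:rec}), which already supplies the key set identity $C_{<i}=P_{<i}\cap\Gamma(i)$; the only work left is to rewrite the two displayed right-hand sides as $\closure(C_{<i})$ and $\closure(C_{\leq i})$ respectively, and then to read off the conclusions from the definitions of $\parent(\cdot)$ and of the index $i(C)$.

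For Parent Reconstructability I would start from the definition of the parent itself: since $P=\parent(C)$ with $i=i(C)$, by construction $P=\closure(C_{<i})$ (and in particular $\closure(C_{<i})\neq C$, consistently with $P\neq C$). Substituting $C_{<i}=P_{<i}\cap\Gamma(i)$ from Lemma~\ref{lem:rec} gives $P=\closure(C_{<i})=\closure\big(P_{<i}\cap\Gamma(i)\big)$, which is condition~(1).

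For Child Reconstructability I would again invoke Lemma~\ref{lem:rec} to rewrite $\big(P_{<i}\cap\Gamma(i)\big)\cup\{i\}=C_{<i}\cup\{i\}$; recalling that $i\in C$ (this is part of the index being well defined, and is already used in the proof of Lemma~\ref{lem:rec}), this set equals $C_{\leq i}$. So condition~(2) reduces to the claim $C=\closure(C_{\leq i})$, and here the maximality of the index $i=i(C)$ does the work: for every $j\in[n]$ with $j>i$ one has $\closure(C_{<j})=C$. If $i<n$, applying this with $j=i+1$ yields $\closure(C_{\leq i})=\closure(C_{<i+1})=C$; and if $i=n$ then $C_{\leq i}=C$, so $\closure(C_{\leq i})=C$ by maximality of the clique $C$. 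Either way condition~(2) follows.

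I do not expect any real obstacle: the proposition is a routine consequence of Lemma~\ref{lem:rec} together with the definitions. The only points demanding care are the bookkeeping ones — checking that $i\in C$ so that $C_{<i}\cup\{i\}$ genuinely coincides with $C_{\leq i}$, and isolating the boundary case $i=n$ in the last step — neither of which is more than a line.
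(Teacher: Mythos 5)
Your proof is correct and follows essentially the same route as the paper's: both conditions are obtained by substituting the identity $C_{<i}=P_{<i}\cap\Gamma(i)$ from Lemma~\ref{lem:rec} into $P=\closure(C_{<i})$ and $C=\closure(C_{\leq i})$ respectively. The only difference is that you spell out the justification of $C=\closure(C_{\leq i})$ (via the maximality of $i(C)$ and the boundary case $i=n$), which the paper compresses into the single annotation ``because $i=i(C)$''.
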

\begin{enumerate}
\item[\emph{Proof of 1.}] It is sufficient to observe the following:
\begin{align*}
P & = \closure\big( C_{< i} \big) & \text{ (because $i=i(C)$ and $P=\parent(C)$)} \\
  & = \closure\big( P_{< i}\cap \Gamma(i) \big) & 
				\text{(by Lemma~\ref{lem:rec})} 
\end{align*}
\item[\emph{Proof of 2.}] It is sufficient to observe the following:  
\begin{align*}
C & = \closure\big(C_{< i}\cup \{i\}\big) & \text{ (because $i=i(C)$)}  \\
& = \closure\big(( P_{< i}\cap \Gamma(i))\cup \{i\} \big) 	
	& \text{ (by Lemma~\ref{lem:rec})} 
\end{align*}

\end{enumerate}

The rationale which allows for the computation of a maximal clique child $C$ is that of \emph{reversing} the parent relation $\parent(\cdot)$, 
in the spirit of the Reverse Search Enumeration of Avis and Fukuda~\cite{AvisFukuda93}. 
Observe that Item~2 of Proposition~\ref{prop:rec} pointed out a shape for such an inversion. 
In fact, in light of Proposition~\ref{prop:rec}, 
given any maximal clique $P$ of $G$ and any $i\in [n]$,  
it is natural to introduce the following notation:
\[ 
\C(K,i) = \closure\big(( K_{< i}\cap \Gamma(i) )\cup \{i\} \big). 
\] 
Proposition~\ref{prop:rec} tells us that whenever $C$ is a child of $P$ with index $i$, then $C=\C(P,i)$. 
This means that, given $P$, we are called to characterize all the indices $i\in [n]$ such that $\C(P, i)$ is a child of $P$ with index $i$. 
In order to do that, let us proceed by observing the following property enjoyed by $\closure(\cdot)$, 
it will turn out to be pertinent in a while. 
\begin{Lem}\label{lemma:Cidemp} 
Let $G$ be any $n$-vertex graph. 
Let $K$ be a clique of $G$ and let $a,b\in [n]$ be two indices such that $a\leq b$. 
Then, $\closure(\closure(K_{\leq a})_{\leq b})=\closure(K_{\leq a})$; 
\end{Lem}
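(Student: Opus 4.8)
The statement to prove is the idempotency-type identity $\closure(\closure(K_{\leq a})_{\leq b})=\closure(K_{\leq a})$ for $a\leq b$. The plan is to set $L=\closure(K_{\leq a})$, so that $L$ is a maximal clique, and argue that $\closure(L_{\leq b})=L$. Since $a\leq b$ and $L$ is a maximal clique containing $K_{\leq a}$, we have $K_{\leq a}\subseteq L_{\leq b}\subseteq L$, so the real content is that truncating a maximal clique at index $b$ and re-completing it returns the same maximal clique. The key point to exploit is that $L$ was itself obtained as a lexicographic completion.

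First I would record that $\closure(L_{\leq b})\geq_{\text{lex}} L$ is \emph{not} automatic and is precisely what must be shown; instead I will show the two inclusions $L\subseteq\closure(L_{\leq b})$ and $\closure(L_{\leq b})\subseteq L$, which together force equality. The inclusion $L\supseteq\closure(L_{\leq b})$ I would get from maximality-type reasoning via Proposition~\ref{prop:char_closure}: I would verify that every $v\in V$ satisfies the defining dichotomy relative to $L_{\leq b}$ in a way consistent with membership in $L$, i.e. that $L$ itself "passes the test" that characterizes $\closure(L_{\leq b})$. Concretely, for $v\in L$ there is no $z\in L_{\leq b}\cup([v-1]\cap L)$ with $v\notin\Gamma(z)$ (because $L$ is a clique and $L_{\leq b}\subseteq L$, $[v-1]\cap L\subseteq L$); for $v\notin L$, since $L=\closure(K_{\leq a})$, Proposition~\ref{prop:char_closure} gives some $z\in K_{\leq a}\cup([v-1]\cap L)\subseteq L_{\leq b}\cup([v-1]\cap L)$ with $v\notin\Gamma(z)$ — here I use $K_{\leq a}\subseteq L_{\leq b}$, which holds since $a\le b$. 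Hence $L$ satisfies the exact characterization of $\closure(L_{\leq b})$ from Proposition~\ref{prop:char_closure}, so $\closure(L_{\leq b})=L$, which simultaneously delivers both inclusions.

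The main obstacle — and the step I would be most careful about — is the transfer of the witness $z$ in case $v\notin L$: Proposition~\ref{prop:char_closure} applied to the clique $K_{\leq a}$ gives a witness $z\in K_{\leq a}\cup([v-1]\cap\closure(K_{\leq a}))$, and I must make sure this set is contained in $L_{\leq b}\cup([v-1]\cap L)$. The second piece equals $[v-1]\cap L$ verbatim since $L=\closure(K_{\leq a})$; the first piece requires $K_{\leq a}\subseteq L_{\leq b}$, which needs $K_{\leq a}\subseteq L$ (true, $L\supseteq K_{\leq a}$ by definition of completion) together with $K_{\leq a}\subseteq[b]$, which is exactly where the hypothesis $a\leq b$ is used. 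Once that containment is in hand, the uniqueness clause of Proposition~\ref{prop:char_closure} pins down $\closure(L_{\leq b})$ as the unique set satisfying the dichotomy, and since $L$ satisfies it, we conclude $\closure(L_{\leq b})=L=\closure(K_{\leq a})$, completing the proof.
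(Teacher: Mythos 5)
Your argument is correct, but it takes a genuinely different route from the paper. The paper's proof is a two-line sandwich: from $K_{\leq a}\subseteq \closure(K_{\leq a})_{\leq b}$ (which uses $a\leq b$) and from $\closure(K_{\leq a})_{\leq b}\subseteq \closure(K_{\leq a})$, two applications of the antitonicity statement of Proposition~\ref{prop:closureproperties} give $\closure(K_{\leq a})\geq_{\text{lex}}\closure(\closure(K_{\leq a})_{\leq b})$ and the reverse inequality, whence equality by antisymmetry of $\geq_{\text{lex}}$. You instead verify directly that $L=\closure(K_{\leq a})$ satisfies the membership dichotomy of Proposition~\ref{prop:char_closure} relative to the clique $L_{\leq b}$, and invoke the fact that this dichotomy determines the completion uniquely (by induction on $v$); the paper uses exactly this uniqueness reading in the correctness proof of Algorithm~\ref{ALGO:compute-C}, so the step is legitimate even though Proposition~\ref{prop:char_closure} does not state uniqueness explicitly, and it would be worth spelling out the one-line induction. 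Your route is more hands-on and makes visible exactly where $a\leq b$ enters (namely $K_{\leq a}\subseteq L_{\leq b}$, needed to transfer the witness $z$), while the paper's route is shorter and reuses the monotonicity lemma. One framing remark in your plan is backwards: $\closure(L_{\leq b})\geq_{\text{lex}}L$ \emph{is} automatic, since $L$ is a maximal clique containing $L_{\leq b}$ and $\closure(L_{\leq b})$ is the lexicographically greatest such; the nontrivial direction is $L\geq_{\text{lex}}\closure(L_{\leq b})$. This does not affect your proof, since your characterization argument establishes set equality outright.
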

\begin{proof}
Since $a\leq b$, then $K_{\leq a}\subseteq \closure(K_{\leq a})_{\leq b}$. 
Thus, by Proposition~\ref{prop:closureproperties}, 
\[\closure(K_{\leq a})\geq_{\text{lex}} \closure(\closure(K_{\leq a})_{\leq b}).\]
On the other way, $\closure(K_{\leq a})_{\leq b}\subseteq \closure(K_{\leq a})$. 
Thus, by Proposition~\ref{prop:closureproperties} again, 
\[\closure(\closure(K_{\leq a})_{\leq b})\geq_{\text{lex}} 
\closure(\closure(K_{\leq a}))=\closure(K_{\leq a}).\]

Since $\geq_{\text{lex}}$ is a total ordering, the observations 
above imply $\closure(\closure(K_{\leq a})_{\leq b})=\closure(K_{\leq a})$.
\end{proof} 

We are now in position to characterize all the children of any given maximal clique $P$.
\begin{Prop}\label{prop:children_char} 
Let $P$ be a maximal clique of any given $n$-vertex graph $G=(V,E)$. 

There exist a child of $P$ having index $i$ if and only if 
$i\not\in P\cup [i(P)]$ and the following two reconstructability conditions hold: 
\begin{enumerate}
\item[a.] $P_{< i} = \closure\big(P_{< i}\cap\Gamma(i)\big)_{< i}$
\item[b.] $\closure\big(( P_{< i}\cap \Gamma(i) )\cup \{i\} \big)_{< i} = P_{< i} \cap \Gamma(i)$
\end{enumerate} 
\end{Prop}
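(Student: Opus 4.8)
The plan is to prove the biconditional in two directions, using Proposition~\ref{prop:rec} and Lemma~\ref{lem:rec} for the forward direction, and Lemma~\ref{lemma:Cidemp} together with Proposition~\ref{prop:char_closure} for the reverse direction. Throughout, write $C = \C(P,i) = \closure\big((P_{<i}\cap\Gamma(i))\cup\{i\}\big)$.

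For the forward direction, suppose $C'$ is a child of $P$ with index $i$. First I would check $i\notin P\cup[i(P)]$: since $i=i(C')$ we have $i\in C'$, so $i\notin C'_{<i}=P_{<i}\cap\Gamma(i)$ by Lemma~\ref{lem:rec}; and if $i\in P$, then since $\parent(C')=\closure(C'_{<i})=P$ one argues $i$ would already be forced into $\closure(C'_{<i})$ (because $i\in\Gamma(v)$ for all $v\in C'_{<i}$ and $i<$ nothing relevant below it), contradicting that $i$ is the \emph{greatest} index with $\closure(C'_{<i})\neq C'$ — more cleanly, $i\in P=\closure(C'_{<i})$ would give $C'_{<i}=C'_{<i}$ and $\closure(C'_{\le i})=\closure(C'_{<i})=P\neq C'$, contradicting $\closure(C'_{\le i})=C'$. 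For $i>i(P)$: if $i\le i(P)$ then $P_{<i}=P_{<i}\cap[\,i(P)\,]$ interacts with the definition of $i(P)$ as the greatest index with $\closure(P_{<i(P)})\neq P$, which I expect to yield a contradiction with $P$ being reconstructible at index $i$. Once $i\notin P\cup[i(P)]$ is settled, conditions (a) and (b) follow almost immediately: by Proposition~\ref{prop:rec}(1), $P=\closure(P_{<i}\cap\Gamma(i))$, so in particular $P_{<i}=\closure(P_{<i}\cap\Gamma(i))_{<i}$, which is (a); and by Proposition~\ref{prop:rec}(2), $C'=\closure((P_{<i}\cap\Gamma(i))\cup\{i\})=\C(P,i)$, and then Lemma~\ref{lem:rec} applied to the pair $(P,C')$ gives $C'_{<i}=P_{<i}\cap\Gamma(i)$, i.e.\ $\C(P,i)_{<i}=P_{<i}\cap\Gamma(i)$, which is (b).

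For the reverse direction, assume $i\notin P\cup[i(P)]$ together with (a) and (b); I want to show $C=\C(P,i)$ is a maximal clique whose parent is $P$ and whose index is $i$. That $C$ is a maximal clique is immediate since $\closure(\cdot)$ always returns one. To see $i\in C$: $i\in (P_{<i}\cap\Gamma(i))\cup\{i\}$ and $\closure$ only adds vertices, so $i\in C$. The key computations are: (i) $C_{<i}=P_{<i}\cap\Gamma(i)$, which is exactly hypothesis (b); (ii) $\closure(C_{<i})=\closure(P_{<i}\cap\Gamma(i))$, and one must show this equals $P$. For (ii), note $C = \closure((P_{<i}\cap\Gamma(i))\cup\{i\}) = \closure(C_{\le i})$, and applying Lemma~\ref{lemma:Cidemp} with the clique $K$ chosen so that $K_{\le i}=(P_{<i}\cap\Gamma(i))\cup\{i\}$ — together with hypothesis (a) controlling what happens below $i$ — should pin down $\closure(P_{<i}\cap\Gamma(i))=P$. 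Then $\parent(C)=\closure(C_{<i(C)})$; having shown $\closure(C_{<i})=P\neq C$ (they differ since $i\in C$ but, by (a) and $i\notin P$, one checks $i\notin P$), we get $i(C)\ge i$. The reverse inequality $i(C)\le i$ requires showing $\closure(C_{<j})=C$ for all $j>i$; here one uses that $C=\closure(C_{\le i})$ and Lemma~\ref{lemma:Cidemp} (or a direct $\closure$-monotonicity argument via Proposition~\ref{prop:closureproperties}) to show adding the coordinates of $C$ between $i$ and $j$ does not change the completion. Hence $i(C)=i$ and $\parent(C)=\closure(C_{<i})=P$, so $C$ is a child of $P$ with index $i$.

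The main obstacle I anticipate is the reverse direction's verification that $i(C)=i$ exactly — in particular ruling out $i(C)>i$, which amounts to showing $\closure(C_{<j})=C$ for every $j$ with $i<j\le n$. This is precisely where Lemma~\ref{lemma:Cidemp} is designed to be invoked (with $a=i$, $b=j$), but one has to be careful that the hypotheses of that lemma are met, i.e.\ that $C$ genuinely has the form $\closure(K_{\le i})$ for a clique $K$ — which it does, with $K=(P_{<i}\cap\Gamma(i))\cup\{i\}$ — and that the indices line up. A secondary subtlety is the role of the condition $i\notin[i(P)]$: it is needed to guarantee that the index we recover is genuinely the \emph{greatest} witnessing index for $C$, and to ensure consistency with $P$'s own index; I would handle this by unwinding the definition of $i(P)$ and using Proposition~\ref{prop:closureproperties} to compare completions of nested prefixes.
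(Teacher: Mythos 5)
Your architecture matches the paper's exactly: forward via Proposition~\ref{prop:rec} and Lemma~\ref{lem:rec}, reverse by forming $C=\C(P,i)$ and verifying $\closure(C_{<i})=P$ and $\closure(C_{\le i})=C$. However, two load-bearing steps are left as conjecture (``I expect'', ``should pin down''), and in both cases the route you gesture at is not the one that works. First, for $i>i(P)$ in the forward direction, the argument is direct rather than by contradiction: from $P=\closure(C_{<i})$ and Lemma~\ref{lemma:Cidemp},
\[
\closure(P_{<i})=\closure\big(\closure(C_{<i})_{<i}\big)=\closure(C_{<i})=P,
\]
and then $i>i(P)$ follows because the set of $j$ with $\closure(P_{<j})=P$ is upward closed: by Proposition~\ref{prop:closureproperties} the map $j\mapsto\closure(P_{<j})$ is lexicographically non-increasing and always $\geq_{\text{lex}}P$. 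Your suggested unwinding via ``$P_{<i}=P_{<i}\cap[i(P)]$'' does not by itself produce the contradiction.

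Second, in the reverse direction the identity $\closure(C_{<i})=P$ is the crux, and it is obtained by the explicit chain
\[
P=\closure(P_{<i})=\closure\big(\closure(P_{<i}\cap\Gamma(i))_{<i}\big)=\closure\big(\closure(C_{<i})_{<i}\big)=\closure(C_{<i}),
\]
where the first equality uses $i\notin[i(P)]$ (via the same upward-closedness), the second uses (a), the third uses (b) (i.e.\ $C_{<i}=P_{<i}\cap\Gamma(i)$), and the last uses Lemma~\ref{lemma:Cidemp}. In particular, the hypothesis $i\notin[i(P)]$ is consumed precisely here, to start the chain with $P=\closure(P_{<i})$ --- not, as you state at the end, to ``guarantee that the index we recover is the greatest witnessing index for $C$''; that part ($i(C)\le i$) follows solely from $\closure(C_{\le i})=C$ (which you correctly get from (b), since $C_{\le i}=(P_{<i}\cap\Gamma(i))\cup\{i\}$) together with monotonicity of prefix completions. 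The remaining pieces of your proposal --- $i\notin P$, deriving (a) and (b) from Proposition~\ref{prop:rec}, and $i(C)\ge i$ --- are correct and match the paper.
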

\begin{proof} ($\Rightarrow$)
Let $C$ be the child of $P$ having index $i$ (which exists by assumption).

Firstly, we argue that $i\in P\cup [i(P)]$. 
Indeed, since $P=\parent(C)=\closure(C_{< i})$ and $C=\closure(C_{\leq i})$, then $i\not\in P$.
Moreover, the following equalities show that $i>i(P)$: 
\begin{align*}
\closure(P_{< i})&= \closure(\closure(C_{< i})_{< i})& \text{(by $P=\closure(C_{< i}$))} \\
               &= \closure(C_{< i}) & \text{(by Lemma~\ref{lemma:Cidemp})}\\
               &= P & \text{(by $P=\closure(C_{< i}$)}
\end{align*}
Finally, we argue that both the (\emph{a}) and the (\emph{b}) conditions hold on $i$.

\begin{enumerate}
\item[\emph{Proof of a.}] By Item~1 of Propostion~\ref{prop:rec}, 
we have $P=\closure(P_{<i}\cap\Gamma(i))$. Thus, (\emph{a}) holds on $i$.
\item[\emph{Proof of b.}] By Item~2 of Proposition~\ref{prop:rec}, 
we have that $C=\closure\big(( P_{< i}\cap \Gamma(i) )\cup \{i\} \big)$.
By Lemma~\ref{lem:rec}, we have $C_{< i} = P_{< i} \cap \Gamma(i)$. 
These facts imply that the (\emph{b}) condition holds on $i$.
\end{enumerate}

($\Leftarrow$) We argue that whenever both the (\emph{a}) 
and the (\emph{b}) conditions hold on some $i\not\in P \cup [i(P)]$, 
then there exist a child of $P$ with index $i$. 
Let $C=\closure\big(( P_{< i}\cap \Gamma(i) )\cup \{i\} \big)$ for some $i$ as mentioned.
Firstly, observe that $C\neq P$: in fact, $i\in C$ by definition of $C$ 
but $i\not\in P$ by hypothesis.

Now, we argue that $P=\closure(C_{< i})$. 
In fact, observe that the following equalities hold: 
\begin{align*}
P &=  \closure(P_{< i}) & \text{(by $i \not\in [i(P)]$)} \\
  &=  \closure(\closure(P_{< i}\cap\Gamma(i))_{< i}) & \text{(by (\emph{a}))} \\
  &=  \closure(\closure(\closure\big(( P_{< i}\cap \Gamma(i) )\cup 
\{i\} \big)_{< i})_{< i}) & \text{(by (\emph{b}))} \\
  & = \closure(\closure(C_{< i})_{< i}) & \text{(by definition of $C$)} \\
  &=  \closure(C_{< i}) & \text{(by Lemma~\ref{lemma:Cidemp})}
\end{align*}

To conclude the proof, it is sufficient to check that $\closure(C_{\leq i})=C$.
\begin{align*}
\closure(C_{\leq i}) & = \closure(C_{< i}\cup \{i\}) & \text{(because $i\in C$)} \\
           & = \closure( \closure\big(( P_{< i}\cap \Gamma(i) )\cup 
\{i\} \big)_{< i}\cup \{i\}) & \text{(by definition of $C$)} \\
	   & = \closure\big((P_{<i}\cap\Gamma(i)) \cup \{i\}\big) & \text{(by (\emph{b}))} \\
           & = C & \text{(by definition of $C$)} \\
\end{align*}
\end{proof}

Observe that, since $\closure(K)$ can be computed from any clique $K$ in $O(m)$ time by Proposition~\ref{compute-C}, 
it is possible to compute all the children of a given maximal clique $P$ in $O(mn)$ time by Proposition~\ref{prop:children_char}.
In fact, it is sufficient to check whether the conditions (\emph{a}) and (\emph{b}) both hold on the index $i$, 
for each $i\in [n]\setminus (P\cup [i(P)])$. In this manner, listing each node of $\T_G$ (namely, 
each maximal clique of $G$) with $O(mn)$ time delay. In order to improve over the $O(mn)$ bound, 
Makino and Uno reduced the problem of checking the conditions (\emph{a}) and 
(\emph{b}) of Proposition~\ref{prop:children_char} to that of multiplying two $n\times n$ square matrices~\cite{MU04}. 
In doing this, they observed (without proof) the following two lemmata. These are a restating of the conditions 
(\emph{a}) and (\emph{b}) of Proposition~\ref{prop:children_char}. 
We remark that Lemma~\ref{lemmaA} and Lemma~\ref{lemmaB} are really at the ground of our reduction to rectangular matrix multiplication.

\begin{Lem}\label{lemmaA} 
Let $P$ be a maximal clique of any given $n$-vertex graph $G=(V,E)$. 

Then, $i\in [n]$ satisfies $P_{< i}=\closure\big(P_{< i}\cap\Gamma(i)\big)_{< i}$ 
if and only if there doesn't exist any index $j\in [i-1]\setminus P$ such that the 
following conditions hold:
\begin{enumerate}
\item[a'.] $j$ is adjacent to all vertices in $P_{<j }$;
\item[a''.] $j$ is adjacent to all vertices in $P_{< i}\cap\Gamma(i)$.
\end{enumerate}
\end{Lem}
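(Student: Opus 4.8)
The statement to prove is Lemma~\ref{lemmaA}, which recasts condition~(\emph{a}) of Proposition~\ref{prop:children_char}, namely $P_{<i}=\closure(P_{<i}\cap\Gamma(i))_{<i}$, into a non-existence statement about a ``witness'' index $j\in[i-1]\setminus P$. The natural approach is to prove the contrapositive in both directions: the equality $P_{<i}=\closure(P_{<i}\cap\Gamma(i))_{<i}$ fails if and only if such a witness $j$ exists. Set $Q:=\closure(P_{<i}\cap\Gamma(i))$. First I would observe that $P_{<i}\cap\Gamma(i)\subseteq P_{<i}$, and since $P_{<i}\cap\Gamma(i)$ is a clique, $Q\geq_{\text{lex}}P_{<i}\cap\Gamma(i)$; moreover $Q$ is a maximal clique of $G$ while $P_{<i}$ need not be. The key point is to compare $Q_{<i}$ with $P_{<i}$ coordinate by coordinate (i.e. in the order on $\N$), exactly as in the definition of $\closure$ via Proposition~\ref{prop:char_closure}.

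\textbf{Forward direction ($\Leftarrow$, by contradiction on the equality).} Suppose a witness $j\in[i-1]\setminus P$ exists satisfying (\emph{a'}) and (\emph{a''}). Conditions (\emph{a'}) and (\emph{a''}) say precisely that $j$ is adjacent to every vertex of $P_{<j}$ and to every vertex of $P_{<i}\cap\Gamma(i)$. The plan is to show $j\in Q_{<i}$: since $j$ is adjacent to all of $P_{<i}\cap\Gamma(i)\supseteq (P_{<i}\cap\Gamma(i))_{<j}$, and since (by a small argument using (\emph{a'}) and induction on the greedy construction of $\closure$) $j$ is adjacent to every element of $Q_{<j}$ as well, Proposition~\ref{prop:char_closure} forces $j\in\closure(P_{<i}\cap\Gamma(i))=Q$; as $j<i$, we get $j\in Q_{<i}$. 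But $j\notin P$, so $j\notin P_{<i}$, witnessing $P_{<i}\neq Q_{<i}$. The delicate part here is verifying that $j$ is adjacent to all of $Q_{<j}$, not merely to $P_{<j}$ — this needs that $Q_{<j}\subseteq P_{<j}$, which should itself follow from the minimality/greedy nature of $j$ being the \emph{smallest} counterexample; I would choose $j$ minimal among all witnesses to make this clean.

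\textbf{Converse direction ($\Rightarrow$, by contrapositive).} Suppose $P_{<i}\neq Q_{<i}$. Since $P_{<i}\cap\Gamma(i)\subseteq P_{<i}$ and also $P_{<i}\cap\Gamma(i)\subseteq Q$, I would first argue $P_{<i}\cap\Gamma(i)\subseteq P_{<i}\cap Q_{<i}$, so the symmetric difference of $P_{<i}$ and $Q_{<i}$ lies outside $\Gamma(i)$-membership in a controlled way; then I'd take $j$ to be the smallest vertex in the symmetric difference $(P_{<i}\setminus Q_{<i})\cup(Q_{<i}\setminus P_{<i})$. A case analysis on which side $j$ falls, combined with Proposition~\ref{prop:char_closure} applied to $Q=\closure(P_{<i}\cap\Gamma(i))$, should show $j\in[i-1]\setminus P$ and that $j$ satisfies (\emph{a'}) and (\emph{a''}): roughly, $j\in Q$ would force (via Proposition~\ref{prop:char_closure}) that $j$ is adjacent to everything in $(P_{<i}\cap\Gamma(i))\cup Q_{<j}$, hence in particular to $P_{<i}\cap\Gamma(i)$ (giving (\emph{a''})) and to $P_{<j}$ once we know $P_{<j}=Q_{<j}$ by minimality of $j$ (giving (\emph{a'})); the case $j\in P_{<i}\setminus Q_{<i}$ has to be ruled out or handled symmetrically.

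\textbf{Main obstacle.} The crux in both directions is controlling the prefixes: one must repeatedly pass between ``adjacent to all of $P_{<j}$'' and ``adjacent to all of $Q_{<j}$'', and this only works because below the minimal discrepancy index the two sets $P_{<j}$ and $Q_{<j}$ coincide. So the whole argument hinges on isolating the \emph{least} index at which $P_{<i}$ and $\closure(P_{<i}\cap\Gamma(i))_{<i}$ differ and exploiting Proposition~\ref{prop:char_closure} there. Once that minimal-witness framing is fixed, everything else is a routine unwinding of the definition of $\closure(\cdot)$.
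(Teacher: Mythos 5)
Your plan follows essentially the same route as the paper's (very terse) proof: compare the prefixes of $P$ and $Q=\closure(P_{< i}\cap\Gamma(i))$ at the least index where they disagree, with Proposition~\ref{prop:char_closure} as the engine. Your converse direction is sound as sketched: with $j$ least in the symmetric difference one has $P_{<j}=Q_{<j}$; the case $j\in P_{< i}\setminus Q_{< i}$ is indeed impossible, because any obstruction $z$ to $j\in Q$ would lie in $(P_{< i}\cap\Gamma(i))\cup Q_{<j}\subseteq P$, contradicting that $P$ is a clique containing $j$; and in the remaining case $j$ is a witness satisfying (\emph{a'}) and (\emph{a''}).

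The one sub-step that does not work as stated is in the forward direction: choosing $j$ minimal \emph{among witnesses} does not yield $Q_{<j}\subseteq P_{<j}$. An element $z\in Q_{<j}\setminus P_{<j}$ is, by Proposition~\ref{prop:char_closure}, adjacent to all of $P_{< i}\cap\Gamma(i)$ and to all of $Q_{<z}$, but there is no reason it should be adjacent to all of $P_{<z}$; hence it need not itself be a witness, and minimality of $j$ among witnesses does not exclude it. The repair is immediate and is what the paper's one-line argument (``there exists $j'\leq j$ such that $j'\in\closure(P_{< i}\cap\Gamma(i))_{< i}\setminus P_{< i}$'') implicitly does: either $Q_{<j}\subseteq P_{<j}$, in which case (\emph{a'}) and (\emph{a''}) together with Proposition~\ref{prop:char_closure} put $j$ itself into $Q_{< i}\setminus P_{< i}$; or some $z\in Q_{<j}\setminus P_{<j}$ already lies in $Q_{< i}\setminus P_{< i}$. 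Either way the equality fails, which is all the lemma requires.
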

\begin{proof}
Assume that for some $i\in[n]$ there exists  $j\in [i-1]\setminus P$ 
satisfying both the (\emph{a'}) and (\emph{a''}) condition.
Then, there exists $j'\leq j$ such that 
$j'\in \closure(P_{< i}\cap \Gamma(i))_{< i}\setminus P_{< i}$, 
thus implying $\closure(P_{< i}\cap \Gamma(i))_{< i}\neq P_{< i}$.
For the opposite direction, assume that for some $i\in [n]$ there is no $j\in [i-1]\setminus P$ 
satisfying both the (\emph{a'}) and (\emph{a''}) condition. 
Then, $\closure\big(P_{< i}\cap\Gamma(i)\big)_{< i}=P_{< i}$ 
follows by definition of $\closure(\cdot)$. This concludes the proof.
\end{proof}

\begin{Lem}\label{lemmaB}
Let $P$ be a maximal clique of any given $n$-vertex graph $G=(V,E)$. 

Then, $i\in [n]$ satisfies $\closure\big(( P_{< i}\cap \Gamma(i) )\cup 
\{i\} \big)_{< i}=P_{< i}\cap\Gamma(i)$ if and only if 
there doesn't exist any index $j\in [i-1]\setminus (P_{< i}\cap\Gamma(i))$ 
such that the following condition hold:
\begin{enumerate}
\item[b'.] $j$ is adjacent to all vertices in $\big(P_{< i}\cap\Gamma(i)\big)\cup \{i\}$.
\end{enumerate}
\end{Lem}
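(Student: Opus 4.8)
The plan is to mirror, almost verbatim, the structure of the proof of Lemma~\ref{lemmaA}, since the two statements are the conditions (\emph{a}) and (\emph{b}) of Proposition~\ref{prop:children_char} restated in a ``witness-free'' form. Write $Q = P_{< i}\cap\Gamma(i)$ for brevity (so that the claim is: $\closure(Q\cup\{i\})_{< i} = Q$ if and only if no $j\in[i-1]\setminus Q$ is adjacent to all of $Q\cup\{i\}$). Both directions will follow by unwinding the definition of $\closure(\cdot)$ together with the characterization in Proposition~\ref{prop:char_closure}; no matrix machinery is needed here, that enters only afterwards.

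First I would prove the contrapositive of the ``if'' direction: suppose some $j\in[i-1]\setminus Q$ is adjacent to all vertices of $Q\cup\{i\}$. Then $Q\cup\{i\}\cup\{j\}$ is itself a clique that strictly contains $Q\cup\{i\}$, so by the greedy/lexicographic nature of $\closure(\cdot)$ there is a least such witness: scanning vertices in ascending order as in Proposition~\ref{prop:char_closure} (equivalently, as in Algorithm~\ref{ALGO:compute-C}), the completion $\closure(Q\cup\{i\})$ must pick up some $j'\le j$ with $j'\notin Q$ and $j' < i$, i.e.\ $j'\in\closure(Q\cup\{i\})_{< i}\setminus Q$. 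Hence $\closure(Q\cup\{i\})_{< i}\neq Q$, which is the negation of the left-hand side. For the opposite direction, assume no $j\in[i-1]\setminus Q$ is adjacent to all of $Q\cup\{i\}$; I would show $\closure(Q\cup\{i\})_{< i} = Q$ directly from the definition. The inclusion $Q\subseteq\closure(Q\cup\{i\})_{<i}$ is immediate since $Q\subseteq Q\cup\{i\}\subseteq\closure(Q\cup\{i\})$ and every element of $Q$ is $<i$. For the reverse inclusion, take any $v\in\closure(Q\cup\{i\})$ with $v<i$; by Proposition~\ref{prop:char_closure} applied with the clique $Q\cup\{i\}$, either $v\in Q\cup\{i\}$ — and since $v<i$ this forces $v\in Q$ — or there is some $z\in (Q\cup\{i\})\cup([v-1]\cap\closure(Q\cup\{i\}))$ with $v\notin\Gamma(z)$; the hypothesis rules the latter out for any $v\notin Q$ (such a $v$ would be a $j$ adjacent to all of $Q\cup\{i\}$, hence none of the required non-neighbours exists). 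Either way $v\in Q$, so $\closure(Q\cup\{i\})_{<i}\subseteq Q$, and equality holds.

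The only subtlety — and the step I expect to require the most care — is the ``least witness'' argument in the first direction: from the existence of \emph{some} $j\notin Q$ adjacent to all of $Q\cup\{i\}$ one must conclude that $\closure(Q\cup\{i\})$ actually contains some element $j'<i$ outside $Q$. This is exactly where the \emph{lexicographic} (as opposed to merely inclusion-maximal) choice of $\closure(\cdot)$ is used: among all vertices in $[i-1]\setminus Q$ that extend $Q\cup\{i\}$ to a larger clique, pick the smallest, $j^\star$; every vertex of $Q\cup\{i\}$ smaller than $j^\star$ is a neighbour of $j^\star$ (they lie in the clique $Q\cup\{i\}$), and every vertex in $([j^\star-1]\setminus Q)$ is, by minimality, a non-neighbour of \emph{some} vertex of $Q\cup\{i\}$ hence cannot have been added before $j^\star$ in the scan — so the scanning procedure of Algorithm~\ref{ALGO:compute-C} adds $j^\star$, giving $j^\star\in\closure(Q\cup\{i\})_{<i}\setminus Q$. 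Modulo spelling this out, the proof is routine, and the pseudocode of Algorithm~\ref{ALGO:compute-C} together with Proposition~\ref{prop:char_closure} supplies every fact needed.
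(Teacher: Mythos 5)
Your proposal is correct and follows essentially the same route as the paper's proof: one direction exhibits a least witness $j^\star\le j$ that must enter $\closure\big((P_{<i}\cap\Gamma(i))\cup\{i\}\big)_{<i}\setminus\big(P_{<i}\cap\Gamma(i)\big)$, and the converse follows from the definition of $\closure(\cdot)$ via Proposition~\ref{prop:char_closure}. In fact you spell out the ``least witness'' step that the paper's (very terse) proof only asserts, so nothing is missing.
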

\begin{proof}
Let $C=\closure\big(( P_{< i}\cap \Gamma(i) )\cup \{i\} \big)$ for some $i\in [n]$. 
%Notice that $K_{< i}\cap\Gamma(i)\subseteq K'_{< i}$.
%Thus, we need to focus on the opposite inclusion.
Assume that there exists $j\in [i-1]\setminus (P_{< i}\cap\Gamma(i))$ satisfying 
the (\emph{b'}) condition. Then, there exists $j'\leq j$ such that 
$j'\in C_{< i}\setminus \big(P_{< i}\cap\Gamma(i)\big)$. This implies 
$C_{< i}\neq \big(P_{< i}\cap\Gamma(i)\big)$. For the opposite direction, 
assume that there is no $j\in [i-1]\setminus (P_{< i}\cap\Gamma(i))$ satisfying 
the (\emph{b'}) condition. 
Then, $C_{< i}=P_{< i}\cap\Gamma(i)$
follows by definition of $C$ and $\closure(\cdot)$.
This implies that $C_{< i} = P_{\leq i}\cap\Gamma(i)$ 
if and only if there is no $j\in [i-1]\setminus \big(P_{\leq i}\cap\Gamma(i)\big)$ 
satisfying the (\emph{b'}) condition. 
\end{proof}

\section{Reduction to Rectangular Matrix Multiplication}\label{sect:reduction}

Given any maximal clique $P$ of $G=(V,E)$,  consider the problem of computing all the indices $i\in [n]\setminus (P\cup [i(P)])$ 
such that $\C(P,i)$ is a child of $P$ with index $i$. By Proposition~\ref{prop:children_char}, this amounts to check, for each 
$i\in [n]\setminus (P\cup [i(P)])$, whether both the conditions (\emph{a}) and (\emph{b}) hold on $i$ with respect to $P$. 
So, let us denote by $I_a^{P}$ and $I_b^{P}$ the sets of indices $i\in [n]\setminus (P\cup [i(P)])$ that 
satisfy the conditions (\emph{a}) and (\emph{b}) (respectively) of 
Proposition~\ref{prop:children_char} for some given maximal clique $P$ of $G$.  
Recall that the index $i(P)$ can be computed from $P$ within $O(n + m)$ time (by Proposition~\ref{compute-i}). 
The most expensive step is thus the computation of both $I_a^{P}$ and $I_b^{P}$ 
(which, recall, can always be done within $O(mn)$ time 
by performing at most $n$ computations of the lexicographical completion $\closure(\cdot)$).
Also, recall that this computation can be performed by checking the equivalent 
conditions (\emph{a'}), (\emph{a''}) for $I_a^P$ and (\emph{b'}) 
for $I_b^P$ given by  Lemma~\ref{lemmaA} and Lemma~\ref{lemmaB} (respectively).
As already mentioned in the previous section, in order to compute $I_a^{P}$ and $I_b^{P}$, 
Makino and Uno relied on fast square matrix multiplication, 
thus sharpening Tsukiyama's $O(mn)$ bound to $O(n^{\omega})$~\cite{MU04}.

At this point, we shall diverge from their approach in the following manner. 

\smallskip
\textbf{An Overview.} 
We denote by $\I^P$ the problem of computing the sets $I^P_a$ and $I^P_b$, 
with respect to some given maximal clique $P$ of any given $n$-vertex graph $G=(V,E)$. Moreover, we denote by $\B=\{P_1, \ldots, P_{|\B|}\}$ 
any batch (\ie family) of pairwise distinct maximal cliques of $G$. 
It is quite natural at this point to consider the problem $\I^\B$, namely, that of solving $\I^P$ for every $P\in\B$.
The intuition underlying our approach goes as follows: instead of solving each problem instance $\I^P$ separately (one after another,  
by reducing it to square matrix multiplication as in \cite{MU04}),  
we propose to group multiple maximal cliques into batches $\B$ and to solve the corresponding problem $\I^\B$ (for each batch $\B$), 
in one single shot, by reducing it to that of multiplying two rectangular matrices of size 
$|\B|\times n$ and $n\times n^2$. This rectangular matrix product 
can be performed in an asymptotically efficient way by adopting the algorithms devised by Le~Gall in~\cite{Gall12}.
As we will show in the forthcoming, the optimal size of $\B$ turns out to be $|\B|=|V|^2=n^2$. 
For this reason, we are going to deal with $n^2\times n$ by $n\times n^2$ matrix products.

\smallskip
\textbf{The Reduction.}

By virtue of Proposition~\ref{prop:children_char}, Lemma~\ref{lemmaA} and Lemma~\ref{lemmaB}, 
the problem of solving $\I^\B$ boils down in a straightforward manner   
to that of solving the following ‘‘\emph{kernel}’’ problem, which is denoted $\K^\B$ and defined in this way:
given $\B$ as input, for every maximal clique $P\in \B$, and for each pair of indices $(i,j)\in [n]\times [n]$, 
decide whether $(i,j)$ is \emph{good} with respect to $P$, namely, 
decide whether there exists $u\in P_{< i}\cap \Gamma(i)$ such that $\{u, j\}\not\in E$.
A \emph{solution} of $\K^\B$ is a mapping which assigns to each $P\in\B$ 
a boolean vector, denoted $[g^P_{ij}]_{ij}$ (where $g^P_{ij}\in\{\top,\bot\}$ for every $i,j\in [n]$), 
such that: 
\[
g^P_{ij}=
\left\{\begin{array}{ll}
\top, & \text{ if the pair } (i,j) \text{ is \emph{good} with respect to } P; \\ 
\bot, & \text{ otherwise.}
\end{array}\right.
\]

The rationale at the ground of these definitions 
clearly lies within Lemma~\ref{lemmaA} and Lemma~\ref{lemmaB}.
In fact, with these lemmata in mind, a moment's reflection reveals that once  
$\B\mapsto [g^P_{ij}]_{ij}$ has been determined, 
then, for each $P\in \B$, it is possible to solve $\I^P$ within time $O(n^2)$.

In summary, it is not difficult to see that these arguments allow one to solve $\I^\B$ within time: 
\[\texttt{Time}\big[\I^\B\big] = O\big(\texttt{Time}\big[\K^\B\big] + |\B|\,n^2\big).\]

Indeed, solving $\K^\B$ turns out to be the time bottleneck for solving $\I^B$.
The following proposition finally shows how to reduce $\K^\B$ to the problem 
of multiplying two rectangular matrices of size $|\B|\times n$ and $n\times n^2$.
\begin{Prop}[Reduction to Rectangular Matrix Multiplication]\label{prop:reduction}
Let $\B=\{P_1,\ldots, P_{|\B|}\}$ be a batch of maximal cliques of any given $n$-vertex graph $G=(V,E)$. 
Consider the $|\B|\times n$ matrix $M_\B$ whose $k$-th row, denoted $x_k$ for every $k\in [|\B|]$, 
is the characteristic vector $x_k=x(P_k)$.

For every $i,j\in [n]$, define the following subsets of $V$: 
\[A_i=V_{< i}\cap \Gamma(i) \text{ and } B_j=\Gamma(j).\] 
Let $M_G$ be the $n\times n^2$ matrix   
whose $(i,j)$-th column is the characteristic vector 
$x_{i,j}=x(A_i\setminus B_j)$.

Let $M_{\B,G}$ be the $|\B|\times n^2$ matrix obtained by performing the matrix product:
 \[M_{\B,G}=M_\B\, M_G.\]

For every $k\in[|\B|]$ and $i,j\in [n]$, 
denote by $M_{\B,G}[k, (i,j)]$ the particular entry of $M_{\B,G}$ whose row index is $k$ and 
whose column index is $(i,j)$. 
Finally, let us define: 
\[
g^P_{ij}=
\left\{\begin{array}{ll}
\top, & \text{ if } M_{\B, G}[k, (i,j)] > 0; \\ 
\bot, & \text{ otherwise}. \\
\end{array}\right.
\]
Then, the mapping $P_k\mapsto [g^{P_k}_{ij}]_{ij}$, 
which is defined for every $P_k\in\B$, is a correct solution of $\K^\B$.
\end{Prop}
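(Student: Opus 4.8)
The plan is to compute the generic entry of the product $M_{\B,G}=M_\B M_G$ explicitly, to recognize it as the cardinality of a concrete vertex set, and then to observe that this set is non-empty exactly when the corresponding pair is \emph{good}. First I would fix $k\in[|\B|]$ and $i,j\in[n]$ and expand the inner product defining the entry of $M_{\B,G}$ in row $k$ and column $(i,j)$:
\[
M_{\B,G}[k,(i,j)] \;=\; \sum_{v\in V} M_\B[k,v]\cdot M_G[v,(i,j)] \;=\; \sum_{v\in V} x(P_k)_v\cdot x\bigl(A_i\setminus B_j\bigr)_v .
\]
Since both factors are $0/1$ valued, the $v$-th summand equals $1$ iff $v\in P_k$ and $v\in A_i\setminus B_j$, so the whole sum counts the vertices of $P_k\cap(A_i\setminus B_j)$; that is, $M_{\B,G}[k,(i,j)] = \bigl|\,P_k\cap(A_i\setminus B_j)\,\bigr|$.

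Next I would rewrite that set using the definitions of $A_i$ and $B_j$. Substituting $A_i=V_{<i}\cap\Gamma(i)$ and $B_j=\Gamma(j)$, and using $P_k\cap V_{<i}=(P_k)_{<i}$ (as $V=[n]$), we obtain
\[
P_k\cap(A_i\setminus B_j)\;=\;\bigl((P_k)_{<i}\cap\Gamma(i)\bigr)\setminus\Gamma(j).
\]
Consequently $M_{\B,G}[k,(i,j)]>0$ holds if and only if there is some $u\in(P_k)_{<i}\cap\Gamma(i)$ with $u\notin\Gamma(j)$, i.e.\ with $\{u,j\}\notin E$ — which is exactly the requirement that the pair $(i,j)$ be \emph{good} with respect to $P_k$ in the sense of $\K^\B$. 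Hence the value $g^{P_k}_{ij}$ prescribed in the statement ($\top$ when the entry is positive, $\bot$ otherwise) agrees with the intended bit of the solution vector, so $P_k\mapsto[g^{P_k}_{ij}]_{ij}$ is a correct solution of $\K^\B$, as claimed.

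I expect no genuine obstacle: the argument is a routine unwinding of the definitions of characteristic vectors, matrix multiplication, and the sets $A_i,B_j$. The two points that need a little care are (i) keeping the fixed bijection between the $n^2$ columns of $M_G$ (hence of $M_{\B,G}$) and the pairs $(i,j)\in[n]\times[n]$ consistent throughout the computation, and (ii) recalling that $G$ is simple, so $j\notin\Gamma(j)$; this makes the degenerate case $u=j$ consistent with the convention $\{j,j\}\notin E$ implicit in the definition of goodness, so that no boundary case falls outside the counting identity above.
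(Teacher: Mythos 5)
Your proof is correct and follows essentially the same route as the paper's: both reduce to the observation that $(i,j)$ is good w.r.t.\ $P_k$ iff $P_k\cap(A_i\setminus B_j)\neq\emptyset$, which is detected by positivity of the corresponding entry of $M_\B M_G$. You merely run the chain of equivalences in the opposite direction (from the matrix entry to goodness rather than from goodness to the set intersection) and are slightly more explicit about expanding the inner product and about the harmless $u=j$ boundary case.
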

\begin{proof}
To start with, let us fix $P\in\B$ and $i,j\in [n]$ arbitrarily.
Observe that $j\in V$ is adjacent to all the vertices in $P_{< i}\cap \Gamma(i)$ if and only if 
$\big( P_{< i}\cap \Gamma(i)\big) \setminus \Gamma(j)=\emptyset$. Equivalently, \[(i,j)\in [n]\times[n] \text{ is good \wrt $P$  } \iff
 P \cap\Big(\big( V_{< i}\cap \Gamma(i)\big) \setminus \Gamma(j)\Big)\neq\emptyset.\] 
Clearly, $\big( V_{< i}\cap \Gamma(i)\big) \setminus \Gamma(j)$ 
depends only on $i,j$ and not on $P$, so one can safely write this set as $A_i\setminus B_j$. 
Thus, in order to assess whether $(i,j)$ is good with respect to $P$, 
it is sufficient to check whether $P\cap (A_i\setminus B_j)\neq \emptyset$.
Let $k\in[|\B|]$ be the index of $P$ in $\B$, \ie assume that $P=P_k\in\B$. 
By definition of $M_\B, M_G$ and from the fact that $M_{\B, G}=M_\B\, M_G$, 
the following holds: \[P\cap(A_i\setminus B_j)\neq \emptyset \text{ if and only if } P_{\B, G}[k, (i,j)]>0.\]
This implies the thesis and concludes the proof. 
\end{proof}

\textbf{Time Complexity of the Reduction.}
We now focus on the time complexity of the reduction described in Proposition~\ref{prop:reduction}. 
To begin, as shown in Appendix~\ref{app:time_complexity}, 
and according to current upper bounds on rectangular fast matrix multiplication \cite{Gall12, Huang98}, 
the optimal size of the batch $\B$ turns out to be $|\B|=n^2$. 
Let us briefly retrace the argument that led us to this result, the full details are given in Appendix~\ref{app:time_complexity}.
Recall from Proposition~\ref{prop:reduction} that 
$M_{\B,G}$ can be computed by performing an $|\B|\times n$ by $n\times n^2$ matrix product.
Also recall that, by computing $M_{\B,G}$, one actually solves in one single shot $|\B|$ problem instances, 
\ie $\I^P$ for every $P\in\B$. Let $k\in\Q$ be such that $|\B|=\lfloor n^k\rfloor$. 
Then, computing $M_{\B,G}$, the \emph{amortized} time $\texttt{Time}\big[\I^{P}\big]$ 
for solving each problem $\I^P$ for $P\in B$ becomes the following,  
where $\texttt{Time}\big[M_{\B,G}\big]$ denotes the time it takes to compute $M_{\B}\cdot M_{G}$:
\begin{align*}
\texttt{Time}\big[\I^{P}\big] &   = O\Big(\frac{\texttt{Time}\big[\I^{\B}\big]}{|\B|}\Big) 
                                  = O\Big(\frac{\texttt{Time}\big[\K^{\B}\big]+|\B|n^2}{|\B|}\Big) \\ 
			      &   = O\Big(\frac{\texttt{Time}\big[M_{\B,G}\big]}{\lfloor n^k \rfloor} + n^2\Big) \\ 
			      &   = O\Big( n^{\omega(k, 1, 2)-k} + n^2\Big) \\  
\end{align*}
Our aim would be to find $k\in [0, +\infty)\cap\Q$ such that $\omega(k, 1, 2)-k$ attains its global minimum value. 
Even though the exact values of $\omega(k, 1, 2)$ are currently unknown, 
one can nevertheless minimize functions arising from state of the art upper bounds on $\omega(k, 1, 2)$.
These upper bounds have been derived within the framework of so-called \emph{bilinear} algorithms, see e.g.~\cite{Huang98, Gall12}.
For instance, in this work we consider the bound $f_{\text{HP98}}$ of Huang and Pan~\cite{Huang98}, 
and then the bound $f_{\text{LG12}}$ of Le~Gall~\cite{Gall12} 
(in particular, $f_{\text{LG12}}$ leads to the best upper bound on $\omega(2, 1, 2)$ which is currently known). 
Here, both estimates are be applied in such a way as to bound $\omega(k,1,2)-k$ from above.
The corresponding functions, that we aim to minimize, are denoted $g_1$ and $g_2$. 
Their behaviour is shown in \figref{fig:plot_bound}. 
\begin{figure}[h!] 
\centering
\begin{tikzpicture}

\newcommand\alphafm{0.30298}
\newcommand\omegafm{2.3728639}

\begin{axis}[xmax=5, ymax=3.5, xlabel={$k$}, 
legend entries={\text{$g_1(k)$, ref.\cite{Huang98}},
                \text{$g_2(k)$, ref.\cite{Gall12}}}, 
legend pos=outer north east
]
\addlegendimage{only marks,blue}
\addlegendimage{only marks,red}
  \addplot[
	blue, ultra thick,   
	domain= 0 : \alphafm,
	samples=50
  ] { 3-x };

  \addplot[
	blue, ultra thick,  
	domain= \alphafm : 1,
	samples=50
  ] { (2*(1-\alphafm) + (1-x) + (\omegafm -1)*(x-\alphafm))/(1-\alphafm)-x };

  \addplot[
	scatter,
	only marks, 
	point meta=explicit symbolic, 
	scatter/classes={
		a={mark=square*, blue} % 
	},
   ] table[meta=label] {
	x   y          label 
  	1   2.334   a  
  };

  \addplot[
	blue, ultra thick, 
	domain= 1 : 2,
	samples=50
  ] { x*( ( (2/x)*(1-\alphafm) + (1-(1/x)) + (\omegafm-1)*((1/x)-\alphafm) ) / (1-\alphafm) ) - x };

  \addplot[
	blue, ultra thick,  
	domain= 2 : 5,
	samples=50
  ] { 2*( (x/2)*(1-\alphafm) + 0.5 + (\omegafm-1)*(0.5-\alphafm) ) / (1-\alphafm) - 0.99*x - 0.02 };

  \addplot[
	scatter,
	only marks, 
	point meta=explicit symbolic, 
	scatter/classes={
		a={mark=square*, red} % 
	},
   ] table[meta=label] {
	x   y          label 
  	1   2.256689   a  
  };

  \addplot[red, dotted] coordinates {
  	(1, 2.256689) (1/0.5, 2.093362)   
  };

  \addplot[red, ultra thick, dashed] coordinates {
  	(1/0.5, 2.093362) (1/0.45, 2.2824489) (1/0.4, 2.5304375) (1/0.35, 2.8653429) (1/0.34, 2.9463853)
        (1/0.33, 3.0331485) (1/0.32, 3.1261594) (1/0.31, 3.2260097) (1/0.30298, 3.3005479)
  };

  \addplot[
	red, ultra thick, dashed,   
	domain= 1/0.30298 : 5,
	samples=50
  ] { x*2-x };

\end{axis} 
\end{tikzpicture} 
\caption{Plot of $g_1(k)$ for $k\in[0, +\infty)$, and of $g_2(k)$ for $k\in \{1\}\cup[2, +\infty)$.}\label{fig:plot_bound} 
\end{figure}
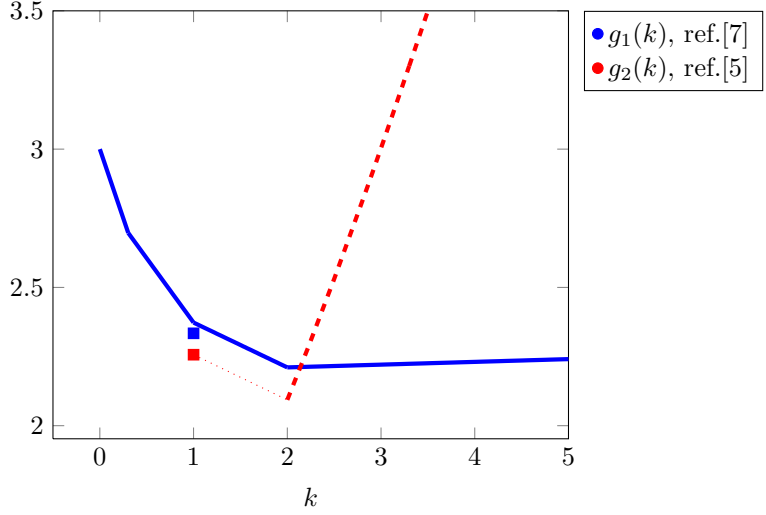 

The first function $g_1$ is defined as follows: \[g_1(k)= f_{\text{HP98}}(k)-k,\text{ for every } k\in[0, +\infty),\] 
here, $f_{\text{HP98}}:[0, +\infty)\rightarrow\RR$ 
is a piecewise-linear function, which was essentially pointed out by Huang and Pan in~\cite{Huang98}, 
and it satisfies $\omega(k,1,2)\leq f_{\text{HP98}}(k)$ for every $k\in [0, +\infty)\cap\Q$.
An analytic closed-form formula for $f_{\text{HP98}}$ is derived in Appendix~\ref{app:time_complexity}. 
Here, we just mention that $g_1(k)$ attains its global minimum value at $k = 2$, 
\ie \[ g_1(2)= \min_{k\in [0, +\infty)} g_1(k) = 2.2107878\] 
The qualitative behaviour of $g_1$ is traced in \figref{fig:plot_bound} with a filled blue colored line. 

In a similar way, the second function $g_2$ is defined as follows: 
\[ g_2(k) = k f_{\text{LG12}}(k)-k, \text{ for every } k\in \{1\}\cup [2, +\infty). \] 
here, $g_2$ takes into account the upper bound $f_{\text{LG12}}(k)$ for 
$\omega(1,1,1/k)$, which was established by Le~Gall~in~\cite{Gall12}.
We remark that, at the current state of art, the upper bounds of Le~Gall apply to $\omega(r,s,t)$ if and only if $r=s$.
For this reason, when $k\in [2,+\infty)\cap\Q$, we applied Le~Gall's bounds on 
$\omega(k,1,2)$ by relying on the following upper bound: 
\begin{align*}
\omega(k,1,2) & \leq  \omega(k,1,k) & \text{(for every $[2,+\infty)\cap\Q$)} \\ 
              &  =   k\,\omega(1,1,1/k) & \text{(by homogeneity)} \\
	      & \leq k f_{\text{LG12}}(k). &  
\end{align*}
In addition, we applied Le~Gall's bounds on $n\times n$ by $n\times n^2$ matrix products by considering the complexity 
exponent $\omega(1,1,2)$, which is actually one of those explicitly studied by Le~Gall~in~\cite{Gall12}. 
Notice that, when $k\in (1,2)$, it is not possible to apply the results of Le~Gall~\cite{Gall12} 
to bound $\omega(k,1,2)$, because $k\neq 1, k\neq 2$ and $1\neq 2$ so the above mentioned condition (\ie that $r=s$ in $\omega(r,s,t)$) doesn't apply in that case. 
This explains why $g_2(k)$ is defined on $k\in \{1\}\cup [2, +\infty)$. 
The qualitative behaviour of $g_2$ is traced in \figref{fig:plot_bound} with a dashed red colored line. 
Concerning the global minimization of $g_2$, it turns out\footnote{See Appendix~\ref{app:time_complexity} for the details.} that: 
\[
g_2(2) = \min_{k\in \{1\}\cup [2, +\infty)} g_2(k) = 2.093362.
\]
In summary, both estimates~\cite{Gall12}~and~\cite{Huang98} indicate that the minimum complexity comes at $k=2$, 
namely, they both indicate that the optimal size of the batch $\B$ is given by $|\B|=n^2$. 

From now on, let us fix the size of $\B$ to be $|\B|=n^2$. 
Then, by Proposition~\ref{prop:reduction},  
$M_{\B,G}$ can be computed by performing an $n^2\times n$ by $n\times n^2$ matrix product. 
Assuming $N=n^2$, it is equivalent to consider matrix products of 
type $N\times \lfloor N^{1/2} \rfloor$ by $\lfloor N^{1/2}\rfloor\times N $. As shown by Le~Gall in~\cite{Gall12}, 
the corresponding complexity exponent, which is $\omega(1,1,1/2)$, satisfies $\omega(1,1,1/2)\leq 2.046681$. 

Thus, $M_{\B,G}$ can be computed within the following time bound: 
\[\texttt{Time}\big[M_{\B,G}\big] = O\big(N^{\omega(1,1,1/2)}\big) = O\big(n^{2{\omega(1,1,1/2)}}\big) = O(n^{4.093362}).\] 
In this way we obtain: 
\begin{align*}
\texttt{Time}\big[\I^\B\big] & = O\Big(\texttt{Time}\big[\K^\B \big] + |\B|n^2 \Big) 
                               = O\Big(\texttt{Time}\big[M_{\B,G} \big] + |\B|n^2 \Big) \\ 
                             & = O\big(n^{2\omega(1,1,1/2)}\big)=O(n^{4.093362}).
\end{align*}

Then, each problem instance $\I^P$ for $P\in\B$ gets solved within the following time bound: 
\begin{align*}
\texttt{Time}\big[\I^{P}\big] & = O\Big(\frac{\texttt{Time}\big[\I^{\B}\big]}{|\B|}\Big) \\ 
                          %    & = O\Big(\frac{n^{2\omega(1,1,1/2)}}{n^2}+n^2\Big) \\
                              & = O\big(n^{2\omega(1,1,1/2)-2}\big) = O(n^{2.093362}).
\end{align*}
which, we remark, it is an \emph{amortized} time bound across $n^2$ problem instances.

\paragraph{An Algorithm for Solving $\I^\B$.}
The pseudocode for solving an instance of $\I^\B$ by reducing it to rectangular matrix 
multiplication is shown below in Algorithm~\ref{ALGO:solve-IB}. 

\smallskip
\begin{algorithm}[H]\label{ALGO:solve-IB}
\caption{Solving $\I^\B$ by Reduction to Rectangular Matrix Multiplication.} 
\DontPrintSemicolon
\nonl \SetKwProg{Fn}{Procedure}{}{}
\Fn{\texttt{Solve\_Rectangular\_$\I$}$(\B)$}{
\KwIn{A batch of (exactly) $n^2$ maximal cliques $\B=\{P_1, \ldots, P_{n^2}\}$ of $G=(V,E)$.}
\KwOut{A vector $L_\B$ representing all children of every $P\in\B$, \ie $L_\B =\big\{(P, \texttt{list}_P) \mid P\in\B,\;\; 
\forall\, i\in[n]:\; i\in\texttt{list}_P \text{ iff } \C(P,i)\text{ is a child of } P \text{ in } \T_G\big\}$.} 
$A_i\leftarrow V_{< i}\cap\Gamma(i)$, for every $i\in [n]$;\;
$B_j\leftarrow \Gamma(j)$, for every $j\in [n]$;\;
$M_\B\leftarrow$ the $n^2\times n$ matrix whose $k$-th row is $x_k=x(P_k)$;\;
$M_G\leftarrow$ the $n\times n^2$ matrix whose $(i,j)$-th column is $x_{i,j}=x(A_i\setminus B_j)$;\; 
$M_{\B,G}\leftarrow \texttt{Rectangular\_Fast\_Matrix\_Multiplication}(M_\B, M_G)$;\;
$g^{P_k}_{ij}\leftarrow
\left\{\begin{array}{ll}
\top, & \text{ if } M_{\B, G}[k, (i,j)] > 0 \\ 
\bot, & \text{ otherwise} \\
\end{array}\right.(\text{for every } P_k\in\B \text{ and } i,j\in [n])$;\;
$L_\B\leftarrow$ a vector of lists (of integers), one $\texttt{list}_P$ for each $P\in\B$;\;
\ForEach{$P\in\B$}{
$J_P\leftarrow\{j\in [n]\mid j \text{ adjacent to all vertices in } {P}_{< j}\}$;\;
$i_{P}\leftarrow \texttt{compute-i}(P)$;\;
\ForEach{$i\in [i_P+1, n]\cap\N$ s.t. $i\not\in P$}{
$\texttt{good}\leftarrow\top$;\;
\ForEach{$j\in [1, i-1]\cap\N$}{
\lIf{$g^{P}_{ij}=\bot$ \texttt{\bf{and}} $\Big( \big(j\not\in P_{< i}\cap\Gamma(i) 
\texttt{\bf{ and }} \{j, i\}\in E \big) \texttt{\bf{or}}
 \big(j\not\in P \texttt{\bf{ and }} j\in J_P \big) \Big)$ }{
$\texttt{good}\leftarrow\bot$; 
}
}
\If{$\texttt{good}=\top$}{
$L_\B\leftarrow$ add the index $i$ to the list $\texttt{list}_P$; 
}
}
}
\Return{$L_\B$;}
}
\end{algorithm}
\smallskip

The algorithm works as follows. From line~1~to~5, the variables 
$\{A_i\}_{i=1}^n$, $\{B_j\}_{j=1}^n$ and the matrices $M_\B$, $M_G$ are initialized as they were defined in Proposition~\ref{prop:reduction}. 
Then, at line~5, the matrix $M_{\B, G}=M_\B\,M_G$ is computed by invoking a rectangular fast matrix 
multiplication algorithm, \eg the procedure devised by Le~Gall~in~\cite{Gall12}, on input $\langle M_\B, M_G\rangle$.
At this point, Algorithm~\ref{ALGO:solve-IB} is in position to compute, for each maximal clique $P\in\B$, 
all the good indices $i\in [n]$, namely, those that lead to a legitimate child $C=\C(P,i)$ of $P$. 
This is done by checking the entries of $M_{\B, G}$ corresponding to all the indices $(i,j)$ 
(as prescribed by Proposition~\ref{prop:children_char}, Lemma~\ref{lemmaA}, 
Lemma~\ref{lemmaB}, and Proposition~\ref{prop:reduction}). Whenever an index $i$ is found good for a maximal clique $P\in\B$, 
at line~15 of Algorithm~\ref{ALGO:solve-IB}, then $i$ gets added to a list at line~16, denoted $\texttt{list}_P$,  
which aims to collect all (and only) the good indices of $P$. At the end of the computation, 
the vector $L_\B$ containing all the pairs $(P, \texttt{list}_P)$ is returned at line~17.
This concludes the description of Algorithm~\ref{ALGO:solve-IB}.
Notice that, since $|\B|=n^2$ and every $P\in\B$ has at most $n$ children, then the space usage of $L_\B$ is $O(n^3)$.

The following proposition asserts the 
correctness and the time complexity of Algorithm~\ref{ALGO:solve-IB}.

\begin{Prop}\label{prop:rectangular} 
Let $G=(V,E)$ be an $n$-vertex graph. 
Consider any invocation of Algorithm~\ref{ALGO:solve-IB} on input $\B$, 
where $\B=\{P_1, \ldots, P_{n^2}\}$ is batch of $n^2$ maximal cliques of $G$. 

Then, the procedure correctly outputs a vector $L_{\B}$ such that: 
\[L_\B=\big\{(P, \texttt{list}_P) \mid P\in\B,\;\; 
\forall\, i\in[n]:\; i\in\texttt{list}_P \text{ iff } \C(P,i)\text{ is a child of } P \text{ in } \T_G\big\}.\] 
Moreover, Algorithm~\ref{ALGO:solve-IB} always halts within the following time bound: 
\[\texttt{Time}\big[\text{Algorithm~\ref{ALGO:solve-IB}}\big] = O(n^{2\omega(1,1,1/2)})=O(n^{4.093362}).\] 
In this manner, each problem instance $\I^P$ gets solved within amortized time: 
\begin{align*} 
\texttt{Time}\big[\I^{P}\big] & = O(n^{2\omega(1,1,1/2)-2})=O(n^{2.093362}), 
\end{align*} 
which is an amortized time across $n^2$ problem instances. 

Finally, the procedure employs $O(n^4)$ working space. 
\end{Prop}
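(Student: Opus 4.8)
The plan is to establish the three claims of Proposition~\ref{prop:rectangular} in order: correctness of the output vector $L_\B$, the running time bound, and the working-space bound. For \textbf{correctness}, I would argue that for every $P\in\B$, the index $i$ is added to $\texttt{list}_P$ if and only if $\C(P,i)$ is a child of $P$ in $\T_G$. By Proposition~\ref{prop:children_char}, this holds exactly when $i\notin P\cup[i(P)]$ together with conditions (\emph{a}) and (\emph{b}). The outer loop over $i\in[i_P+1,n]\cap\N$ with $i\notin P$ handles the membership condition $i\notin P\cup[i(P)]$ directly (using that $i_P=i(P)$ is computed correctly by Proposition~\ref{compute-i}). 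For the remaining conditions, I would invoke Lemma~\ref{lemmaA} and Lemma~\ref{lemmaB}: condition (\emph{a}) fails iff there is $j\in[i-1]\setminus P$ adjacent to all of $P_{<j}$ (i.e., $j\in J_P$) and adjacent to all of $P_{<i}\cap\Gamma(i)$; condition (\emph{b}) fails iff there is $j\in[i-1]\setminus(P_{<i}\cap\Gamma(i))$ adjacent to all of $(P_{<i}\cap\Gamma(i))\cup\{i\}$. The key bridge is Proposition~\ref{prop:reduction}: ``$j$ adjacent to all vertices in $P_{<i}\cap\Gamma(i)$'' is precisely the negation of $(i,j)$ being \emph{good} with respect to $P$, i.e., it is equivalent to $g^P_{ij}=\bot$. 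So the inner test at line~14 — namely $g^P_{ij}=\bot$ together with the disjunction $\big((j\notin P_{<i}\cap\Gamma(i)\text{ and }\{j,i\}\in E)\text{ or }(j\notin P\text{ and }j\in J_P)\big)$ — fires exactly when such a ``bad'' $j$ witnessing the failure of (\emph{a}) or (\emph{b}) exists, in which case $\texttt{good}$ is set to $\bot$; conversely if no such $j$ exists, both (\emph{a}) and (\emph{b}) hold. Hence $i$ is kept iff $\C(P,i)$ is a genuine child, which is the claimed characterization of $L_\B$.

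For the \textbf{running time}, I would decompose the cost. Lines~1--4 build $A_i,B_j$ for $i,j\in[n]$ and the matrices $M_\B$ ($n^2\times n$) and $M_G$ ($n\times n^2$); each characteristic vector has length $n$ and $M_G$ has $n^2$ columns, so this costs $O(n^3)$ (computing each $A_i\setminus B_j$ naively in $O(n)$ time). Line~5 is the bottleneck: the rectangular product $M_\B\cdot M_G$ is an $n^2\times n$ by $n\times n^2$ product, which by the discussion preceding the proposition (setting $N=n^2$, a product of type $N\times\lfloor N^{1/2}\rfloor$ by $\lfloor N^{1/2}\rfloor\times N$) costs $O(N^{\omega(1,1,1/2)})=O(n^{2\omega(1,1,1/2)})=O(n^{4.093362})$ using Le~Gall's algorithm~\cite{Gall12}. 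Line~6 reads off the $n^2\cdot n^2=n^4$ thresholded entries in $O(n^4)$ time — but note $n^4 = o(n^{2\omega(1,1,1/2)})$ since $2\omega(1,1,1/2)\ge 4.093>4$, so this is absorbed. The remaining nested loops (lines~8--16): for each of the $n^2$ cliques $P$ we compute $J_P$ in $O(n^2)$ time (testing for each $j$ whether $j$ is adjacent to all of $P_{<j}$), compute $i_P$ in $O(m+n)=O(n^2)$ time by Proposition~\ref{compute-i}, and then run over $O(n)$ choices of $i$ and $O(n)$ choices of $j$ with $O(1)$ work per pair (the sets $P_{<i}\cap\Gamma(i)$ can be precomputed, or membership tested in $O(1)$ with an adjacency-matrix and a marker array), for $O(n^2)$ total; summing over $P\in\B$ gives $O(n^2\cdot n^2)=O(n^4)$, again absorbed into $O(n^{2\omega(1,1,1/2)})$. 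Therefore the total is $O(n^{2\omega(1,1,1/2)})=O(n^{4.093362})$, and since $|\B|=n^2$ problem instances $\I^P$ are solved at once, the amortized cost per instance is $O(n^{2\omega(1,1,1/2)-2})=O(n^{2.093362})$.

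For the \textbf{working space}: $M_\B$ has $n^2\cdot n=n^3$ entries, $M_G$ has $n\cdot n^2=n^3$ entries, and $M_{\B,G}$ has $n^2\cdot n^2=n^4$ entries; the thresholded array $[g^{P_k}_{ij}]$ likewise has $n^4$ entries, and $L_\B$ uses $O(n^3)$ as already noted (each of $n^2$ cliques with at most $n$ children). The rectangular fast matrix multiplication routine itself can be implemented within $O(n^4)$ auxiliary space (polynomial in the output size). Hence the dominant term is $O(n^4)$, giving the claimed working-space bound.

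\textbf{Main obstacle.} The routine decomposition above hides one genuinely delicate point: verifying that the single inner test at line~14 faithfully simultaneously encodes \emph{both} the failure of condition (\emph{a}) (via Lemma~\ref{lemmaA}'s (\emph{a'}),(\emph{a''})) and the failure of condition (\emph{b}) (via Lemma~\ref{lemmaB}'s (\emph{b'})). One must check carefully that the two index-range restrictions $j\in[i-1]\setminus P$ versus $j\in[i-1]\setminus(P_{<i}\cap\Gamma(i))$ are correctly captured by the disjuncts $(j\notin P\text{ and }j\in J_P)$ and $(j\notin P_{<i}\cap\Gamma(i)\text{ and }\{j,i\}\in E)$ respectively, and that in both cases the common conjunct $g^P_{ij}=\bot$ — i.e., $j$ adjacent to all of $P_{<i}\cap\Gamma(i)$, by Proposition~\ref{prop:reduction} — is exactly the shared adjacency requirement (for (\emph{a''}) it is literally that set; for (\emph{b'}) it is that set together with $\{j,i\}\in E$, which is why the $\{j,i\}\in E$ conjunct is needed in the first disjunct but not the second, since $j\in P_{<i}\cap\Gamma(i)$ being false but $j$ nonetheless in $J_P\setminus P$ already forces adjacency to $i$ to be checked separately). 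Getting this boolean bookkeeping exactly right — including the subtle asymmetry that a vertex $j\notin P$ with $j\in J_P$ may or may not be adjacent to $i$ — is the one place where the proof requires real care rather than mechanical estimation.
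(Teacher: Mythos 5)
Your proposal is correct and follows essentially the same route as the paper, whose own proof simply cites Proposition~\ref{prop:children_char}, Lemma~\ref{lemmaA}, Lemma~\ref{lemmaB} and Proposition~\ref{prop:reduction} for correctness, points to the preceding rectangular-matrix-multiplication analysis for the running time, and notes that $M_{\B,G}$ has $n^4$ entries for the space bound; you merely spell out (correctly) the details the paper declares ``straightforward'', in particular the boolean test at line~14. One cosmetic slip: $4.093362$ is an \emph{upper} bound on $2\omega(1,1,1/2)$, so to absorb the $O(n^4)$ bookkeeping you should invoke the trivial lower bound $2\omega(1,1,1/2)\geq 4$ (the output matrix has $n^4$ entries), not ``$2\omega(1,1,1/2)\geq 4.093$''.
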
 
\begin{proof} 
The correctness of Algorithm~\ref{ALGO:solve-IB} follows straightforwardly from Propostion~\ref{prop:children_char}, 
Lemma~\ref{lemmaA}, Lemma~\ref{lemmaB} and Proposition~\ref{prop:reduction}. The running time has been analyzed already in the previous paragraph. 
The space bound comes from the fact that $M_{\B,G}$ has $n^4$ entries, being it of size $n^2\times n^2$. 
\end{proof}

%Notice that, by Proposition~\ref{prop:rectangular}, each problem $\I^{P_k}$ gets solved in time: 
%\[ \texttt{Time}\big[\I^{P_k}\big] = O(n^{2\omega(1,1,1/2)-2})=O(n^{2.093362}),\]
%which is an \emph{amortized} time across $n^2$ problem instances $\I^{P_k}$.

\emph{A Remark on the Time Delay.} 
We wish to notify that it is actually possible to obtain a listing algorithm with a 
rigorous $O(n^{2\omega(1,1,1/2)-2})$ time delay: this can be done by introducing a queueing scheme,  
which firstly collects a certain polynomially bounded amount of maximal cliques of $G$ as a bootstrapping phase. 
The details of the queueing scheme are given in Subsection~\ref{subsect:strict_delay}.

\section{Batch-DFS Backtracking}\label{sect:backtracking} 
In the previous section we described how to reduce $\I^\B$ to rectangular matrix multiplication. 
Nevertheless, the description of our {\mainproblem} procedure is not yet complete:
since the algorithm needs to traverse the entire RS-tree $\T_G$ (consuming only polynomial space),
a careful backtracking procedure must be taken into account to keep the search process going on.
We propose an abstract backtracking scheme, named \emph{Batch-DFS}, 
which will make the skeleton of our {\mainproblem} solution.

\smallskip
\textbf{An Overview of Batch-DFS.}
Let $\T$ be an $n$-ary (rooted) tree of height at most $n$, for some $n\in\N$. We denote by $K_0$ the root of $\T$. 
Assume we are given a procedure $\texttt{children()}$, which takes as input a batch $\B$ of nodes of $\T$ 
and returns as output a vector $L_{\B}$ containing all the children $C$ of $P$, for every $P\in\B$. 
Notice that $0\leq |L_{\B}|\leq |\B|\,n$. Assuming we aim to visit all nodes of $\T$, 
meanwhile consuming only a polynomial amount of memory in $n$, 
then our first choice would have been to perform a DFS on $\T$. 
Nevertheless, we now show that it is possible to explore $\T$ by (somehow) 
grouping the search of new children nodes into batches $\B$, 
provided that: (1) the size of each batch is polynomially bounded in $n$, 
and (2) the backtracking phase is performed on a LIFO policy. 
The pseudocode given below encodes Batch-DFS, which will be  
our reference model of backtracking for directing the search process towards yet unexplored nodes.

\smallskip
\begin{algorithm}[H]\label{ALGO:solve}
\caption{Batch Depth-First Search.}
\DontPrintSemicolon
\nonl \SetKwProg{Fn}{Procedure}{}{}
\Fn{$\texttt{batch\_DFS}(K_0, \texttt{children(), B})$}{
\KwIn{the root $K_0$ of $\T$, where $\T$ is a tree having height $n\in\N$; 
also, a procedure $\texttt{children()}$ which takes as input a batch $\B$ of nodes of $\T$ and 
returns as output a vector $L_{\B}$ containing all the children $C$ of $P$ for every $P\in\B$;
finally, the capacity $B\in\N$, $B>0$, of any batch $\B$.}
\KwOut{a listing of all the nodes $K$ of $\T$.}
$S_{\texttt{bt}}\leftarrow \{K_0\}$;\tcp{$S_{\texttt{bt}}$ is a backtracking stack, implemented as a LIFO stack and initialized with $K_0$}
\While{$S_{\texttt{bt}}\neq\emptyset$}{
$\B\leftarrow\emptyset$;\tcp{$\B$ is a batch of nodes of $\T$ with capacity $B$, initialized to be empty}
\While{$|\B|<B$ \textbf{and} $S_{\texttt{bt}}\neq\emptyset$ }{
$P\leftarrow\texttt{pop\_from\_top}(S_{\texttt{bt}})$;\tcp{remove one single node $P$ from the top of $S_{\texttt{bt}}$}
$\B\leftarrow\B\cup\{P\}$;\;
$\texttt{print}(P)$;\tcp{print $P$ as output}
}
$L_{\B}\leftarrow \texttt{children}(\B)$;\tcp{the vector of all children $C$ of $P$, for every $P\in\B$}
$S_{\texttt{bt}}\leftarrow$ $\texttt{push all elements of $L_{\B}$ on top of $S_{\texttt{bt}}$}$;\;
}}
\end{algorithm}
\smallskip 

\textbf{Description of Batch-DFS.}
To start with, Algorithm~\ref{ALGO:solve} takes the following input: the root $K_0$ of $\T$; moreover, 
a procedure $\texttt{children()}$ (which is supposed to take in input a batch $\B$ of nodes of $\T$, 
and to return a vector $L_{\B}$ containing all children $C$ of $P$, for every $P\in\B$);
finally, a positive number $B\in\N$, representing the fixed capacity of any batch $\B$ collected at lines~4-7.
The procedure aims to provide a listing of all the nodes $K$ of $\T$.

Going into its details, Algorithm~\ref{ALGO:solve} works as follows. 
A LIFO stack $S_{\texttt{bt}}$ is maintained in order to direct the search of yet unexplored nodes.
Initially, $S_{\texttt{bt}}$ contains only the root $K_0$ of $\T$ (line~1).
Then, the procedure enters within a \texttt{while-loop}, which lasts until $S_{\texttt{bt}}\neq\emptyset$ at line~2.
Herein, the procedure tries to collect a batch $\B$ of exactly 
$|\B|=B$ nodes, picking out new nodes (as needed) from the top of the stack $S_{\texttt{bt}}$ at line~5.
Every node $P$ that is removed from $S_{\texttt{bt}}$ at line~5, and then inserted into $\B$ at line~6, 
is also printed out at line~7. Observe that even if the size of the batch fails to reach the amount $B$, 
\ie even if it happens ``$|\B|<B$ and $S_{\texttt{bt}}=\emptyset$" at line~4, then Algorithm~\ref{ALGO:solve}
moves on anyway. At line~8 the procedure $\texttt{children()}$ is invoked on input $\B$, 
in order to generating the vector $L_{\B}$ containing all children nodes $C$ of $P$, for every $P\in\B$.
Soon after, each of such child node $C$ is pushed on top of $S_{\texttt{bt}}$ (see line~9).

As already mentioned,  Algorithm~\ref{ALGO:solve} halts as soon as the condition 
``$S_{\texttt{bt}}=\emptyset$" holds at line~2.

\smallskip
\textbf{An Analysis of Batch-DFS.} 
The following propositions starts our analysis of Batch-DFS 
by showing that every node $K$ of $\T$ is eventually outputted (w/o repetitions).
\begin{Prop}\label{prop:batch_DFS}
Let $\T$ be an $n$-ary tree, having height at most $n\in \N$ and rooted in $K_0$.
Consider any invocation of Algorithm~\ref{ALGO:solve} 
on input $\langle K_0, \texttt{children()},B\rangle$, where $B\in \N$, $B>0$.
Then, every node $K$ of $\T$ is eventually outputted (at line~7), without repetitions.
\end{Prop}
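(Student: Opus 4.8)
The plan is to prove that \texttt{batch\_DFS} outputs every node of $\T$ exactly once, by first establishing a ``safety'' invariant (no node is ever printed twice, and everything printed is a genuine node of $\T$) and then a ``progress'' invariant (as long as not all nodes have been printed, the stack is non-empty so the outer loop keeps running, and the algorithm makes measurable progress). The underlying combinatorial fact is simply that $\T$ is a tree: every non-root node has a unique parent, so the parent-child relation partitions $\T\setminus\{K_0\}$, and each node is produced by exactly one call to \texttt{children()}.

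First I would set up the no-repetition claim. I would argue by induction over the iterations of the outer \texttt{while}-loop that at every moment the multiset consisting of (i) the nodes already printed and (ii) the nodes currently on $S_{\texttt{bt}}$ is in fact a \emph{set}, and moreover that it contains $K_0$ together with all children of every already-printed node, and nothing else. Initially this holds since $S_{\texttt{bt}}=\{K_0\}$ and nothing has been printed. For the inductive step: in one pass of the outer loop, some nodes $P_1,\dots,P_t$ are popped from the top of $S_{\texttt{bt}}$ (so they leave the stack) and are printed (so they enter the printed set) --- this preserves the invariant that printed-$\cup$-stacked is exactly the same set as before. Then \texttt{children}($\B$) returns $L_\B$, the list of all children of $P_1,\dots,P_t$, and these are pushed onto $S_{\texttt{bt}}$. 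Since $\T$ is a tree and none of $P_1,\dots,P_t$ had been printed before this iteration (they were sitting on the stack), none of their children had yet appeared anywhere --- a child $C$ of some $P_j$ can only enter the picture via the unique call \texttt{children}($\B'$) with $P_j\in\B'$, and each node is popped and put into a batch at most once (see below). Hence the union remains a set, and it now additionally contains the children of $P_1,\dots,P_t$, which are precisely the children of the newly-printed nodes, so the stated invariant is maintained. In particular every node is printed at most once, since printing is the only way a node enters the ``printed'' component and the invariant forbids duplicates.

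Next I would establish coverage. Define $\texttt{Printed}$ to be the set of nodes ever printed by the end of the run (the run does terminate, which can be justified separately or taken as part of the analysis since the total number of pops is bounded by the number of nodes of $\T$, each node being pushed exactly once by the argument above). I claim $\texttt{Printed}$ is ``downward closed under children'': if $P\in\texttt{Printed}$ then every child $C$ of $P$ is in $\texttt{Printed}$. Indeed, when $P$ was printed it was simultaneously placed into some batch $\B$, so \texttt{children}($\B$) was invoked and $L_\B$ --- which includes all children of $P$ --- was pushed onto $S_{\texttt{bt}}$. Every node that is ever pushed onto $S_{\texttt{bt}}$ is eventually popped, because the outer loop only terminates when $S_{\texttt{bt}}=\emptyset$; and every node popped from $S_{\texttt{bt}}$ is printed at line~7. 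Hence each child $C$ of $P$ is printed. Combined with the fact that $K_0\in\texttt{Printed}$ (it is on the stack initially, hence eventually popped and printed), an easy induction on the distance from the root in $\T$ --- using that $\T$ is a rooted tree where every node other than $K_0$ is the child of a node strictly closer to the root --- shows $\texttt{Printed}$ contains all nodes of $\T$.

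The main obstacle I anticipate is the bookkeeping needed to make ``every pushed node is eventually popped'' and ``every node is pushed exactly once'' fully rigorous in the presence of the inner batching loop, where the batch may be flushed early (the ``$|\B|<B$ and $S_{\texttt{bt}}=\emptyset$'' case at line~4); one must check that this early flush still triggers a \texttt{children}($\B$) call so that no children are lost, which is exactly what the code does at line~8 unconditionally. A second delicate point is termination: to conclude the coverage argument one wants the run to be finite, which follows from the no-repetition invariant (at most $|V(\T)|$ pushes, each node pushed once) but should be stated explicitly; note the height bound $\le n$ and $n$-arity are not needed for correctness of coverage, only for the later space analysis. Everything else is routine induction, so I would keep the exposition brief and lean on Proposition~\ref{prop:rectangular} only insofar as it guarantees \texttt{children()} behaves as specified (returns \emph{all} children of each $P\in\B$ and nothing extraneous).
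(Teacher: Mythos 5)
Your proposal is correct and follows essentially the same route as the paper: coverage by induction on the distance from the root (parent printed $\Rightarrow$ children pushed $\Rightarrow$ eventually popped and printed), and no repetition from the fact that a node can only enter $S_{\texttt{bt}}$ via the unique \texttt{children()} call on its unique parent. Your explicit treatment of termination and of the claim that every pushed node is eventually popped merely makes rigorous what the paper's proof leaves implicit in its ``eventually, at some future iteration of line~5'' step.
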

\begin{proof}
\begin{itemize}

\item \emph{Fact~1.} 
We first argue that every node $K$ of $\T$ is eventually outputted. 

Let $K$ be any node of $\T$. The proof proceeds by induction on the distance $\texttt{dist}_\T(K_0, K)$ between the root $K_0$ and $K$. 
As a base case, it is easy to check (from the pseudocode of Algorithm~\ref{ALGO:solve}) 
that the root $K_0$ is printed at the first iteration of line~7. Now, assume that every node $K$ having distance at most 
$d=\texttt{dist}_\T(K_0, K)$ from $K_0$ is eventually printed out at some iteration of line~7. 
Let $\hat{K}$ be any node at distance $\texttt{dist}_\T(K_0, \hat{K})=d+1$ from $K_0$. 
Also, let $\P(\hat{K})$ be the parent of $\hat{K}$. Since $\texttt{dist}_\T(K_0, \P(\hat{K}))=d$, then at some iteration of line~7, 
$\P(\hat{K})$ is outputted, hence it is also added to $\B$ at line~6. 
Subsequently, at line~9, all children of $\P(\hat{K})$ (and thus, in particular, $\hat{K}$) 
are added on top of $S_{\texttt{bt}}$. Eventually, at some future iteration of line~5, 
$\hat{K}$ must be picked up from the top of $S_{\texttt{bt}}$. 
As that point, $\hat{K}$ must be outputted at line~7. 
Since $\hat{K}$ was arbitrary, the thesis follows.

\item \emph{Fact~2.} 
Each node $K$ can't be printed out twice. 

Indeed, when $K$ is printed at line~7, it is also removed from $S_{\texttt{bt}}$, 
and all of its successors are added on top of $S_{\texttt{bt}}$: 
this is the only way in which a node can enter within $S_{\texttt{bt}}$. 
Since $\T$ is a tree, the thesis follows.
\end{itemize}
\end{proof}

\begin{Prop}\label{prop:batch_dfs_complexity}
Let $\T$ be an $n$-ary tree, rooted at $K_0$. Consider any invocation of Algorithm~\ref{ALGO:solve} 
on input $\langle K_0, \texttt{children()},B\rangle$, where $B\in \N$, $B>0$.
In particular, let $\iota_j$ be any iteration of the \texttt{while-loop} at line~2 of Algorithm~\ref{ALGO:solve}. 
Let $\B^{(\iota_j)}$ be the corresponding batch $\B$ which is given in input to \texttt{children()} during $\iota_j$ at line~8. 
Then, the whole execution of $\iota_j$ takes time: 
\[\texttt{Time}\big[\iota_j\text{ iteration of line~2}\big] = 
	\texttt{Time}\big[\texttt{children}(\B^{(\iota_j)})\big] + O\big(n\,|\B^{(\iota_j)}|\big).
\] 
\end{Prop}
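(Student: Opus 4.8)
The plan is to account, line by line, for the cost of one iteration $\iota_j$ of the outer \texttt{while-loop} at line~2, and to show that everything except the single call to \texttt{children()} contributes only $O(n\,|\B^{(\iota_j)}|)$. First I would set $B_j := |\B^{(\iota_j)}|$ and observe that $B_j \le B$, with $B_j$ equal to the number of times line~5 executes during $\iota_j$ (one node popped per inner iteration, appended to $\B$ at line~6). The inner \texttt{while-loop} (lines~4--7) therefore runs exactly $B_j$ times. Each of its iterations does a single \texttt{pop\_from\_top} on the LIFO stack (line~5), a single set-insertion into the batch $\B$ (line~6), and a single \texttt{print} (line~7). Assuming each node is represented by $O(n)$ words (a maximal clique, or in the abstract setting a node of the $n$-ary tree, is described in $O(n)$ space), each stack pop, each insertion into $\B$, and each print costs $O(n)$; hence lines~4--7 together cost $O(n\,B_j)$.

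Next I would bound the push-back at line~9. After \texttt{children()} returns $L_{\B^{(\iota_j)}}$, the procedure pushes every element of $L_{\B^{(\iota_j)}}$ onto $S_{\texttt{bt}}$. Since $\T$ is $n$-ary, $|L_{\B^{(\iota_j)}}| \le |\B^{(\iota_j)}|\,n = n\,B_j$, and each of these at-most-$n\,B_j$ nodes is again an $O(n)$-size object whose push costs $O(n)$; naively this would give $O(n^2 B_j)$. To get the claimed $O(n\,B_j)$ one must treat the push cost as $O(1)$ per element — i.e.\ charge the $O(n)$ cost of materialising each child node to \texttt{children()} itself (which must write out $L_{\B}$ anyway), so that line~9 only performs $|L_{\B^{(\iota_j)}}| = O(n\,B_j)$ pointer pushes at $O(1)$ each. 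Under that accounting, line~9 costs $O(n\,B_j)$, which is exactly the form we want. I would state this charging convention explicitly. The remaining lines of $\iota_j$ — line~3 (reset $\B$ to $\emptyset$) and the loop-guard tests at lines~2 and~4 — cost $O(1)$ each, or $O(n)$ if one counts deallocating the previous batch, all absorbed into $O(n\,B_j)$.

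Summing: the cost of $\iota_j$ is $\texttt{Time}[\texttt{children}(\B^{(\iota_j)})]$ for line~8, plus $O(n\,B_j)$ for everything else, which is precisely
\[
\texttt{Time}\big[\iota_j\text{ iteration of line~2}\big] = \texttt{Time}\big[\texttt{children}(\B^{(\iota_j)})\big] + O\big(n\,|\B^{(\iota_j)}|\big).
\]
The main obstacle, and the only real subtlety, is the bookkeeping at line~9 just described: whether pushing the children onto the stack is counted as $O(1)$ or $O(n)$ per node changes the bound by a factor of $n$, and the stated identity only holds under the convention that the $O(n)$-per-node materialisation cost lives inside \texttt{children()}. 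I would make sure the surrounding text (and the later instantiation where \texttt{children()} is Algorithm~\ref{ALGO:solve-IB}, whose output vector $L_\B$ already has size $O(n^3) = O(n\,|\B|)$ for $|\B| = n^2$) is consistent with this convention, so that the per-node push genuinely is $O(1)$ over a pre-built list. Everything else is a routine line-count.
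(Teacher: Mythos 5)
Your proof is correct and takes essentially the same route as the paper, whose entire argument is the one-line observation that $\T$ being $n$-ary forces $|L_{\B^{(\iota_j)}}|\le n\,|\B^{(\iota_j)}|$, so that line~9 (and the at most $|\B^{(\iota_j)}|$ pops/prints of lines~4--7) cost $O(n\,|\B^{(\iota_j)}|)$. Your explicit charging convention for the per-element push cost addresses a representation issue the paper leaves implicit (and which its concrete instantiation resolves exactly as you suggest, by storing children as pairs $(P,\texttt{list}_P)$ so that $L_\B$ has total size $O(n\,|\B|)$), but it is the same proof.
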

\begin{proof}% The fact that \texttt{batch\_DFS()} prints out all the elements of $\B^{(\iota_j)}$ 
%within the first $O(|\B^{(\iota_j)}|)$ steps of $\iota_j$ is a direct consequence of lines~3-7. 
%Similarly, the fact that $\texttt{Time}\big[\iota_j\text{ iteration of line~2}\big] = 
%	\texttt{Time}\big[\texttt{children}(\B^{(\iota_j)})\big] + O\big(n\,|\B^{(\iota_j)}|\big)$
The thesis follows directly from the definition of line~9 and from the fact that $\T$ is an $n$-ary tree, so that 
the vector $L_{\B^{(\iota_j)}}$ (at line~8 of Algorithm~\ref{ALGO:solve}) contains at most $n\,|\B^{(\iota_j)}|$ elements.
\end{proof}

In the next proposition we argue that $S_{\texttt{bt}}$ can grow up its size at most polynomially in $n$, $B$. 
Before proving that, we shall introduce some notation.

Let us consider any two consecutive iterations of the \texttt{while-loop} at line~2 of Algorithm~\ref{ALGO:solve}, 
say the iterations $\iota_j$ and $\iota_{j+1}$. For any $j\geq 1$, let $\B^{(\iota_j)}$ be the batch $\B$ which is given in input 
to $\texttt{children()}$ at line~8 and during $\iota_j$; 
moreover, let $L_{\B^{(\iota_j)}}$ be the vector of nodes returned 
by the invocation of $\texttt{children}(\B^{(\iota_j)})$ at line~8 during $\iota_j$. 
We shall say that Algorithm~\ref{ALGO:solve} \emph{backtracks} at the $\iota_{j+1}$ iteration whenever it holds that:
$\B^{(\iota_{j+1})}\not\subseteq L_{\B^{(\iota_j)}}$, 
\ie whenever, at the $\iota_{j+1}$ iteration of line~8, the batch $\B^{(\iota_{j+1})}$ contains some nodes 
that were not pushed on $S_{\texttt{bt}}$ at the $\iota_j$ iteration of lines~$9\sim11$, 
but during some previous iteration $\iota_{k}$ (for some $k<j$) instead.

\begin{Prop}\label{prop:batch1}
Let $\T$ be an $n$-ary tree having root $K_0$ and total height at most $n\in \N$.
Consider any invocation of Algorithm~\ref{ALGO:solve} on input $\langle K_0, \texttt{children()},B\rangle$.
Throughout the whole execution, the backtracking stack $S_{\texttt{bt}}$ can grow up to contain at most $n^2B$ nodes. 
For this reason, Algorithm~\ref{ALGO:solve} consumes at most $O\big(n^2B + \texttt{Space}\big[\texttt{children()}\big]\big)$ space,  
where $\texttt{Space}[\texttt{children()}]$ denotes the worst-case space consumed by any invocation of \texttt{children()}. 
\end{Prop}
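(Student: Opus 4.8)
The plan is to bound the size of $S_{\texttt{bt}}$ by tracking, at any point in time, which nodes of $\T$ can simultaneously sit on the stack. The key structural observation is that $S_{\texttt{bt}}$ always holds (a subset of) the children of nodes that were processed in a batch, and that because backtracking is LIFO, the ``frontier'' of the search behaves like the frontier of an ordinary DFS, just fattened by a factor of $B$. Concretely, I would first argue the following invariant: at the start of any iteration $\iota_j$ of the \texttt{while-loop} at line~2, the nodes currently in $S_{\texttt{bt}}$ all lie on root-to-leaf paths of $\T$ that pass through a set of at most $B$ ``active'' nodes sharing a common ancestor chain of length at most $n$. More precisely, I would show that the contents of $S_{\texttt{bt}}$ can be partitioned according to the batch that produced them, that only nodes produced by iterations $\iota_{k_1} \supsetneq \cdots$ along a single ancestor-descendant chain in $\T$ can coexist on the stack (here ``chain'' is meant in the sense that the batches were nested), and that there are at most $n$ such chains' worth of surviving batches because the height of $\T$ is at most $n$.

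The key steps, in order, would be: (1) Establish that whenever \texttt{children}$(\B)$ returns $L_\B$ and pushes it at line~9, we have $|L_\B|\le n|\B|\le nB$ since $\T$ is $n$-ary and $|\B|\le B$; so each batch contributes at most $nB$ nodes to the stack. (2) Show, using the LIFO discipline, that at any moment the batches whose produced nodes still have at least one representative on the stack form a chain in $\T$: if batch $\B_1$ was pushed and later batch $\B_2\subseteq L_{\B_1}$ is popped, then before any node \emph{not} descended from $\B_1$'s output can be touched, every node of $L_{\B_1}$ (and all their descendants that ever get pushed) must first be popped off the top — this is exactly the stack property. Hence the ``still-alive'' batches correspond to nested subtrees, and the nesting depth is at most the height of $\T$, which is $n$. (3) Combine: at most $n$ alive batches, each contributing at most $nB$ nodes, gives $|S_{\texttt{bt}}|\le n\cdot nB = n^2 B$. (4) Conclude the space bound: storing $S_{\texttt{bt}}$ costs $O(n^2 B)$ words (each node of $\T$ being representable in $O(n)$ space, but treating a node as a unit as the statement implicitly does), plus the transient working space of a single \texttt{children()} call, giving $O(n^2 B + \texttt{Space}[\texttt{children()}])$.

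I would carry out step~(2) by induction on the iteration count, maintaining the invariant ``the multiset of origin-batches of the nodes currently in $S_{\texttt{bt}}$, ordered by stack position (bottom to top), is a sequence $\B_{i_1},\dots,\B_{i_r}$ with $r\le n$ such that the subtree rooted at (the outputs of) $\B_{i_{t+1}}$ is contained in the subtree rooted at (the outputs of) $\B_{i_t}$, and these subtrees have strictly increasing depth in $\T$.'' When a new batch $\B$ is popped and processed: its members all come from the top origin-batch $\B_{i_r}$ (before backtracking) or, if backtracking occurs, from some $\B_{i_t}$ with $t<r$, in which case all of $\B_{i_{t+1}},\dots,\B_{i_r}$ have been fully drained off the top — so the chain only ever gets truncated and then extended by one, never lengthened beyond $n$ because each extension moves strictly deeper into $\T$ and the height is $n$.

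The main obstacle I anticipate is making the ``chain of nested batches'' claim fully rigorous when a single batch $\B^{(\iota_{j+1})}$ straddles the boundary between nodes freshly pushed by $\iota_j$ and nodes left over from an earlier iteration (the ``backtracking'' case flagged in the text just before the statement). In that situation one has to check that the earlier-origin nodes appearing in $\B^{(\iota_{j+1})}$ are exactly the top of what remains after $L_{\B^{(\iota_j)}}$ is exhausted, and that this does not create two incomparable alive subtrees. The clean way to handle it is to note that $\B^{(\iota_{j+1})}$ is popped as a contiguous top-segment of the stack, so all its elements' origin-batches are \emph{consecutive} in the bottom-to-top ordering and hence (by the invariant) form a sub-chain; processing them replaces that entire sub-chain's ``tip'' with one new batch $\B^{(\iota_{j+1})}$ whose outputs sit one level deeper than the deepest element popped. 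This keeps both the length bound $r\le n$ and the nesting structure intact, closing the induction.
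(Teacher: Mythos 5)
Your proposal is correct and follows essentially the same route as the paper's proof: each batch pushes at most $nB$ children, and the LIFO discipline together with the height bound $n$ ensures that at most $n$ pushed lists can have surviving remnants on $S_{\texttt{bt}}$ simultaneously, giving $n^2B$. Your explicit ``chain of alive origin-batches'' invariant is in fact a more careful formalization of the backtracking case, which the paper handles only informally (``the stack must contain at most as many elements as it contained at the end of the $\iota_k$ iteration''), so no gap.
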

\begin{proof}
Since $\T$ is $n$-ary and $|\B|\leq B$ (because of line~4 of Algorithm~\ref{ALGO:solve}), 
then each batch $\B$ has at most $nB$ children. Since $S_{\texttt{bt}}$ is accessed adopting a LIFO policy, 
and since $\T$ has total height at most $n$, the following fact holds: 
until the first backtrack doesn't happen, $S_{\texttt{bt}}$ can grow its size up to $nB$ elements at most $n$ times. 
Therefore, $S_{\texttt{bt}}$ can grow its size up to $n^2 B$ elements, 
before it needs to backtrack at some iteration of the \texttt{while-loop} at line~2. 
As soon as Algorithm~\ref{ALGO:solve} starts to backtrack, say at the $\iota_{j+1}$ iteration, 
then $S_{\texttt{bt}}$ shrinks its size, collecting (at lines~$4\sim7$) a batch $\B^{(\iota_{j+1})}$ that must contain  
some nodes which had been pushed on $S_{\texttt{bt}}$ at some previous iteration $\iota_{k}$ of lines~$9\sim11$ 
(for some $k<j$). We now observe that, at the $\iota_{j+1}$ iteration of line~8, 
the stack $S_{\texttt{bt}}$ must contain at most as many elements as it contained 
at the end of the $\iota_{k}$ iteration. For this reason, $S_{\texttt{bt}}$ 
has still no way to grow its size up to more than $n^2 B$, 
by going down the levels of $\T$ once again after that a backtracking occurred. The same observation 
continues to hold for any possible subsequent backtracking. In this manner $S_{\texttt{bt}}$ can grow 
its size up to $n^2B$ nodes at most during the whole computation. 
\end{proof}

We now aim to show another crucial property of Algorithm~\ref{ALGO:solve}, 
one that turns out decisive for adopting Batch-DFS in order to speed-up the {\mainproblem} problem. 

In order to prove this fact, it is convenient to introduce a three-way coloring scheme on $\T$. 
\paragraph{A Three-Way Coloring on $\T$.} 
Consider any invocation of Algorithm~\ref{ALGO:solve} on the following 
input $\langle K_0, \texttt{children()},B\rangle$, where $K_0$ is the root of $\T$.
At the beginning of the execution, it is prescribed that all nodes $K$ of $\T$ are colored in white.
As soon as a white node $K$ of $\T$ is pushed on top of $S_{\texttt{bt}}$ 
(either at line~1 or at line~11 of Algorithm~\ref{ALGO:solve}), then $K$ changes its colour from white 
to green. Stated otherwise, at each step of Algorithm~\ref{ALGO:solve}, all the nodes in $S_{\texttt{bt}}$ are green. Finally, as soon as any $K$ gets removed from $S_{\texttt{bt}}$ at line~5, 
then $K$ changes its colour from green to black. 

Observe that, since by Proposition~\ref{prop:batch1}
every node of $\T$ is eventually pushed on $S_{\texttt{bt}}$, 
and then removed from it, exactly once, 
then every node of $\T$ eventually transits from white to green, and then from green to black.
Moreover, black nodes remain such until the end of the execution.

We proceed by observing an invariant which is maintained by Algorithm~\ref{ALGO:solve}.
\begin{Lem}\label{lem:batch2} 
Let $\T$ be an $n$-ary tree having root $K_0$ and total height at most $n\in \N$. 
Consider any invocation of Algorithm~\ref{ALGO:solve} on input $\langle K_0, \texttt{children()},B\rangle$, 
and let $\sigma_i$ be any step of execution of line~3. 
Let us denote by $\ell^{\sigma_i}_{\texttt{green}}$ the minimum distance  
between the root $K_0$ and any node of $\T$ which is green at step $\sigma_i$, 
\ie 
\[\ell^{\sigma_i}_{\texttt{green}} = 
\min\big\{\texttt{dist}_{\T}(K_0, K)\mid 
K\in\T \text{ and } K\text{ is green at execution step } \sigma_i\big\}.
\]
Then, at step $\sigma_i$ every node $K\in\T$ such that 
$\texttt{dist}_{\T}(K_0, K)\leq\ell^{\sigma_i}_{\texttt{green}}$ is either 
green or black but it is not white.
\end{Lem}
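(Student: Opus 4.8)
The plan is to prove the statement by induction on the iteration count of the \texttt{while}-loop at line~2, tracking how the green region of $\T$ evolves. The key structural fact I would exploit is the one established in the proof of Proposition~\ref{prop:batch1}: since $S_{\texttt{bt}}$ is a LIFO stack, the green nodes always form a ``downward-closed under the ancestor relation up to level $\ell^{\sigma_i}_{\texttt{green}}$'' configuration. More precisely, I would first isolate the following auxiliary invariant, which is really the heart of the matter: at every step $\sigma_i$ of line~3, every node $K$ with $\texttt{dist}_\T(K_0,K) < \ell^{\sigma_i}_{\texttt{green}}$ is black, and every node $K$ with $\texttt{dist}_\T(K_0,K) = \ell^{\sigma_i}_{\texttt{green}}$ is either green or black. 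The lemma's statement follows immediately from this auxiliary invariant, since ``black or green'' is exactly ``not white''.

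First I would set up the base case: at the very first execution of line~3, only $K_0$ has ever been pushed (at line~1), so $K_0$ is green and all other nodes are white; hence $\ell^{\sigma_1}_{\texttt{green}}=0$ and there are no nodes at distance strictly less than $0$, so the auxiliary invariant holds vacuously below level $0$ and holds at level $0$ because $K_0$ is green. Then, for the inductive step, I would consider one full pass through the loop body (lines~4--11) starting from a step $\sigma_i$ satisfying the invariant and ending at the next execution $\sigma_{i+1}$ of line~3. The batch $\B$ is assembled at lines~4--7 by popping nodes off the top of $S_{\texttt{bt}}$; when a node is popped it turns black, and at line~11 all of its children are pushed and turn green. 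The crucial observation is that, because of the LIFO discipline, the nodes currently on top of $S_{\texttt{bt}}$ are always among those at the deepest green level, so popping can only turn black some nodes that were at level $\ell^{\sigma_i}_{\texttt{green}}$ (it cannot reach shallower green nodes while deeper ones are on top), and pushing their children introduces green nodes only at level $\ell^{\sigma_i}_{\texttt{green}}+1$. Two cases then arise: either after the pass some node at level $\ell^{\sigma_i}_{\texttt{green}}$ is still green (so $\ell^{\sigma_{i+1}}_{\texttt{green}}=\ell^{\sigma_i}_{\texttt{green}}$ and the invariant is trivially preserved, since all we did below was turn some green nodes black), or all nodes at that level have become black, in which case $\ell^{\sigma_{i+1}}_{\texttt{green}}$ is either $\ell^{\sigma_i}_{\texttt{green}}+1$ (if any children were pushed) and the invariant is restored because the now fully-black level $\ell^{\sigma_i}_{\texttt{green}}$ sits strictly above the new green frontier — or the green frontier jumps by more than one, which I address next.

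The main obstacle, and the place that needs the most care, is precisely that last subcase: after exhausting level $\ell^{\sigma_i}_{\texttt{green}}$ it is conceivable that not all of those black nodes have green children (some are leaves or all their children were already processed), so the new minimum-distance green node might lie several levels deeper, or $S_{\texttt{bt}}$ might even become empty. I would handle this by arguing that whenever the green frontier descends past a level, every node strictly above the new frontier is black: a node becomes black only upon being popped, and it is pushed (hence eventually popped) before any of its children are, so by the LIFO order, if a node at level $d$ is currently the shallowest green node, then every ancestor of it — and indeed every node at levels $<d$ that has ever been touched — has already been popped and blackened; and nodes that were never touched at levels $<d$ cannot exist, because the unique path from $K_0$ down to any green node consists of already-popped (black) nodes. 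This is exactly the LIFO/height argument used in Proposition~\ref{prop:batch1}, repackaged in terms of the coloring. Finally, the case $S_{\texttt{bt}}=\emptyset$ makes the set in the definition of $\ell^{\sigma_{i+1}}_{\texttt{green}}$ empty, but that can only happen at the terminating iteration, after which line~3 is not executed again, so the invariant has no further step to be checked at; I would note this explicitly to close the induction.
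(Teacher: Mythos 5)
Your overall strategy --- an induction over the iterations of the \texttt{while}-loop at line~2, maintaining the invariant that everything at depth $\le \ell^{\sigma_i}_{\texttt{green}}$ is non-white --- is genuinely different from the paper's argument, which is a direct proof by contradiction: take a hypothetical white node $K_w$ at depth $\le \ell^{\sigma_i}_{\texttt{green}}$, consider its deepest non-white ancestor $\hat{K}$ (which exists since $K_0$ is non-white), and observe that $\hat{K}$ can be neither green (by minimality of $\ell^{\sigma_i}_{\texttt{green}}$) nor black (because blackening a node greens \emph{all} of its children, contradicting the maximality of $\texttt{dist}_{\T}(K_0,\hat{K})$). An induction of your kind could be made to work, but as written it has two concrete problems.

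First, your description of what one pass of the loop does is incorrect. You claim that popping ``can only turn black some nodes that were at level $\ell^{\sigma_i}_{\texttt{green}}$'' and that pushing ``introduces green nodes only at level $\ell^{\sigma_i}_{\texttt{green}}+1$''. But $\ell^{\sigma_i}_{\texttt{green}}$ is the \emph{minimum} (shallowest) green depth, whereas the top of the LIFO stack holds the most recently pushed, i.e.\ the \emph{deepest}, green nodes; moreover a batch of capacity $B$ does not stop at level boundaries, so a single pass typically blackens green nodes at several distinct depths (all $\ge \ell^{\sigma_i}_{\texttt{green}}$, usually strictly greater) and greens children at several depths as well --- for instance, once the children of $K_0$ have been only partially consumed by a batch, green nodes coexist at depths $1$ and $2$. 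Your two-case split (some depth-$\ell^{\sigma_i}_{\texttt{green}}$ node still green vs.\ none) happens to remain exhaustive, but the reasoning inside each case rests on this false premise. Second, and more seriously, in the case where the frontier descends you justify the absence of white nodes above the new frontier by appealing to the fact that ``the unique path from $K_0$ down to any green node consists of already-popped (black) nodes''. That only accounts for \emph{ancestors of green nodes}; it says nothing about a shallow node lying in a subtree that currently contains no green node at all. The fact you actually need --- and the one the paper's proof is built around --- is that when a node is popped at line~5, its \emph{entire} set of children is pushed (hence greened) at line~9 before the next execution of line~3; consequently the deepest non-white ancestor of any white node must be green, which forces every white node to lie at depth strictly greater than $\ell^{\sigma_i}_{\texttt{green}}$. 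You mention this push-all-children property in passing at the start of your inductive step, but you never deploy it at the one place in the argument where it is indispensable.
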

\begin{proof}
Assume, for the sake of contradiction, that at step $\sigma_i$ 
there exists a white node $K_w$ in $\T$ such that $\texttt{dist}_\T(K_0, K_w)\leq\ell^{\sigma_i}_{\texttt{green}}$. 
Recall that, at the beginning of the execution, the root $K_0$ of $\T$ turns green at line~1;
hence, at any subsequent step, $K_0$ must be either green or black. 
Thus, at $\sigma_i$, there must exist at least one ancestor of $K_w$ which 
is either green or black but not white, because there is a path from $K_w$ to $K_0$. 
Now, let $\hat{K}$ be the ancestor of $K_w$ which is either green or black and such that its distance 
from $K_0$ is maximum among all of those ancestors of $K_w$ that are either green or black.
What is the colour of $\hat{K}$ at step $\sigma_i$, is it green or is it black? 

Notice that $\hat{K}$ is not green at $\sigma_i$; in fact, since $\hat{K}$ is an ancestor of $K_w$, 
then: \[\texttt{dist}_\T(K_0, \hat{K}) < \texttt{dist}_\T(K_0, K_w)\leq \ell^{\sigma_i}_{\texttt{green}},\] 
whereby a green colored $\hat{K}$ would 
contradict the minimality of $\ell^{\sigma_i}_{\texttt{green}}$. 
Still, $\hat{K}$ is not even black at $\sigma_i$; 
otherwise, all the children of $\hat{K}$ would have been colored in green at some previous step of the algorithm, because of lines~$8\sim11$ of Algorithm~\ref{ALGO:solve}, 
thus contradicting the maximality of $\texttt{dist}_\T(K_0, \hat{K})$.
No colour is actually possible for $\hat{K}$ at $\sigma_i$, this leads to a contradiction. 

Indeed, there exists no such a white node $K_w$. This implies the thesis.
\end{proof}
\begin{Prop}\label{prop:batch2}
Let $\T$ be an $n$-ary tree having root $K_0$ and total height at most $n\in \N$.
Consider any invocation of Algorithm~\ref{ALGO:solve} on input $\langle K_0, \texttt{children()},B\rangle$.
Then, the total number of steps of execution of 
line~4 in which the condition ``$S_{\texttt{bt}}=\emptyset$" 
holds is always less than or equal to $n$.
\end{Prop}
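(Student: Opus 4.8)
The plan is to set up a one‑to‑one correspondence between the executions of line~4 at which ``$S_{\texttt{bt}}=\emptyset$'' holds and certain outer iterations, and then to bound the number of these iterations by the height of $\T$ by means of the three‑way coloring and Lemma~\ref{lem:batch2}. First I would observe that, within one iteration of the outer \texttt{while}-loop at line~2, say $\iota_j$, the first execution of line~4 (the one immediately following line~3) always finds $S_{\texttt{bt}}\neq\emptyset$, since line~2 has just been passed and line~3 does not touch $S_{\texttt{bt}}$; and the instant line~4 observes $S_{\texttt{bt}}=\emptyset$ the inner loop ends. Hence $\iota_j$ can contribute at most one execution of line~4 with $S_{\texttt{bt}}=\emptyset$, and it does so precisely when its inner loop drains $S_{\texttt{bt}}$ entirely. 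Call such an $\iota_j$ a \emph{draining iteration}. In a draining iteration every node that was on $S_{\texttt{bt}}$ at line~3 ends up in $\B^{(\iota_j)}$ (because $\B$ is reset at line~3 and only grows by popped nodes), so all these nodes turn black, and at lines~8--9 they are replaced on $S_{\texttt{bt}}$ by the union of their children.

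Next I would track, at every execution of line~3, the quantity $\ell_{\texttt{green}}$, defined as the least distance from $K_0$ to a green node, equivalently to a node currently on $S_{\texttt{bt}}$. Two facts drive the argument. (i) $\ell_{\texttt{green}}$ is non-decreasing from one execution of line~3 to the next: in passing through an outer iteration, $S_{\texttt{bt}}$ loses some green nodes (each at distance $\geq \ell_{\texttt{green}}$ from $K_0$) and gains children of popped nodes (each at distance $\geq \ell_{\texttt{green}}+1$), so the shallowest surviving node can only move deeper. (ii) Across a draining iteration $\ell_{\texttt{green}}$ moves \emph{strictly} deeper: since a draining iteration pops and blackens \emph{all} green nodes, Lemma~\ref{lem:batch2} (applied at line~3 of that iteration) guarantees that right afterwards every node at distance $\leq \ell_{\texttt{green}}$ from $K_0$ is black, while the newly pushed nodes, being children of popped nodes, lie at distance $\geq \ell_{\texttt{green}}+1$; together with (i), the value of $\ell_{\texttt{green}}$ at the next draining iteration is larger by at least one.

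Finally, at the first execution of line~3 we have $S_{\texttt{bt}}=\{K_0\}$, hence $\ell_{\texttt{green}}=0$; thereafter $\ell_{\texttt{green}}$ is non-decreasing, strictly increases at each new draining iteration, and can never exceed the height of $\T$, which is at most $n$. Hence the draining iterations are indexed by a strictly increasing sequence of admissible depths, so there are at most $n$ of them — in the RS-tree $\T_G$ the height is in fact at most $n-1$, because vertex $1$ always belongs to $K_0$ and the indices strictly increase along any root-to-node path, which makes $n$ a correct (and essentially tight) bound. The step I expect to be the crux is the strict-increase fact~(ii): it is precisely there that Lemma~\ref{lem:batch2} is indispensable, since one must rule out that, right after a full drain, some white node lies shallow enough to pull $\ell_{\texttt{green}}$ back down; without the ``no white node above the green frontier'' invariant the minimum green depth could oscillate and the level-by-level charging would collapse.
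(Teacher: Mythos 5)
Your proof is correct and follows essentially the same route as the paper's: both rely on Lemma~\ref{lem:batch2} together with the green/black coloring to argue that each execution of line~4 at which $S_{\texttt{bt}}=\emptyset$ holds permanently retires one more level of $\T$, so the height bound $n$ caps the count. Your explicit tracking of the monotone quantity $\ell_{\texttt{green}}$ (non-decreasing always, strictly increasing across each draining iteration) simply formalizes what the paper phrases as a level of depth being ``turned-off completely and forever.''
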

\begin{proof} 
Let's consider any generic step of execution of line~4, say step $\sigma_j$, such that ``$S_{\texttt{bt}}=\emptyset$" holds.
Let $\sigma_i$, for some $i<j$, be the last step of execution of line~3 which precedes $\sigma_j$. 
Stated otherwise, we are considering a sequence of execution steps,  
$\sigma_i, \sigma_{\texttt{next\_step}(i)}, \ldots, \sigma_j$,  
where:
\begin{itemize}
\item the starting step $\sigma_i$ corresponds to an execution of line~3;
\item $\sigma_{\texttt{next\_step}(i)}, \ldots$ marks the 
(immediately following) entrance of the computation process into the \texttt{while-loop} at line~4;
\item $\sigma_j$ corresponds to the (subsequent) exhaustion of the \texttt{while-loop} at line~4; 
\ie $\sigma_j$ is the first step of execution of line~4, subsequent to $\sigma_i$,  
such that the condition ``$S_{\texttt{bt}}=\emptyset$" holds.
\end{itemize}
By Lemma~\ref{lem:batch2}, at step $\sigma_i$, every node $K$ such that 
$\texttt{dist}_{\T}(K_0, K)\leq\ell^{\sigma_i}_{\texttt{green}}$ must be either green or black, 
and notice that there must exist at least one such 
green node at the $\sigma_i$ step of execution of line~3 
(otherwise we would have had $S_{\texttt{bt}}=\emptyset$ just before at line~2).
Since, at step $\sigma_j$, $S_{\texttt{bt}}=\emptyset$ holds by hypothesis, then
every node $K$ such that $\texttt{dist}_{\T}(K_0, K)\leq\ell^{\sigma_i}_{\texttt{green}}$, 
must be turned black at $\sigma_j$.
Stated otherwiese, all nodes having distance $\ell^{\sigma_i}_{\texttt{green}}$ from $K_0$ 
that were green at step $\sigma_i$ must be turned black at step $\sigma_j$. 
In this manner, we see that at step $\sigma_j$ yet another level of depth 
in $\T$ has been loosely speaking ‘‘turned-off’’ completely and forever.

Since $\T$ has total height at most $n$, the thesis follows.
\end{proof}

\section{An Asymptotically Faster Algorithm for {\mainproblem}}\label{sect:algorithm}

The present section offers two algorithms for {\mainproblem}. Our core procedure is Algorithm~\ref{ALGO:main}: 
it provides a listing of all the maximal cliques of any given $n$-vertex graph with a time delay polynomial in $n$. 
However, due to technical reasons (related to Proposition~\ref{prop:rectangular}), 
the procedure exhibit a time delay that is in some sense ``amortized" across $n^2$ output operations.
In subsection~\ref{subsect:strict_delay}, Algorithm~\ref{ALGO:strict} will be introduced in order to overcome this issue, thus 
achieving the time delay stated in Theorem~\ref{thm:strict}.

The pseudocode of Algorithm~\ref{ALGO:main} is presented here below.

\smallskip
\begin{algorithm}[H]
\caption{Listing all Maximal Cliques.}\label{ALGO:main}
\DontPrintSemicolon
\nonl \SetKwProg{Fn}{Procedure}{}{}
\Fn{$\texttt{list\_\texttt{MC}}(G)$}{
\KwIn{A graph $G=(V,E)$, where $|V|=n$.}
\KwOut{A listing of all the maximal cliques $K$ of $G$.}
$K_0\leftarrow$ construct the lexicographically greatest maximal clique $K_0$;\;
$\texttt{batch\_DFS}(K_0, \texttt{children()}, n^2)$;\tcp{invoke Algorithm~\ref{ALGO:solve}}
}

\SetKwProg{SubFn}{SubProcedure}{}{}
\nonl\SubFn{$\texttt{children}(\B)$}{
\setcounter{AlgoLine}{0}
\KwIn{A (non-empty) batch $\B=\{P_1, \ldots, P_{n^2}\}$ of $|\B|\leq n^2$ maximal cliques of $G$.}
\KwOut{The vector $\B'$ of all the children of $\B$, 
\ie $\B'=\{(P, \texttt{list}_P) \mid P\in\B,\;\; 
\forall\, i\in[n]:\; i\in\texttt{list}_P \text{ iff } 
	\C(P,i)\text{ is a child of } P \text{ in } \T_G\}$.}
\If{$|\B|=n^2$}{
$\B'\leftarrow\texttt{Solve\_Rectangular\_$\I$}(\B)$;\tcp{invoke Algorithm~\ref{ALGO:solve-IB}}
}
\Else{
$\B'\leftarrow$ compute all children of $\B$ with Makino-Uno's procedure~\cite{MU04};\;}
\Return{$\B'$;}
}
\end{algorithm}
\smallskip

\textbf{Description of Algorithm~\ref{ALGO:main}.} 
At line~1, the lexicographically greatest maximal clique $K_0$ of $G$ gets constructed. At line~2, Algorithm~\ref{ALGO:solve} 
is invoked on input $\langle K_0, \texttt{children()}, n^2\rangle$.
The subprocedure \texttt{children()} is defined as follows. 
It takes in input a (non-empty) batch $\B$ containing $|\B|\leq n^2$ maximal cliques of $G$, 
and it aims to return a vector $L_\B$ containing all (and only) the children of $\B$, 
\ie $L_{\B}=\{(P, \texttt{list}_P) \mid P\in\B,\;\; 
\forall\, i\in[n]:\; i\in\texttt{list}_P \text{ iff } \C(P,i)\text{ is a child of } P \text{ in } \T_G\}$.
Given $\B$ in input, the course of actions within \texttt{children()} dependes on the size $|\B|$:
\begin{itemize}
\item if $|\B|=n^2$, then $L_{\B}$ is computed by invoking Algorithm~\ref{ALGO:solve-IB} on input $\B$ at line~2;
\item otherwise, if $0<|\B|<n^2$, then $L_\B$ is computed 
      with the original algorithm of Makino and Uno~\cite{MU04} 
     (the one having an $O(n^{\omega})$ time delay complexity). 
\end{itemize} 
Then, $L_{\B}$ is returned as output at line~5 of \texttt{children()}.
There's still one missing detail. Recall the functioning of Algorithm~\ref{ALGO:solve}:
at line~5, \texttt{pop\_from\_top()} is assumed to retrieve one single maximal clique from $S_{\texttt{bt}}$ (and not a pair $(P, \texttt{list}_P$).
For this reason, a careful implementation of \texttt{pop\_from\_top()} must be taken into account.
It may go as follows. Firstly, \texttt{pop\_from\_top()} accesses to the head of the stack $S_{\texttt{bt}}$, say $(P, \texttt{list}_P)$, 
without actually removing it from $S_{\texttt{bt}}$. There, it removes the first element of $\texttt{list}_P$, say $\hat{i}$,    
thus reducing the size of $\texttt{list}_P$ by one unit. 
At this point, $(P, \texttt{list}_P)$ is removed from the top of $S_{\texttt{bt}}$ 
if and only if $\texttt{list}_P$ has become empty by removing $\hat{i}$. 
Finally, \texttt{pop\_from\_top()} constructs the maximal clique $\C(P,\hat{i})$, 
by invoking Algorithm~\ref{ALGO:compute-C} on input $(P_{< \hat{i}}\cap\Gamma({\hat{i}}))\cup\{{\hat{i}}\}$. 
Notice that any invocation of \texttt{pop\_from\_top()} takes time $O(n^2)$, which is due to Algorithm~\ref{ALGO:compute-C}. 
This concludes the description of \texttt{pop\_from\_top()}, and thus that of Algorithm~\ref{ALGO:main}.
The following proposition asserts its correctness.

\begin{Prop}\label{prop:prealgocorrectness}
On input $G=(V,E)$, the procedure Algorithm~\ref{ALGO:main} provides a listing of all the maximal 
cliques of $G$ without repetitions. 
\end{Prop}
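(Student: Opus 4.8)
The plan is to combine two facts already established in the excerpt: first, that the abstract Batch-DFS backtracking routine (Algorithm~\ref{ALGO:solve}) visits every node of an $n$-ary tree of height at most $n$ exactly once, without repetitions (Proposition~\ref{prop:batch_DFS}); and second, that when this routine is instantiated with $\T=\T_G$, with root $K_0$, and with the \texttt{children()} subprocedure as defined in Algorithm~\ref{ALGO:main}, the subprocedure genuinely returns all children of every input clique. The RS-tree $\T_G$ is $n$-ary and has height at most $n$ (each step down strictly decreases the lexicographic rank, and an index $i(C)\in[n]$ is consumed, so the depth is bounded by $n$); hence Proposition~\ref{prop:batch_DFS} applies verbatim once we check the \texttt{children()} contract.

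First I would argue that line~1 of Algorithm~\ref{ALGO:main} correctly constructs $K_0=\closure(\emptyset)$, the lexicographically greatest maximal clique, which is the root of $\T_G$ by definition; this follows from Proposition~\ref{compute-C} applied to the empty clique (or a trivial direct construction). Next I would verify the \texttt{children()} subprocedure. On input a batch $\B$ of maximal cliques, it branches on $|\B|$: if $|\B|=n^2$ it calls \texttt{Solve\_Rectangular\_$\I$}($\B$), whose correctness is exactly Proposition~\ref{prop:rectangular} — the returned vector $L_\B$ lists, for each $P\in\B$, precisely the indices $i$ such that $\C(P,i)$ is a child of $P$ in $\T_G$; if $0<|\B|<n^2$ it falls back to Makino–Uno's procedure~\cite{MU04}, which by Proposition~\ref{prop:children_char} together with Lemmata~\ref{lemmaA} and~\ref{lemmaB} computes the same children sets. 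In either case \texttt{children()} satisfies the precondition required by Algorithm~\ref{ALGO:solve}: on input $\B$ it returns a vector encoding all children $C$ of $P$ for every $P\in\B$, and nothing else. Here I would also note that by Proposition~\ref{prop:rec} every child $C$ of $P$ with index $i$ equals $\C(P,i)$, so representing a child by the pair $(P,i)$ loses no information and the reconstruction performed inside \texttt{pop\_from\_top()} (invoking Algorithm~\ref{ALGO:compute-C} on $(P_{<\hat\imath}\cap\Gamma(\hat\imath))\cup\{\hat\imath\}$) recovers exactly the maximal clique $\C(P,\hat\imath)$.

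Then I would invoke Proposition~\ref{prop:batch_DFS} with $\T=\T_G$ and $B=n^2$: every node of $\T_G$ is printed at line~7 of Algorithm~\ref{ALGO:solve} exactly once. Since the nodes of $\T_G$ are in bijection with the maximal cliques of $G$, and since the only subtlety is that the stack stores compressed pairs $(P,\texttt{list}_P)$ rather than individual cliques, I would remark that the \texttt{pop\_from\_top()} wrapper described in the text makes the interface behave exactly as the abstract algorithm expects — one clique removed and materialized per pop — so Proposition~\ref{prop:batch_DFS} transfers without modification. Therefore Algorithm~\ref{ALGO:main} outputs every maximal clique of $G$, and each exactly once, which is the claim.

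The main obstacle I anticipate is not deep but bookkeeping: one must be careful that the $|\B|<n^2$ fallback branch is exercised only finitely often and still respects the \texttt{children()} contract, and that the batch-compression trick in \texttt{pop\_from\_top()} does not spuriously drop or duplicate a clique (e.g.\ when $\texttt{list}_P$ empties exactly as $P$ reaches the top of the stack). Both points reduce to observing that each pair $(P,\texttt{list}_P)$ is pushed once and popped — index by index — until exhausted, after which it is discarded, mirroring the one-push-one-pop invariant of Proposition~\ref{prop:batch_DFS}; so the argument stays routine.
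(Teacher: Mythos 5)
Your proposal is correct and follows exactly the paper's argument: correctness of \texttt{children()} via Proposition~\ref{prop:rectangular} (plus the Makino--Uno fallback for undersized batches), combined with Proposition~\ref{prop:batch_DFS} applied to $\T_G$. The additional details you supply (root construction, the height bound on $\T_G$, and the \texttt{pop\_from\_top()} bookkeeping) are sound elaborations of what the paper leaves implicit.
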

\begin{proof} Observe that Algorithm~\ref{ALGO:main} invokes Algorithm~\ref{ALGO:solve} at line~2. 
Also, Proposition~\ref{prop:rectangular} implies that the subprocedure $\texttt{children()}$ 
of Algorithm~\ref{ALGO:main} is correct. The thesis follows by Propostion~\ref{prop:batch_DFS}.
\end{proof}

The following proposition asserts the time complexity of Algorithm~\ref{ALGO:main}. 

\begin{Prop}\label{prop:prealgo_timecomplexity}
Given any $n$-vertex graph $G=(V,E)$ as input, Algorithm~\ref{ALGO:main} outputs  
the first $x$ maximal cliques of $G$ within the following time bound, for any $x\in\N$:
\[\tau_{\texttt{first\_}x}= O\big(n^{\omega+3}+xn^{2\omega(1,1,1/2)-2}\big) = O\big(n^{5.3728639} + xn^{2.093362}\big).\]
\end{Prop}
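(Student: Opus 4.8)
The plan is to instantiate the generic Batch-DFS analysis of Section~\ref{sect:backtracking} on the RS-tree $\T_G$, with batch capacity $B = n^2$ and with the subprocedure \texttt{children()} of Algorithm~\ref{ALGO:main}, and then to account for the running time one outer-loop iteration at a time, carefully separating \emph{complete} batches (those of size exactly $n^2$) from \emph{incomplete} ones (those of size strictly less than $n^2$). First I would check that the hypotheses of Propositions~\ref{prop:batch_DFS}--\ref{prop:batch2} are met: by Section~\ref{sect:RS-Tree}, $\T_G$ is an $n$-ary in-tree of height at most $n$ rooted at $K_0 = \closure(\emptyset)$ whose node set is exactly the set of maximal cliques of $G$; line~1 of Algorithm~\ref{ALGO:main} builds $K_0$ greedily in $O(n^2)$ time; and the subprocedure \texttt{children()} returns, for each clique in its input batch, exactly its children in $\T_G$ -- this is Proposition~\ref{prop:rectangular} for a full batch and \cite{MU04} for a smaller one. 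Hence every maximal clique is eventually printed, once, and all the structural facts about the stack $S_{\texttt{bt}}$ carry over.

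Next I would bound the cost of one iteration $\iota$ of the outer \texttt{while}-loop at line~2, in terms of its batch $\B^{(\iota)}$. Filling $\B^{(\iota)}$ costs $|\B^{(\iota)}|$ calls to \texttt{pop\_from\_top()}, each $O(n^2)$ because of Algorithm~\ref{ALGO:compute-C}; pushing the returned vector back costs $O(n\,|\B^{(\iota)}|)$ by Proposition~\ref{prop:batch_dfs_complexity}. The call to \texttt{children()} costs $O(n^{2\omega(1,1,1/2)})$ when $|\B^{(\iota)}| = n^2$ (Proposition~\ref{prop:rectangular}) and $O(|\B^{(\iota)}|\,n^{\omega}) = O(n^{\omega+2})$ otherwise (the procedure of Makino and Uno at $O(n^{\omega})$ per clique). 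Since $\omega(1,1,1/2)\ge 2$ and $2\omega(1,1,1/2)\le 4.093362 \le \omega + 3$, a complete iteration costs $O(n^{2\omega(1,1,1/2)})$ and an incomplete one costs $O(n^{\omega+2})$.

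Then I would count iterations. An incomplete batch is produced exactly when the inner \texttt{while}-loop at line~4 terminates with $S_{\texttt{bt}} = \emptyset$, which by Proposition~\ref{prop:batch2} occurs at most $n$ times during the entire execution; thus at most $n$ iterations are incomplete, contributing a total of $O(n\cdot n^{\omega+2}) = O(n^{\omega+3})$. Each complete iteration prints exactly $n^2$ maximal cliques, so at least $x$ cliques have been emitted once at most $\lceil x/n^2\rceil$ complete iterations (plus however many incomplete ones) have run; their combined cost is $O\big(\lceil x/n^2\rceil\, n^{2\omega(1,1,1/2)}\big) = O\big(x\,n^{2\omega(1,1,1/2)-2} + n^{2\omega(1,1,1/2)}\big)$. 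Adding the $O(n^2)$ spent at line~1 and using $n^{2\omega(1,1,1/2)} = O(n^{\omega+3})$ gives $\tau_{\texttt{first\_}x} = O(n^{\omega+3} + x\,n^{2\omega(1,1,1/2)-2})$, and substituting $\omega \le 2.3728639$ and $\omega(1,1,1/2) \le 2.046681$ yields the stated numerical bound.

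The main thing to be careful about is precisely this complete/incomplete split and the appeal to Proposition~\ref{prop:batch2}: incomplete batches can reappear whenever backtracking refills the stack, so one must use the global bound of at most $n$ such events to ensure that their comparatively costly $O(n^{\omega+2})$ fallback to Makino and Uno only ever contributes the additive $O(n^{\omega+3})$ term and never multiplies the output count $x$. The remaining ingredients -- the $O(n^2)$ cost of \texttt{pop\_from\_top()}, the $O(n\,|\B|)$ push cost, and the rectangular-product bound of Proposition~\ref{prop:rectangular} -- are routine once the propositions of Sections~\ref{sect:reduction} and~\ref{sect:backtracking} are in hand; correctness itself is already Proposition~\ref{prop:prealgocorrectness}.
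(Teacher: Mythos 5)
Your proposal is correct and follows essentially the same route as the paper's proof, which is organized around the same four ingredients: at most $n$ incomplete batches by Proposition~\ref{prop:batch2} (contributing the additive $O(n^{\omega+3})$ via the Makino--Uno fallback), the $O(n^{2\omega(1,1,1/2)})$ cost of a full-batch iteration from Propositions~\ref{prop:batch_dfs_complexity} and~\ref{prop:rectangular}, and the $O(n^2)$ per-clique cost of \texttt{pop\_from\_top()}. If anything, your explicit $\lceil x/n^2\rceil$ count of complete iterations makes the final amortization step more transparent than the paper's terse ``By Facts~$1\sim4$, the bound follows.''
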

\begin{proof} The proof is divided into four steps. There, 
$\iota_j$ will denote any generic (but fixed) iteration of the \texttt{while-loop} at line~2 of Algorithm~\ref{ALGO:solve}. 
\begin{enumerate}
\item[\emph{Fact~1.}] There exist at most $n$ iterations $\iota_j$ of the \texttt{while-loop} 
at line~2 of Algorithm~\ref{ALGO:solve} such that $0<|\B^{(\iota_j)}|<n^2$; 
and for all other iterations $\iota_j$ of line~2 of Algorithm~\ref{ALGO:solve}, it holds $|\B^{(\iota_j)}|=n^2$.

\emph{Proof of Fact~1.} 
Since $B=n^2$, we have that $0<|\B^{(\iota_j)}|<n^2$ holds if and only if the condition 
``$S_{\texttt{bt}}=\emptyset$" holds at line~4 of Algorithm~\ref{ALGO:solve} during $\iota_j$. 
By Proposition~\ref{prop:batch2}, this may happen at most $n$ times throughout the whole execution of Algorithm~\ref{ALGO:solve}.

\item[\emph{Fact~2.}] All maximal cliques of $\B^{(\iota_j)}$ are outputted with $O(n^2)$ time delay.

\emph{Proof of Fact~2.} Notice that during $\iota_j$ all maximal cliques in $\B^{(\iota_j)}$ are outputted at line~7 of 
Algorithm~\ref{ALGO:solve}. Just before, at line~5, \texttt{pop\_from\_top()} needs to make an invocation to Algorithm~\ref{ALGO:compute-C} 
(as we had already observed in the description of Algorithm~\ref{ALGO:main}.)
By Propostion~\ref{compute-C}, this latter invocation takes at most $O(n^2)$ time. 

\item[\emph{Fact~3.}] 
If $|\B^{(\iota_j)}|=n^2$, then the whole execution of $\iota_j$ takes time 
$O\big(n^{2\omega(1,1,1/2)}\big)$.

\emph{Proof of Fact~3.} This follows by Proposition~\ref{prop:batch_dfs_complexity} and  
Proposition~\ref{prop:rectangular}.

\item[\emph{Fact~4.}] If $|\B^{(\iota_j)}|<n^2$, then the whole execution of $\iota_j$ takes time  
$ O\big(n^{\omega}|\B^{(\iota_j)}| \big) = O\big(n^{\omega+2}\big)$.

\emph{Proof of Fact~4.}
This follows from Proposition~\ref{prop:batch_dfs_complexity} 
and from existence of the $O(n^{\omega})$ procedure devised by Makino and Uno in~\cite{MU04}. 
\end{enumerate}
By Facts~$1\sim 4$, the above mentioned time bound on $\tau_{\texttt{first}\_x}$ follows.
\end{proof}

To conclude, the space usage of Algorithm~\ref{ALGO:main} is analyzed below.
\begin{Prop}\label{prop:prealgo_space}
The space usage of Algorithm~\ref{ALGO:main} is $O(n^4)$.
\end{Prop}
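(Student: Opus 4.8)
The plan is to account separately for the space occupied by the three stateful components that Algorithm~\ref{ALGO:main} maintains, and then to sum their worst-case sizes. These components are: (i) the lexicographically greatest maximal clique $K_0$ built at line~1; (ii) the backtracking stack $S_{\texttt{bt}}$ manipulated throughout the invocation of Batch-DFS (Algorithm~\ref{ALGO:solve}) at line~2; and (iii) the transient working memory used inside a single call to the subprocedure \texttt{children()}, as well as inside \texttt{pop\_from\_top()}. Since no other data is retained, the total space is, up to constants, the sum of these three quantities.

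First I would dispose of the cheap parts. The clique $K_0$ is stored as a characteristic vector and thus occupies $O(n)$ space. Every call to \texttt{pop\_from\_top()} merely runs Algorithm~\ref{ALGO:compute-C} on a vertex set of size at most $n$, so it needs only $O(n)$ auxiliary space, which is reused across calls. As for \texttt{children()}: when $|\B|<n^2$ it runs the procedure of Makino and Uno~\cite{MU04}, which handles one maximal clique at a time in $O(n^2)$ reusable working space while assembling an output vector $L_{\B}$ of at most $|\B|<n^2$ pairs $(P,\texttt{list}_P)$, each of size $O(n)$, hence $O(n^3)$ in total; when $|\B|=n^2$ it runs Algorithm~\ref{ALGO:solve-IB}, which by Proposition~\ref{prop:rectangular} uses $O(n^4)$ working space, dominated by the $n^2\times n^2$ product matrix $M_{\B,G}$. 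Therefore $\texttt{Space}\big[\texttt{children()}\big]=O(n^4)$.

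The crux is bounding the size of $S_{\texttt{bt}}$, and here one must reason in terms of the \emph{compact} representation actually used by Algorithm~\ref{ALGO:main}, in which the stack stores pairs $(P,\texttt{list}_P)$ rather than individual child cliques, each such pair occupying $O(n)$ space. I would re-run the counting argument behind Proposition~\ref{prop:batch1} with this representation in mind: each iteration of the outer \texttt{while-loop} pushes the vector $L_{\B}$, which contains at most $|\B|\le B=n^2$ pairs; since $S_{\texttt{bt}}$ is LIFO and $\T_G$ has height at most $n$, at most $n$ such pushes can be performed consecutively before the stack must shrink through a backtracking iteration, and — exactly as argued in Proposition~\ref{prop:batch1} — no backtracking iteration ever leaves the stack larger than it was just before the corresponding descent began. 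Consequently $S_{\texttt{bt}}$ never holds more than $n\cdot n^2=n^3$ pairs simultaneously, i.e.\ it occupies at most $O(n^3\cdot n)=O(n^4)$ space.

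Summing the three contributions yields $O(n)+O(n^4)+O(n^4)=O(n^4)$, which is the claimed bound. The delicate point is the third step: it is essential to exploit that the pair-based encoding of children — inherited from the output format of Algorithm~\ref{ALGO:solve-IB} and from the \texttt{pop\_from\_top()} mechanics spelled out for Algorithm~\ref{ALGO:main} — keeps the stack at $O(n^3)$ records rather than at the $O(n^4)$ individual child cliques that Proposition~\ref{prop:batch1} bounds; storing each child explicitly would cost an extra factor of $n$ and inflate the bound to $O(n^5)$. One should also check that Proposition~\ref{prop:batch1}'s invariant — a backtracking iteration never re-inflates the stack — carries over verbatim, which it does, since Algorithm~\ref{ALGO:main} merely instantiates the parameters \texttt{children()} and $B$ and leaves the control flow of Batch-DFS unchanged.
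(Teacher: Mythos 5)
Your proof is correct and follows essentially the same route as the paper: it bounds the stack via the counting argument of Proposition~\ref{prop:batch1} combined with the crucial observation that children are stored as $O(n)$-sized pairs $(P,\texttt{list}_P)$ rather than as explicit cliques (yielding $O(n^2 B)=O(n^4)$ for the stack), and bounds the transient space of \texttt{children()} by the $O(n^4)$ of Algorithm~\ref{ALGO:solve-IB} via Proposition~\ref{prop:rectangular}. The extra bookkeeping for $K_0$, \texttt{pop\_from\_top()}, and the Makino--Uno branch is harmless and only makes the argument more complete than the paper's.
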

\begin{proof}
By Proposition~\ref{prop:batch1}, the space usage of Algorithm~\ref{ALGO:solve} is 
$O\big(n^2B + \texttt{Space}\big[\texttt{children()}\big]\big)$,
where $\texttt{Space}[\texttt{children()}]$ is the worst-case space consumed by any invocation of \texttt{children()}.
There is still one detail that it is worth stating. Even though to represent a maximal clique requires $O(n)$ space in memory,
recall that within the backtracking stack $S_{\texttt{bt}}$ of Algorithm~\ref{ALGO:solve} 
we have choosen to represent all the children of any generic maximal clique $K$
by keeping in memory the pair $(P, \texttt{list}_P)$, where $\texttt{list}_P$ is a list of integers having length at most $n$.
This fact implies that, in order to store all the $O(nB)$ children of any batch $\B$ of $B$ maximal cliques, we need only $O(nB)$ space.
For this reason, the stack $S_{\texttt{bt}}$ consumes only $O(n^2B)$ space throughout the whole execution of Algorithm~\ref{ALGO:solve}, 
as shown by Proposition~\ref{prop:batch1} in the abstract setting. 
Now, concerning the {\mainproblem} problem, we have $B=n^2$ (see line~2 of Algorithm~\ref{ALGO:main}).
Moreover, by Proposition~\ref{prop:rectangular}, the following holds: 
$ \texttt{Space}\big[\texttt{children()}\big] \leq \texttt{Space}\big[Algorithm~\ref{ALGO:solve-IB}\big]=O(n^4)$.
These facts imply that the space usage of Algorithm~\ref{ALGO:main} is $O(n^4)$. 
\end{proof}

Theorem~\ref{thm:firstx} follows, at this point, from Proposition~\ref{prop:prealgocorrectness}, 
\ref{prop:prealgo_timecomplexity}, and \ref{prop:prealgo_space}.

\subsection{An $O(n^{2\omega(1,1,1/2)-2})$ Time Delay Algorithm for {\mainproblem}: 
 \\ The Proof of Theorem~\ref{thm:strict}}\label{subsect:strict_delay}

This subsection describes Algorithm~\ref{ALGO:strict}, which is the procedure mentioned in Theorem~\ref{thm:strict}. 

The corresponding pseudocode follows below.

\smallskip
\begin{algorithm}[H]
\caption{Listing all Maximal Cliques as in Theorem~\ref{thm:strict}.}\label{ALGO:strict}
\DontPrintSemicolon
\nonl \SetKwProg{Fn}{Procedure}{}{}
\Fn{$\texttt{solve\_\mainproblem}(G)$}{
\KwIn{A graph $G=(V,E)$, where $|V|=n$.}
\KwOut{A listing of all the maximal cliques $K$ of $G$.}
$\tau_{\texttt{delay}}\leftarrow c_0 \lceil n^{2\omega(1,1,1/2)-2}\rceil=c_0\lceil n^{2.093362}\rceil$; \tcp{for some sufficiently large constant $c_0>0$}
$T\leftarrow c_1 \lceil n^{\omega - 2\omega(1,1,1/2) + 5}\rceil= c_1 \lceil n^{3.2795019}\rceil $; \tcp{for some sufficiently large constant $c_1>0$}
$Q\leftarrow\emptyset$;\tcp{let $Q$ be an empty queue}
$i\leftarrow \texttt{boot}(Q,T,G)$;\tcp{the bootstrap aims to fill $Q$ up to containing $T$ elements}
$\texttt{counter}\leftarrow 0$;\;
\While{$\sigma_i\neq \sigma_{\texttt{end}}$ }{ \tcp{\ie while $\sigma_i$ is not 
the last step of \texttt{list\_\texttt{MC}($G$)}'s computation.}
	$\sigma_{i+1} \leftarrow \texttt{next\_step}(\texttt{list\_MC}(G), \sigma_i)$;\;
	$\texttt{counter}\leftarrow \texttt{counter}+1$;\;
	\If{$\sigma_{i+1} = \texttt{print(K)}$ at line~7 of Algorithm~\ref{ALGO:solve} }{
		$Q\leftarrow \texttt{append\_on\_tail}(Q, K)$; \tcp{ do not perform 
			the actual printing, but append $K$ to the tail of $Q$ instead}
	}
	\If{ ( $|Q|>0$ \textbf{ and } $\texttt{counter}\geq\tau_{\texttt{delay}}$ ) \textbf{ or } $|Q| > T+n^2$}{
		$K\leftarrow \texttt{remove\_from\_head}(Q)$;\tcp{remove the head $K$ of $Q$}
		$\texttt{print}(K)$;\tcp{perform the actual printing of $K$}
		$\texttt{counter}\leftarrow 0$;\;
	}
	$i\leftarrow i+1$;\;
}
\While{$|Q|>0$}{
	$K\leftarrow \texttt{remove\_from\_head}(Q)$;\tcp{remove the head $K$ of $Q$}
	$\texttt{print}(K)$;\tcp{perform the actual printing of $K$}
}
}
\end{algorithm}
\smallskip

Algorithm~\ref{ALGO:strict} takes  as input an $n$-vertex graph $G=(V,E)$, and provides a listing of all the maximal cliques $K$ of $G$.
An overview of the algorithm follows. 
As a Turing Machine can be programmed in order to simulate each step of the computation of any another Turing Machine, 
Algorithm~\ref{ALGO:strict} performs a step-by-step simulation of the computation performed by Algorithm~\ref{ALGO:main} on input $G$. 
Given a generic step of such a computation, say $\sigma_i$, we shall denote by $\sigma_{i+1}$ the next step 
within the sequence of all steps of the computation. In particular, we shall adopt the notation 
$\sigma_{i+1} \leftarrow \texttt{next\_step}(\texttt{list\_MC}(G), \sigma_i)$. 
Stated otherwise, we are assuming that any invocation of Algorithm~\ref{ALGO:main} on input $G$ 
leads to the following sequence of steps of computation:
\[
\big\langle \sigma_0,\;\; 
\sigma_1=\texttt{next\_step}(\texttt{list\_MC}(G), \sigma_0),\;\;  
\sigma_2=\texttt{next\_step}(\texttt{list\_MC}(G), \sigma_1),\;\; \ldots\;\;,\;\; \sigma_{\texttt{end}} \big\rangle 
\]
where each $\sigma_i$ represents the execution of a particular line within the corresponding reference pseudocode.
The rationale of this being that, at each one of those steps of execution $\sigma_i$, Algorithm~\ref{ALGO:strict} 
assesses how to best manage a queue $Q$ whose aim is to collect a suitable number of maximal cliques of $G$ 
in order to sustain the time delay to scheme. 

At each $\sigma_i$, the course of actions taken by Algorithm~\ref{ALGO:strict} on $Q$ depends on: 
\begin{enumerate}
\item the current size of $Q$, \ie the number of maximal cliques that are inside $Q$ at step $\sigma_i$;
\item the numeric value of the current step-counter $i$ reached by $\sigma_i$;
\item the particular line\footnote{\ie the particular line within the pseudocode of Algorithm~\ref{ALGO:main} and Algorithm~\ref{ALGO:solve}.} 
of Algorithm~\ref{ALGO:strict} that is executed at step $\sigma_i$;
\end{enumerate}
In particular, every \texttt{print(K)} operation performed by Algorithm~\ref{ALGO:main} is hooked by Algorithm~\ref{ALGO:strict},  
where the idea there is that of appending $K$ to the tail of $Q$ without printing it (immediately) as output, 
but with the intention to perform the actual printing operation later on, 
in such a way as to keep the time delay under $O(n^{2\omega(1,1,1/2)-2})$.

Going into the details, Algorithm~\ref{ALGO:strict} is organized into three phases:
(1) \emph{initialization}, (2) \emph{bootstrapping}, and (3) \emph{listing}. These are described next.
\begin{enumerate}
\item \textbf{Initialization Phase.} To start with, some variables gets initialized.
At line~1, $\tau_{\texttt{delay}}=c_0\lceil n^{2\omega(1,1,1/2)-2}\rceil=c_0\lceil n^{2.093362} \rceil$ 
marks the time delay that the procedure aims to sustain. Here, $c_0$ is some sufficiently large absolute constant 
(whose magnitude will be clarified in the proof of Proposition~\ref{prop:finaldelay}).
At line~2, $T=c_1 \lceil n^{\omega - 2\omega(1,1,1/2) + 5}\rceil=c_1 \lceil n^{3.2795019}\rceil$ 
is the number of maximal cliques that the bootstrapping phase will aim to collect (the magnitude of $c_1$ will be also 
clarified in the proof of Proposition~\ref{prop:finaldelay}).
Finally, at line~3, the queue $Q$ is initialized to be empty.

\smallskip
\begin{algorithm}[H]
\caption{The Bootstrapping Phase.}\label{ALGO:boot}
\DontPrintSemicolon
\SetKwProg{SubFn}{SubProcedure}{}{}
\nonl\SubFn{$\texttt{boot}(Q, T, G)$}{
\setcounter{AlgoLine}{0}
\KwIn{A reference to $Q$, a threshold $T$ on the size of $Q$, the input graph $G$.}
\KwOut{the index of the current computation step $\sigma_i$, that is reached after the bootstrap.}
$\sigma_0 \leftarrow $ the starting step of Algorithm~\ref{ALGO:main}'s computation sequence on input $G$.\;
$i\leftarrow 0$;\;
\While{$|Q| < T$ \textbf{ or } $\sigma_i\neq $ line~2 of Algorithm~\ref{ALGO:solve} }{
	$\sigma_{i+1} \leftarrow \texttt{next\_step}(\texttt{list\_MC}(G), \sigma_i)$;\;
	\If{$\sigma_{i+1} = \texttt{print(K)}$ at line~7 of Algorithm~\ref{ALGO:solve} }{
		$Q\leftarrow \texttt{append\_on\_tail}(Q, K)$; \tcp{don't execute 
			\texttt{print($K$)}, append $K$ to $Q$ instead}
	}
	$i\leftarrow i+1$;\;
}
\Return{i};
}
\end{algorithm}
\smallskip

\item \textbf{Bootstrapping Phase.} This phase begins (at line~4 of Algorithm~\ref{ALGO:strict}) 
by invoking Algorithm~\ref{ALGO:boot} on input $\langle Q,T,G\rangle$. The objective is to collect 
at least $T$ maximal cliques inside $Q$. For this reason, Algorithm~\ref{ALGO:boot} starts a step-by-step simulation of 
Algorithm~\ref{ALGO:main} on input $G$. The simulation starts, at line~1, by considering the first step $\sigma_0$  of the computation.
The subsequent steps of the computation are simulated within the \texttt{while-loop} defined at line~3 of Algorithm~\ref{ALGO:boot}, 
by invoking \texttt{next\_step()} at each iteration of line~4. Whenever Algorithm~\ref{ALGO:main} performs a \texttt{print($K$)} operation 
of some maximal clique $K$ (which is checked at line~5 of Algorithm~\ref{ALGO:boot}), 
then $K$ is appended to the tail of $Q$ at line~6 (without performing the actual printing operation).
After that $Q$ gets to contain at least $T$ elements, Algorithm~\ref{ALGO:boot} extends the simulation of 
Algorithm~\ref{ALGO:main} still for awhile. 
In particular, the simulation is extended until the end of the current iteration of the \texttt{while-loop} 
at line~2 of Algorithm~\ref{ALGO:solve} which it is being simulated (for this reason, the condition ``$\sigma_i\neq\text{line~2}$ 
of Algorithm~\ref{ALGO:solve}" is checked at line~3 of Algorithm~\ref{ALGO:boot}). 
Finally, Algorithm~\ref{ALGO:boot} halts at line~8, by returning the current step counter $i$ that has been reached so far. 

\item \textbf{Listing Phase.} The listing phase begins soon after, 
at line~5 of Algorithm~\ref{ALGO:strict}, where a \texttt{counter} variable is initialized. 
Then, Algorithm~\ref{ALGO:strict} enters within the \texttt{while-loop} at line~6, whose purpose 
is that of continuing the same simulation of Algorithm~\ref{ALGO:main} that Algorithm~\ref{ALGO:boot} had begun by bootstrapping.
This time the simulation process will continue until the end, \ie until last step $\sigma_{\texttt{end}}$ of Algorithm~\ref{ALGO:main}. 
For this reason, the condition ``$\sigma_i\neq\sigma_{\texttt{end}}$" is checked at each iteration of line~6.
Observe, that each step $\sigma_i$ gets iterated to $\sigma_{i+1}$ at step~7, where \texttt{next\_step()} is invoked.
Soon after, the \texttt{counter} variable is incremented at line~8, and then the current execution step $\sigma_{i+1}$ is inspected at line~9:
if $\sigma_{i+1}$ consists into a \texttt{print($K$)} operation 
(which may have been executed only at line~7 of Algorithm~\ref{ALGO:solve}), 
then the maximal clique $K$ is appended to the tail of $Q$ at line~10, and the actual printing operation is postponed.
At line~11 the procedure checks whether it is time to execute an ouput printing,
and this happens if and only if any one of the following two conditions is met:
\begin{itemize}
\item $Q$ is not empty and the simulation of Algorithm~\ref{ALGO:main} performed more than $\tau_{\texttt{delay}}$ steps since  
the last time that a printing operation was executed at line~13 of Algorithm~\ref{ALGO:strict} 
(to verify this, the condition ``$|Q|>0$ and $\texttt{counter}\geq\tau_{\texttt{delay}}$" is checked at line~11).
\item $Q$ contains more than $T+n^2$ elements (for this reason, the condition ``$|Q|>T+n^2$" is checked at line~11 as well).
\end{itemize}
If one of the above conditions is met, then a maximal clique $K$ is removed from the head of $Q$ at line~12, 
and it is outputted by executing \texttt{print($K$)} at line~13. 
In this case, the \texttt{counter} variable is also reset to zero  at line~14.
At line~15, the step counter $i$ gets incremented (so that to prepare the ground for the next step of the simulation). 
When the \texttt{while-loop} at line~6 is completed (\ie when the simulation of Algorithm~\ref{ALGO:main} reaches $\sigma_{\texttt{end}}$) 
then every maximal clique that is still inside $Q$ is removed from it at line~17 and outputted soon after at line~18 
(for this reason, the condition ``$|Q|>0$" is checked at line~16 of Algorithm~\ref{ALGO:strict}).
\end{enumerate}
This concludes the description of Algorithm~\ref{ALGO:strict}.

We are now in position to prove Theorem~1. We shall go through a sequence of propositions. 
\begin{Prop} 
On input $G = (V, E)$, the procedure Algorithm~\ref{ALGO:strict} provides a listing of all the
maximal cliques of $G$ without repetitions.
\end{Prop}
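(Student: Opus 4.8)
The plan is to derive the correctness of Algorithm~\ref{ALGO:strict} entirely from the already-established correctness of Algorithm~\ref{ALGO:main} (Proposition~\ref{prop:prealgocorrectness}), using only the elementary FIFO behaviour of the queue $Q$. The key observation I would make first is that Algorithm~\ref{ALGO:strict} performs nothing but a faithful step-by-step simulation of Algorithm~\ref{ALGO:main} on the same input $G$: the subprocedure \texttt{boot()} (Algorithm~\ref{ALGO:boot}) advances the simulation through an initial segment $\sigma_0, \sigma_1, \ldots$ of computation steps, and the \texttt{while-loop} at line~6 of Algorithm~\ref{ALGO:strict} resumes that very simulation from where \texttt{boot()} stopped and drives it to its final step $\sigma_{\texttt{end}}$. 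The only deviation from a literal run of Algorithm~\ref{ALGO:main} is the handling of the \texttt{print($K$)} instructions issued at line~7 of Algorithm~\ref{ALGO:solve}: each is intercepted (at line~5 of Algorithm~\ref{ALGO:boot}, resp.\ line~9 of Algorithm~\ref{ALGO:strict}) and, instead of being executed, causes $K$ to be appended to the tail of $Q$.

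Next I would check that the simulation actually runs to completion, i.e.\ reaches $\sigma_{\texttt{end}}$. This reduces to the termination of Algorithm~\ref{ALGO:main}, which consists of a single invocation of Batch-DFS (Algorithm~\ref{ALGO:solve}) on the finite tree $\T_G$; by the analysis of Section~\ref{sect:backtracking}, every node of $\T_G$ is pushed onto and later popped from $S_{\texttt{bt}}$ exactly once, so Algorithm~\ref{ALGO:solve} halts. In the same vein one verifies that the extra guard of \texttt{boot()} --- the clause requiring $\sigma_i$ to be line~2 of Algorithm~\ref{ALGO:solve} --- is eventually satisfied, because each individual iteration of the \texttt{while-loop} at line~2 of Algorithm~\ref{ALGO:solve} executes only finitely many steps; hence \texttt{boot()} returns and the listing loop subsequently terminates as well.

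I would then set up the bijection between the \texttt{print} instructions of a literal run of Algorithm~\ref{ALGO:main} and the \texttt{append\_on\_tail($Q,\cdot$)} operations of Algorithm~\ref{ALGO:strict}: since every step of Algorithm~\ref{ALGO:main} is simulated exactly once and every \texttt{print} is intercepted, each maximal clique $K$ that Algorithm~\ref{ALGO:main} would output is enqueued in $Q$ exactly once, and nothing else ever enters $Q$. By Proposition~\ref{prop:prealgocorrectness} these are precisely the maximal cliques of $G$, each occurring once. It then remains to argue that every element of $Q$ is eventually dequeued and genuinely printed: during the listing loop whenever a condition at line~11 fires, a \texttt{remove\_from\_head($Q$)} followed by \texttt{print} is executed, and all residual elements are flushed by the final \texttt{while-loop} at lines~16--18, which empties $Q$; thus $|Q|=0$ at the end. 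Since $Q$ is FIFO and each clique is enqueued exactly once, it is dequeued and printed exactly once, and in the same relative order as Algorithm~\ref{ALGO:main} would have printed it. That gives the claim.

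The main obstacle I anticipate is the bookkeeping required to make the word ``faithful'' precise: one has to confirm that the hand-off from \texttt{boot()} to the listing loop does not skip, duplicate, or reorder any computation step --- in particular any \texttt{print} instruction --- and that the intercept-and-enqueue substitution is the \emph{only} change relative to Algorithm~\ref{ALGO:main}; together with verifying that the final flush at lines~16--18 terminates with $|Q|=0$. Once these points are settled, correctness of Algorithm~\ref{ALGO:strict} is inherited from Proposition~\ref{prop:prealgocorrectness} and the FIFO invariant of $Q$, with no further combinatorial work needed.
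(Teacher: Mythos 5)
Your proposal is correct and follows essentially the same route as the paper's proof: it reduces correctness to Proposition~\ref{prop:prealgocorrectness}, observes that every maximal clique enters $Q$ exactly once via the intercepted \texttt{print} operations, and notes that every enqueued clique is eventually dequeued and printed (with the final flush at lines~16--18 emptying $Q$). Your additional remarks on termination of the simulation and the hand-off from \texttt{boot()} are a welcome tightening of details the paper leaves implicit, but they do not change the argument.
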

\begin{proof}
Recall that Algorithm~\ref{ALGO:strict} performs a simulation of Algorithm~\ref{ALGO:main},   
and that it hooks all of the corresponding output printing operations.
Also recall that, by Proposition~\ref{prop:prealgocorrectness}, Algorithm~\ref{ALGO:main} outputs every maximal clique of $G$ exactly once.
This implies that every maximal clique of $G$ must enter within the queue $Q$ exactly once, 
either at line~6 of Algorithm~\ref{ALGO:boot} or at line~10 of Algorithm~\ref{ALGO:strict}.
With this in mind, from lines~12-13 and lines~17-18 of Algorithm~\ref{ALGO:strict} it follows that 
whenever a maximal clique $K$ is removed from $Q$, then $K$ is also printed out. 
Notice that, at lines~16-18 of Algorithm~\ref{ALGO:strict}, $Q$ is emptied anyhow. 
These facts imply the thesis.
\end{proof}

\begin{Prop}\label{prop:finalcorrect}
Algorithm~\ref{ALGO:boot} (\ie the bootstrapping phase of Algorithm~\ref{ALGO:strict}) always halts within time:  
$\tau_{\texttt{boot}} = O\big(n^{\omega+3}\big)=O\big(n^{5.3728639}\big)$.
\end{Prop}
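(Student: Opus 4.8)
The plan is to reduce the running-time bound for the bootstrapping phase to the first-$x$-cliques bound already established for Algorithm~\ref{ALGO:main} in Proposition~\ref{prop:prealgo_timecomplexity}. First I would observe that Algorithm~\ref{ALGO:boot} performs nothing but a faithful step-by-step simulation of Algorithm~\ref{ALGO:main}, hooking each \texttt{print($K$)} operation so as to append $K$ to the tail of $Q$ (line~6) instead of actually printing it; such a simulation introduces only $O(1)$ overhead per simulated step in the model adopted throughout, and the extra bookkeeping of one \texttt{append\_on\_tail} per output is $O(1)$ charged to the step that produced the clique. Hence, up to constant factors, $\tau_{\texttt{boot}}$ equals the time Algorithm~\ref{ALGO:main} spends running from its start until the end of the iteration of the \texttt{while}-loop at line~2 of Algorithm~\ref{ALGO:solve} during which $|Q|$ first reaches $T$ (plus the $O(n^2)$ cost of constructing $K_0$ at line~1 of Algorithm~\ref{ALGO:main}, which is negligible here).

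Second, I would bound how far into the computation of Algorithm~\ref{ALGO:main} that iteration lies. Every iteration of the \texttt{while}-loop at line~2 of Algorithm~\ref{ALGO:solve} outputs exactly $|\B^{(\iota_j)}|\leq n^2$ maximal cliques at line~7, so the cumulative count of cliques appended to $Q$ first reaches $T$ inside one such iteration, and by the end of that iteration the bootstrap has appended $x^\star\leq T+n^2$ maximal cliques (and at most all $N$ maximal cliques of $G$, which is the relevant bound should $G$ have fewer than $T$ of them). Since $T=c_1\lceil n^{\omega-2\omega(1,1,1/2)+5}\rceil=\Theta(n^{3.2795019})$ we have $x^\star=O(T)$, and for $n$ large enough $n^2\leq T$. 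Thus the bootstrap simulates a prefix of Algorithm~\ref{ALGO:main}'s computation that outputs $x^\star=O(T)$ maximal cliques, followed by the completion of at most one further iteration of the \texttt{while}-loop at line~2 of Algorithm~\ref{ALGO:solve}.

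Third, I would invoke Proposition~\ref{prop:prealgo_timecomplexity} with $x=x^\star$: the time to output the first $x^\star$ maximal cliques is
\[
O\big(n^{\omega+3}+x^\star\,n^{2\omega(1,1,1/2)-2}\big)
= O\big(n^{\omega+3}+n^{\omega-2\omega(1,1,1/2)+5}\cdot n^{2\omega(1,1,1/2)-2}\big)
= O\big(n^{\omega+3}\big),
\]
the exponent of $T$ having been chosen precisely so that $x^\star n^{2\omega(1,1,1/2)-2}$ collapses to $n^{\omega+3}$. The single extra Batch-DFS iteration costs $O(n^{2\omega(1,1,1/2)})$ by Proposition~\ref{prop:rectangular} together with Proposition~\ref{prop:batch_dfs_complexity} (the $O(n^2)$-per-clique work performed by \texttt{pop\_from\_top()} through Algorithm~\ref{ALGO:compute-C} adds only $O(n^4)$), and since $2\omega(1,1,1/2)=4.093362<5.3728639=\omega+3$ this term is absorbed. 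Summing these contributions with the $O(1)$-per-step simulation and queueing overhead, I conclude $\tau_{\texttt{boot}}=O(n^{\omega+3})=O(n^{5.3728639})$.

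The step I expect to need the most care is the bookkeeping around the termination of Algorithm~\ref{ALGO:boot}: the loop does not stop exactly when $|Q|=T$ but only at the end of the current line-2 iteration of Algorithm~\ref{ALGO:solve}, so one must be explicit that this overshoot amounts to at most one extra batch of $n^2$ cliques and one extra iteration's worth of time --- both lower-order against $n^{\omega+3}$ --- and one should separately dispatch the degenerate case in which $G$ has fewer than $T$ maximal cliques, where the bootstrap simulates all of Algorithm~\ref{ALGO:main} but with an output count bounded by $T$, so that the same estimate goes through.
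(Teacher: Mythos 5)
Your proposal is correct and follows essentially the same route as the paper: both invoke the first-$x$ time bound of Proposition~\ref{prop:prealgo_timecomplexity} with $x=\Theta(T)$ and observe that the exponent of $T$ was chosen so that $T\,n^{2\omega(1,1,1/2)-2}=n^{\omega+3}$. Your additional bookkeeping (the at-most-one-extra-iteration overshoot, the simulation overhead, and the degenerate case of fewer than $T$ maximal cliques) is more careful than the paper's terse argument but does not change the approach.
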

\begin{proof} Recall that Algorithm~\ref{ALGO:boot} keeps the simulation of Algorithm~\ref{ALGO:main} going 
until the queue $Q$ doesn't get to contain at least $T=O(n^{\omega-2\omega(1,1,1/2)+5}) = O(n^{3.2795019})$ elements.

By Proposition~\ref{prop:prealgo_timecomplexity}, Algorithm~\ref{ALGO:main} collects $T$ elements within the following time bound: 
\begin{align*} \tau_{\texttt{boot}} & = O(n^{\omega+3} + Tn^{2\omega(1,1,1/2)-2}) =  
			O( n^{\omega+3} + n^{\omega-2\omega(1,1,1/2)+5} n^{2\omega(1,1,1/2)-2} )   \\ 
  				    & = O(n^{\omega+3}) = O\big(n^{5.3728639}\big) 
\end{align*}
Thus, Algorithm~\ref{ALGO:boot} also halts within time  
$\tau_{\texttt{boot}}=O(n^{\omega+3})=O\big(n^{5.3728639}\big)$.
\end{proof}

\begin{Prop}\label{prop:finaldelay}
The time delay between the outputting of any two consecutive maximal cliques in Algorithm~\ref{ALGO:strict} is:
$\tau_{\texttt{delay}}= O\big(n^{2\omega(1,1,1/2)-2}\big) = O\big(n^{2.093362}\big)$.
\end{Prop}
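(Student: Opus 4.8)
\end{Prop}

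\begin{proof}[Proof plan]
The plan is to bound the real time between two consecutive executions of \texttt{print($K$)} in Algorithm~\ref{ALGO:strict}, treating the listing \texttt{while}-loop at line~6 and the final drain at lines~16--18 separately. Working in the word-RAM model, the first observation is that a single round of the line~6 loop --- a call to \texttt{next\_step()}, which simulates one elementary step of Algorithm~\ref{ALGO:main}, together with the queue bookkeeping at lines~8--15 --- costs $O(1)$ time, the only exceptions being the rounds that reach line~13, which cost $O(n)$ since the maximal clique returned there by the simulated \texttt{pop\_from\_top()} of Algorithm~\ref{ALGO:main} is already materialized (it was built by \texttt{compute-C} during the simulation) and merely has to be written out. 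Hence the real time between two events of the simulation is $O(1)$ times the number of simulated steps in between, plus an additive $O(n)$ per executed \texttt{print}; since $\tau_{\texttt{delay}}=\Theta(n^{2\omega(1,1,1/2)-2})$ dominates $n$, it is enough to show that between two consecutive \texttt{print($K$)}'s at most $O(\tau_{\texttt{delay}})$ simulated steps elapse. As \texttt{counter} grows by one per round and, by the guard at line~11, a print fires and resets \texttt{counter} as soon as ``$|Q|>0$ and $\texttt{counter}\geq\tau_{\texttt{delay}}$'', this reduces to a single \emph{Non-Emptiness Claim}: $|Q|\geq 1$ at every step of the listing phase.

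To prove the Non-Emptiness Claim I would track $|Q|$ along the successive outer iterations of the simulated Algorithm~\ref{ALGO:solve}, each of which first pops a batch $\B$ --- appending every popped clique to $Q$ --- and then calls \texttt{children($\B$)}, which appends nothing. Three ingredients are needed. First, immediately after the bootstrap $|Q|\geq T=c_1\lceil n^{\omega-2\omega(1,1,1/2)+5}\rceil$, by the stopping condition of Algorithm~\ref{ALGO:boot} (the degenerate case where $G$ has fewer than $T$ maximal cliques, so that the whole enumeration fits inside the bootstrap, is trivial and I set it aside). Second, a \emph{full} iteration ($|\B|=n^2$) appends exactly $n^2$ cliques and, by Proposition~\ref{prop:prealgo_timecomplexity} together with Proposition~\ref{prop:rectangular}, is simulated in $O(n^{2\omega(1,1,1/2)})$ steps --- the $n^2$ \texttt{compute-C} calls inside the pops, each $O(n^2)$ by Proposition~\ref{compute-C}, add only $O(n^4)$, which is dominated since $2\omega(1,1,1/2)>4$ --- so it induces at most $O(n^{2\omega(1,1,1/2)})/\tau_{\texttt{delay}}=O(n^2)/c_0$ counter-driven removals from $Q$. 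Third, by Proposition~\ref{prop:batch2} at most $n$ iterations are \emph{partial} (those in which ``$S_{\texttt{bt}}=\emptyset$'' holds at line~4 of Algorithm~\ref{ALGO:solve}); each such iteration is simulated in $O(n^{\omega+2})$ steps by the Makino--Uno procedure, hence induces at most $O(n^{\omega+2})/\tau_{\texttt{delay}}=O(n^{\omega-2\omega(1,1,1/2)+4})/c_0$ removals.

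Combining these, choose $c_0$ larger than the hidden constant in the second ingredient, so that a full iteration has at most $n^2/2$ counter-driven removals against its $n^2$ appends. Since the appends all come first (the pop phase lasts only $O(n^4)\ll\tau_{\texttt{delay}}\cdot n^2$ steps, hence interleaves only $o(n^2)$ removals), the bulk of the removals comes afterwards during \texttt{children}, and the overflow removals at line~11 fire only while $|Q|>T+n^2$, a full iteration ends with $|Q|$ at least $\min\{T,\,|Q|\text{ at the iteration's start}\}$ and along the way never drops more than $o(n^2)$ below that. A partial iteration can instead decrease $|Q|$ by at most $O(n^{\omega-2\omega(1,1,1/2)+4})/c_0$; over the $\leq n$ of them this totals $O(n^{\omega-2\omega(1,1,1/2)+5})/c_0$, which is precisely where the exponent $\omega-2\omega(1,1,1/2)+5$ in the definition of $T$ is calibrated (note $n\cdot n^{\omega-2\omega(1,1,1/2)+4}=n^{\omega-2\omega(1,1,1/2)+5}$), so taking $c_1$ (equivalently the product $c_0c_1$) large enough keeps this total below $T/2$. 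By induction over the outer iterations, $|Q|\geq T/2$ at the start of each, and hence $|Q|\geq T/2-o(T)\geq 1$ at every intermediate step: the Non-Emptiness Claim follows, so consecutive \texttt{print}'s in the listing phase are $O(\tau_{\texttt{delay}})$ apart in real time. When the line~6 loop finally terminates, $\texttt{counter}<\tau_{\texttt{delay}}$ (else a print would have fired), so at most $\tau_{\texttt{delay}}$ steps separate the last such print from the loop's exit; afterwards lines~16--18 drain the $\leq T+n^2+1$ cliques still in $Q$ with $O(n)=O(\tau_{\texttt{delay}})$ real time between consecutive outputs. Altogether, every gap between two consecutively listed maximal cliques is $O(\tau_{\texttt{delay}})=O(n^{2\omega(1,1,1/2)-2})=O(n^{2.093362})$.

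The step I expect to be the crux is the third ingredient together with the balancing in the last paragraph: making airtight that the at-most-$n$ partial iterations drain $Q$ by a total of only $O(T)$, with a choice of constants $c_0,c_1$ consistent with the choice of $\tau_{\texttt{delay}}$ in the first paragraph --- this is exactly what pins down the otherwise opaque bootstrap size $T=c_1\lceil n^{\omega-2\omega(1,1,1/2)+5}\rceil$ (and, correspondingly, the $O(n^{\omega+3})$ bound on $\tau_{\texttt{boot}}$ of Proposition~\ref{prop:finalcorrect}, since collecting $T$ cliques costs $O(n^{\omega+3}+T\,n^{2\omega(1,1,1/2)-2})=O(n^{\omega+3})$).
\end{proof}
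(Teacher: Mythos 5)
Your proposal is correct and follows essentially the same route as the paper's proof: reduce the delay bound to the non-emptiness of $Q$, then split the simulated outer iterations of Algorithm~\ref{ALGO:solve} into full batches (where the ratio $n^{2\omega(1,1,1/2)}/\tau_{\texttt{delay}}\leq n^2$ shows removals never outpace the $n^2$ appends) and the at most $n$ partial batches bounded via Proposition~\ref{prop:batch2}, whose aggregate $O(n^{\omega+3})/\tau_{\texttt{delay}}$ drain is exactly what the bootstrap size $T$ is calibrated to absorb. Your write-up is, if anything, more explicit than the paper's about the interleaving of appends and removals within an iteration, the overflow condition $|Q|>T+n^2$, and the final drain at lines~16--18.
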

\begin{proof} Observe that every printing operation performed by Algorithm~\ref{ALGO:strict} is executed either
at line~13 or at line~18. The time delay between any two consecutive iterations of line~18 is only $O(1)$.
Thus, we shall focus on proving the thesis with respect to line~13.
Let's recall the functioning of Algorithm~\ref{ALGO:strict} and that of Algorithm~\ref{ALGO:solve}.
Consider any generic iteration of the \texttt{while-loop} at line~2 of Algorithm~\ref{ALGO:solve}, say the $\iota_j$ iteration. 
Also, recall that Algorithm~\ref{ALGO:solve} firstly collects a batch $\B^{(\iota_j)}$ of maximal cliques, 
through the execution of lines~4-7. Each maximal clique which is added to $\B^{(\iota_j)}$ at line~6  
would also be printed out at line~7 of Algorithm~\ref{ALGO:solve}. 
However, all of these output printings are hooked at line~9 of Algorithm~\ref{ALGO:strict}. Thus, each   
maximal clique $K$ within $\B^{(\iota_j)}$ is not printed out immediately (\ie at the time of the hooking), 
instead $K$ is added to $Q$ soon after at line~10 of Algorithm~\ref{ALGO:strict}. 
The rest of the analysis is divided in two cases. 
\begin{itemize}
\item \emph{Case~1.} 
If $|\B^{(\iota_j)}|=n^2$, then (as already observed in Fact~3 within the proof of Proposition~\ref{prop:prealgo_timecomplexity}) 
the simulation of the $\iota_j$ iteration of the \texttt{while-loop} at line~2 of Algorithm~\ref{ALGO:solve} 
takes time at most $c_0 n^{2\omega(1,1,1/2)}$ when $n$ is large enough and for some absolute constant $c_0>0$ 
(whose magnitude highly depends on the rectangular matrix multiplication algorithm employed at line~5 of Algorithm~\ref{ALGO:solve-IB}). 
As already observed in Fact~2 within the proof of Proposition~\ref{prop:prealgo_timecomplexity}, 
all maximal cliques in $\B^{(\iota_j)}$ are outputted with $O(n^2)$ time delay. 
These facts imply that Algorithm~\ref{ALGO:strict} can remove one element from $Q$ 
(at line~12) and print it out (soon after at line~13) every $\tau_{\texttt{delay}}=c_0\lceil n^{2\omega(1,1,1/2)-2}\rceil$ steps, 
without ever emptying $Q$ for this (provided $c_0$ is a sufficiently large constant, and provided $n$ is large enough); 
stated otherwise, during the simulation of $\iota_j$, 
at each iteration of line~11 it must hold $|Q|>0$ whenever $\texttt{counter}\geq \tau_{\texttt{delay}}$. 
Thus, Algorithm~\ref{ALGO:strict} actually outputs a maximal clique of $G$ at line~13 every 
$\tau_{\texttt{delay}}=O\big(n^{2\omega(1,1,1/2)-2}\big) = O\big(n^{2.093362}\big)$ steps.

As a side note, this also implies that at the last step of any such $\iota_j$ the queue $Q$ must contain at least 
as many elements as it contained at the first step of $\iota_j$.

Indeed, observe that: \[ \frac{c_0 n^{2\omega(1,1,1/2)}}{c_0 \lceil n^{2\omega(1,1,1/2)-2}\rceil}\leq n^2=|\B^{(\iota_j)}|.\]
\item \emph{Case~2.} If $|\B^{(\iota_j)}|< n^2$, then (as already observed in Fact~3 
within the proof of Proposition~\ref{prop:prealgo_timecomplexity}) the simulation of the 
$\iota_j$-th iteration of the \texttt{while-loop} at line~2 of 
Algorithm~\ref{ALGO:solve} takes time at most $c'_0 n^{\omega}|\B^{(\iota_j)}|< c'_0 n^{\omega+2}$ 
when $n$ is large enough and for some absolute constant $c'_0>0$ 
(whose magnitude highly depends on the square matrix multiplication algorithm that is  
employed at line~4 of \texttt{children()} within Algorithm~\ref{ALGO:main}). 
By Proposition~\ref{prop:prealgo_timecomplexity}, 
this \emph{Case~2} can occur at most $n$ times during the whole simulation of Algorithm~\ref{ALGO:solve}, 
so that the total (aggregate) time complexity that may be consumed (across all such possible occurrences of \emph{Case~2}) is at most 
 $c'_0 n^{\omega}|\B^{(\iota_j)}|n< c'_0 n^{\omega+3}$. 
Now, recall that Algorithm~\ref{ALGO:boot} had collected at least 
$T= c_1 \lceil n^{\omega - 2\omega(1,1,1/2) + 5}\rceil$ maximal cliques inside $Q$, 
for some absolute constant $c_1>0$. 
Also recall that, at each occurrence $\iota_{j'}$ of \emph{Case~1}, the queue $Q$ must contain  at the last step of $\iota_{j'}$
at least as many elements as it contained at the first step of $\iota_{j'}$. 
Finally, let us assume (without loss of generality) that we had picked $c_1$ such that $c_1\geq c'_0/c_0$.
These facts imply that, during any occurence $\iota_j$ of \emph{Case~2}, 
Algorithm~\ref{ALGO:strict} can remove one element from $Q$ (at line~12) 
and print it out (soon after at line~13) every $\tau_{\texttt{delay}}=c_0\lceil n^{2\omega(1,1,1/2)-2}\rceil$ steps, 
without ever emptying the queue $Q$ for this (provided that $c_0,c_1$ are sufficiently large absolute constants, and provided that 
$n$ is large enough); stated otherwise, during the simulation of $\iota_j$, 
at each iteration of line~11 it must hold $|Q|>0$ whenever $\texttt{counter}\geq \tau_{\texttt{delay}}$.
Thus, also in this \emph{Case~2}, Algorithm~\ref{ALGO:strict} actually prints out a maximal clique of $G$ at line~13 every 
$\tau_{\texttt{delay}}=O\big(n^{2\omega(1,1,1/2)-2}\big) = O\big(n^{2.093362}\big)$ steps.

Indeed, observe that: \[\frac{c'_0 n^{\omega+3}}{c_0\lceil 
n^{2\omega(1,1,1/2)-2}\rceil}\leq c_1 n^{\omega - 2\omega(1,1,1/2)+5} \leq T.\]
\end{itemize} 
Since there are no other cases to take into account, this suffices to conclude the proof. 
\end{proof}

\begin{Prop}\label{prop:finalspace} The overall space usage of Algorithm~\ref{ALGO:strict} is 
$O(n^{\omega - 2\omega(1,1,1/2) + 6})=O(n^{4.2795019})$. 
\end{Prop}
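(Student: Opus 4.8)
The plan is to split the memory footprint of Algorithm~\ref{ALGO:strict} into three contributions and bound each one in isolation: (i) the working space needed to carry out, step by step, the simulation of Algorithm~\ref{ALGO:main}; (ii) the queue $Q$; and (iii) the handful of scalar variables ($\tau_{\texttt{delay}}$, $T$, \texttt{counter}, the step index $i$). Contribution~(i) is immediate: advancing the simulation by one move via \texttt{next\_step} only requires keeping a full configuration of Algorithm~\ref{ALGO:main} (plus a constant-size ``program counter''), whose size is at most its own space usage, \ie $O(n^4)$ by Proposition~\ref{prop:prealgo_space}. Contribution~(iii) amounts to at most $O(n)$ words — the only variable that can become large is the step index $i$, which counts simulation steps and hence has $O(n)$ bits — and is negligible against the other two.

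The real content is the bound on contribution~(ii), \ie on $|Q|$ at all times; note first that each entry of $Q$ is a single maximal clique, occupying $O(n)$ words. I would argue that $|Q| = O(T + n^2)$ throughout the whole run. During bootstrapping, the \texttt{while-loop} at line~3 of Algorithm~\ref{ALGO:boot} runs until $|Q|\ge T$ \emph{and} the in-progress iteration of the \texttt{while-loop} at line~2 of Algorithm~\ref{ALGO:solve} has completed; since one such iteration appends at most $n^2$ cliques to $Q$ (Algorithm~\ref{ALGO:main} invokes Batch-DFS with $B=n^2$, so the batch collected at lines~4--7 of Algorithm~\ref{ALGO:solve} has capacity $n^2$), the queue holds at most $T+n^2$ cliques when \texttt{boot} returns. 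During the listing phase, every simulated step appends at most one clique to $Q$ (line~10 of Algorithm~\ref{ALGO:strict}), while the guard ``$|Q|>T+n^2$'' at line~11 — re-evaluated after every single step — triggers a \texttt{remove\_from\_head}$(Q)$ the instant the threshold is exceeded; hence $|Q|\le T+n^2+1$ at all times. Therefore $\texttt{Space}[Q] = O\big(n\,(T+n^2)\big) = O(nT + n^3)$.

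It then remains to substitute $T = c_1\lceil n^{\omega-2\omega(1,1,1/2)+5}\rceil$, which gives $nT = O(n^{\omega-2\omega(1,1,1/2)+6})$, and to verify that this term dominates. Since $\omega\le 2.3728639$ and $\omega(1,1,1/2)\le 2.046681$, the exponent is $\omega-2\omega(1,1,1/2)+6 = 4.2795019 > 4 > 3$, so both $n^3$ (the residual queue term) and $n^4$ (the simulation of Algorithm~\ref{ALGO:main}) are $O(n^{\omega-2\omega(1,1,1/2)+6})$; summing the three contributions yields the stated bound $O(n^{\omega-2\omega(1,1,1/2)+6}) = O(n^{4.2795019})$.

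The only mildly delicate point I expect is the uniform estimate $|Q| = O(T+n^2)$: one must be sure the line~11 guard fires before $Q$ can overshoot by more than a single clique, which rests on the guard being checked after \emph{every} step and on a step printing at most one clique, and one must also charge the $O(n^2)$ bootstrap overshoot against the slack built into the threshold ``$T+n^2$'' rather than ``$T$''. Note that the complementary fact — that $Q$ is never drained \emph{below} what is needed to sustain the delay — is not needed here; it was already established in the proof of Proposition~\ref{prop:finaldelay}.
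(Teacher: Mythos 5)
Your proof is correct and follows essentially the same route as the paper's: the simulation of Algorithm~\ref{ALGO:main} costs $O(n^4)$ space by Proposition~\ref{prop:prealgo_space}, the queue $Q$ is capped at $T+n^2$ cliques of size $O(n)$ each by the guard at line~11, and substituting $T=O(n^{\omega-2\omega(1,1,1/2)+5})$ gives the dominant term $O(n^{\omega-2\omega(1,1,1/2)+6})$. Your additional care about the bootstrap overshoot of at most $n^2$ cliques and the per-step accounting in the listing phase only makes explicit what the paper leaves implicit.
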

\begin{proof} 
Recall that Algorithm~\ref{ALGO:strict} performs a simulation of Algorithm~\ref{ALGO:main}, which 
consumes at most $O(n^4)$ space by Proposition~\ref{prop:prealgo_space}. The queue $Q$ maintained by 
Algorithm~\ref{ALGO:strict} can grow up to contain at most $T+n^2$ elements 
(because of lines~11-14 of Algorithm~\ref{ALGO:strict}). Since 
$T=O(n^{\omega - 2\omega(1,1,1/2) + 5})=O(n^{3.2795019})$ by definition, 
and each maximal clique has size $O(n)$, then the overall space usage of the procedure is 
$O(n^{\omega - 2\omega(1,1,1/2) + 6})=O(n^{4.2795019})$.
\end{proof}

At this point, Theorem~\ref{thm:strict} follows from Proposition~\ref{prop:finalcorrect}, Proposition~\ref{prop:finaldelay} and Proposition~\ref{prop:finalspace}.

\section{Conclusion}\label{sect:conclusion}
In this paper, we improved the asymptotics for the exploration of the 
RS-tree associated to {\mainproblem} by grouping the offsprings' computation in a novel manner. 
In summary, our idea was to rely on \emph{rectangular} fast matrix multiplication 
in order to compute all the children of $n^2$ maximal cliques in one single shot.
%In particular, this allowed us to improve the tightest known upper bound 
%on the time delay complexity of {\mainproblem}, bringing it from $O(n^{2.3728639})$ to $O(n^{2.093362})$. 
%In passing, we have presented the desing and analysis of a novel backtracking procedure 
%in order to keep the exploration process going on, one batch after another, 
%by consuming only polynomial space overall -- we believe that this may be of general 
%interest and applicability in the context of some other listing problems 
%that admit polynomial time delay solutions. 
The major open question, on this way, is that to understand whether or not the {\mainproblem} problem admits  
$O(n^{2+o(1)})$ time delay algorithms that meanwhile maintain both the bootstrapping time and the working space polynomial in $n$. 

\thispagestyle{plain}  % number page 1
\pagestyle{plain} % number all pages

% ============================================================================
%\section{Conclusion}
% ============================================================================

\bibliographystyle{mynatstyle}
\bibliography{biblio}

\begin{thebibliography}{11}
\providecommand{\natexlab}[1]{#1}
\providecommand{\url}[1]{\texttt{#1}}
\expandafter\ifx\csname urlstyle\endcsname\relax
  \providecommand{\doi}[1]{doi: #1}\else
  \providecommand{\doi}{doi: \begingroup \urlstyle{rm}\Url}\fi

\bibitem[Avis and Fukuda(1993)]{AvisFukuda93}
{\sc D.~Avis and K.~Fukuda}, {\em Reverse search for enumeration}, Discrete
  Applied Mathematics, 65 (1993), pp.~21--46.

\bibitem[Chiba and Nishizeki(1985)]{Chiba85}
{\sc N.~Chiba and T.~Nishizeki}, {\em {Arboricity and subgraph listing
  algorithms}}, SIAM J. Comput., 14 (1985), pp.~210--223.

\bibitem[Eppstein and Strash(2011)]{Eppstein11}
{\sc D.~Eppstein and D.~Strash}, {\em Listing all maximal cliques in large
  sparse real-world graphs}, in Experimental Algorithms, P.~Pardalos and
  S.~Rebennack, eds., vol.~6630 of Lecture Notes in Computer Science, Springer
  Berlin Heidelberg, 2011, pp.~364--375.

\bibitem[Eppstein et~al.(2010)Eppstein, Löffler, and Strash]{Eppstein10}
{\sc D.~Eppstein, M.~Löffler, and D.~Strash}, {\em Listing all maximal cliques
  in sparse graphs in near-optimal time}, in Algorithms and Computation,
  O.~Cheong, K.-Y. Chwa, and K.~Park, eds., vol.~6506 of Lecture Notes in
  Computer Science, Springer Berlin Heidelberg, 2010, pp.~403--414.

\bibitem[Gall(2012)]{Gall12}
{\sc F.~L. Gall}, {\em Faster algorithms for rectangular matrix
  multiplication}, in 53rd Annual {IEEE} Symposium on Foundations of Computer
  Science, {FOCS} 2012, New Brunswick, NJ, USA, October 20-23, 2012, 2012,
  pp.~514--523.

\bibitem[Gall(2014)]{Gall14}
{\sc F.~L. Gall}, {\em Powers of tensors and fast matrix multiplication}, in
  International Symposium on Symbolic and Algebraic Computation, {ISSAC} '14,
  Kobe, Japan, July 23-25, 2014, 2014, pp.~296--303.

\bibitem[Huang and Pan(1998)]{Huang98}
{\sc X.~Huang and V.~Y. Pan}, {\em Fast rectangular matrix multiplication and
  applications}, Journal of Complexity, 14 (1998), pp.~257 -- 299.

\bibitem[Johnson et~al.(1988)Johnson, Yanakakis, and Papadimitriou]{Johnson88}
{\sc D.~Johnson, M.~Yanakakis, and C.~Papadimitriou}, {\em On generating all
  maximal independent sets}, Info. Proc. Lett.,  (1988), pp.~119--123.

\bibitem[Makino and Uno(2004)]{MU04}
{\sc K.~Makino and T.~Uno}, {\em New algorithms for enumerating all maximal
  cliques}, in Algorithm Theory - SWAT 2004, T.~Hagerup and J.~Katajainen,
  eds., vol.~3111 of Lecture Notes in Computer Science, Springer Berlin
  Heidelberg, 2004, pp.~260--272.

\bibitem[Moon and Moser(1965)]{MoonMoser65}
{\sc J.~Moon and L.~Moser}, {\em On cliques in graphs}, Israel Journal of
  Mathematics, 3 (1965), pp.~23--28.

\bibitem[Tsukiyama et~al.(1977)Tsukiyama, Ide, Ariyoshi, and
  Shirakawa]{TsukiyamaIAS77}
{\sc S.~Tsukiyama, M.~Ide, H.~Ariyoshi, and I.~Shirakawa}, {\em A new algorithm
  for generating all the maximal independent sets}, {SIAM} J. Comput., 6
  (1977), pp.~505--517.

\end{thebibliography}

\label{BibliographyPage}
% \thispagestyle{fancy} % TODO REMOVE FOR SUBMISSION
% ----------------------------------------------------------------------------
% APPENDICES -----------------------------------------------------------------
% ----------------------------------------------------------------------------
\clearpage
\appendix\label{app:time_complexity}
\section{\appendixname~A: Analysis of the Time Complexity (Extended Version)}
\label{app:time_complexity}
In this section it is shown that, according to the current upper bounds on rectangular fast matrix multiplication \cite{Gall12, Huang98}, 
the optimal size of the batch of maximal cliques $\B$ turns out to be $|\B|=n^2$. To start with, recall from Proposition~\ref{prop:reduction} that $M_{\B,G}$ 
can be computed by performing an $|\B|\times n$ by $n\times n^2$ matrix product. Also recall that, by computing $M_{\B,G}$, one solves in one single shot $|\B|$ problem instances, 
\ie $\I^P$ for every $P\in\B$. Let $k\in\Q$ be such that $|\B|=\lfloor n^k\rfloor$. 
Then, computing $M_{\B,G}$, the \emph{amortized} time $\texttt{Time}\big[\I^{P}\big]$ for solving each problem $\I^P$ for $P\in B$ can be bounded as follows,  
where $\texttt{Time}\big[M_{\B,G}\big]$ denotes the time it takes to compute the matrix product $M_{\B}\cdot M_{G}$:
\begin{align*}
\texttt{Time}\big[\I^{P}\big] &   = O\Big(\frac{\texttt{Time}\big[\I^{\B}\big]}{|\B|}\Big) 
                                  = O\Big(\frac{\texttt{Time}\big[\K^{\B}\big]+|\B|n^2}{|\B|}\Big) \\ 
			      &   = O\Big(\frac{\texttt{Time}\big[M_{\B,G}\big]}{\lfloor n^k \rfloor} + n^2\Big) \\ 
			      &   = O\Big( n^{\omega(k, 1, 2)-k} + n^2\Big) \\  
\end{align*}
Our aim would be to find $k\in [0, +\infty)\cap\Q$ such that $\omega(k, 1, 2)-k$ attains its global minimum value. 
Even though the exact values of $\omega(k, 1, 2)$ are currently unknown, 
one can nevertheless minimize the functions that arise from the currently known upper bounds on $\omega(k, 1, 2)$.
In this work we consider the bound $f_{\text{HP98}}$ of Huang and Pan~\cite{Huang98}, and the bound $f_{\text{LG12}}$ of Le~Gall~\cite{Gall12}. 
In particular, $f_{\text{LG12}}$ gives the best upper bound on $\omega(2, 1, 2)=2\omega(1,1,1/2)$ which is currently known in the literature. 
These bounds have been obtained within the framework of \emph{bilinear algorithms}~\cite{Huang98, Gall12}.
Indeed, presently and historically, all the known algorithms supporting the record asymptotic complexity 
estimates for matrix multiplication have been devised as bilinear algorithms. In such framework, 
the complexity bounds are expressed in terms of the minimum number of bilinear multiplications needed for the computation, 
as it can be shown that the number of arithmetic additions or scalar multiplications affect the cost only in a negligible way, see e.g.~\cite{Huang98, Gall12}.
Stated otherwise, it is known in the literature~\cite{Huang98} that the minimum number $R(m,n,p)$ of bilinear multiplications 
used in all bilinear algorithms for $m\times n$ by $n\times p$ matrix multiplications 
is an appropriate measure for the corresponding (arithmetic) asymptotic complexity $C(m,n,p)$.

The following equalities are also known in the literature, see e.g.~\cite{Huang98}.
\begin{align*}
\omega(ar, as, at) &= a\omega(r,s,t) \;\;\;\;\;\;\;\;\;\;\;\;\;\; \text{(\emph{homogeneity})} \\
\omega(r,s,t) = \omega(r,t,s) = \omega(s, r, t) &= \omega(s, t, r) = \omega(t, r, s) = \omega(t, s, r)
\end{align*}
Finally, we shall express our bounds by considering the following two quantities:
\begin{align*}
\alpha & =\sup\{k\in \Q \mid \omega(1,1,k)=2\} >  0.30298 &  \text{(ref.~\cite{Gall12})}\\
\omega & = \omega(1,1,1)  \leq  2.3728639 & \text{(ref.~\cite{Gall14})}
\end{align*}

\subsection{The upper bounds of Huang and Pan (1998).}

The first function $g_1$ is defined as follows: \[g_1(k)= f_{\text{HP98}}(k)-k,\text{ for every } k\in[0, +\infty),\] 
here, $f_{\text{HP98}}:[0, +\infty)\rightarrow\RR$ 
is a piecewise-linear function, which was essentially pointed out by Huang and Pan in~\cite{Huang98}, 
it satisfies $\omega(k,1,2)\leq f_{\text{HP98}}(k)$ for every $k\in [0, +\infty)\cap\Q$.

Here below, we derive an analytic closed-form formula for $f_{\text{HP98}}(k)$.
\begin{itemize}

\item if $k\in [0, 1)\cap\Q$, we consider the complexity exponent: $\omega(k,1,2)=\omega(2,1,k)$.

The corresponding upper bound is provided in ``\cite{Huang98}, section 8.3, equation 8.2", where $\epsilon>0$ is some small absolute constant: 
\[
f_{\text{HP98}}(1) = \left\{ 
\begin{array}{ll}
 3+\epsilon & \;\;\;\; \text{, if $k\in [0, \alpha]$} \\ 
 \frac{2(1-\alpha) + (1-k) + (\omega-1)(k-\alpha)}{1-\alpha} + \epsilon & \;\;\;\; \text{, if $k\in (\alpha, 1)$}
\end{array}
\right.
\]

\item if $k=1$, we consider the complexity exponent $\omega(1,1,2)$.

The corresponding upper bound is provided in ``\cite{Huang98}, section 8.1, at line~13":
\[
f_{\text{HP98}}(1)    = 3.334 
\]

\item if $k\in (1, 2)\cap\Q$, we consider the complexity exponent: 
\begin{align*}
\omega(k, 1, 2) & = \omega(2,k,1) \\
                & = k\,\omega(2/k, 1, 1/k) 
\end{align*} 

The corresponding upper bound is provided in ``\cite{Huang98}, section 8.3, equation (8.2)", where $\epsilon>0$ is some small absolute constant:
\begin{align*}
f_{\text{HP98}}(k) & = k\cdot\left(\frac{\frac{2}{k}(1-\alpha) + (1-\frac{1}{k}) + (\omega-1)(\frac{1}{k}-\alpha)}{1-\alpha}+\epsilon\right), & 
		\text{\parbox{3.5cm}{for every $k\in (1,2)$.}}
\end{align*}

\item if $k=2$, we consider the complexity exponent: \[\omega(2,1,2)=2\omega(1,1,1/2).\] 
The corresponding upper bound is provided in ``\cite{Huang98}, section 8.2, equation (8.1)":
\begin{align*}
f_{\text{HP98}}(2) & = 2\cdot \left.\frac{2(1-r) + (r-\alpha)\omega}{1-\alpha}\right|_{r=\frac{1}{2}} \\
     & = 4.2107878		
\end{align*}

\item if $k\in (2,+\infty)\cap\Q$, we consider the complexity exponent: 
\begin{align*}
	\omega(k,1,2) & = \omega(k,2,1) \\ 
	              & = 2\,\omega(k/2, 1, 1/2)      
\end{align*}
The corresponding upper bound is provided in ``\cite{Huang98}, section 8.2, equation (8.1)":
\begin{align*}
f_{\text{HP98}}(k) & = 2\cdot\left(\frac{\frac{k}{2}(1-\alpha) + \frac{1}{2} 
	+ (\omega-1)(\frac{1}{2}-\alpha) }{1-\alpha} + \epsilon \right), \;\;\;\;
 		\text{\parbox{3.5cm}{for every $k\in (2,+\infty)$ and some small $\epsilon>0$.}}
\end{align*}

\end{itemize}

In summary, $g_1$ can be defined by the following formula, for every $k\in [0, +\infty)$ and $\epsilon>0$ is some small absolute constant:
\[
g_1(k) = f_{\text{HP98}}(k) - k = \left\{
\begin{array}{ll}
 3 - k + \epsilon  & \;\;\;\; \text{, if $k\in [0, \alpha]$} \\ 
 \frac{2(1-\alpha) + (1-k) + (\omega-1)(k-\alpha)}{1-\alpha} - k + \epsilon & \;\;\;\; \text{, if $k\in (\alpha, 1)$} \\ 
2.334 & \;\;\;\; \text{, if $k=1$} \\ 
k\cdot\left(\frac{\frac{2}{k}(1-\alpha) + (1-\frac{1}{k}) + (\omega-1)(\frac{1}{k}-\alpha)}{1-\alpha}+\epsilon\right) - k & \;\;\;\; \text{, if $k\in (1,2)$.} \\ 
2.2107878 & \;\;\;\; \text{, if $k=2$} \\ 
2\cdot\left(\frac{\frac{k}{2}(1-\alpha) + \frac{1}{2} 
	+ (\omega-1)(\frac{1}{2}-\alpha) }{1-\alpha} + \epsilon \right) - k & \;\;\;\; \text{, if $k\in (2, +\infty)$} 
\end{array}
\right.
\]

The qualitative behaviour of $g_1$ is traced in \figref{fig:plot_bound1}, with a filled blue colored line. 

At this point, observe that $g_1(k)$ is piecewise linear and that it attains its global minimum for $k = 2$, 
\ie \[ g_1(2)= \min_{k\in [0, +\infty)} g_1(k) = 2.2107878 .\] 

%A qualitative behaviour is shown in \figref{fig:plot_bound}, where the plot of $g_1$ and $g_2$ is depicted.

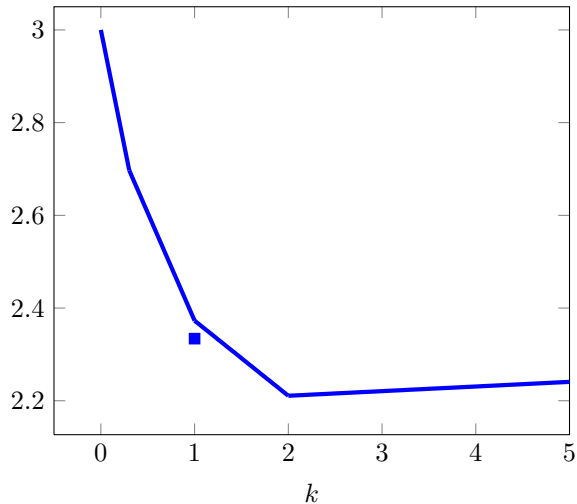
\begin{figure}[h!] 
\centering
\begin{tikzpicture}
\begin{axis}[xmax=5, ymax=3.05, xlabel={$k$}]
\newcommand\alphafm{0.30298}
\newcommand\omegafm{2.3728639}
\newcommand\messfm{(2*(1-\alphafm)  + (\omegafm -1)*(1-\alphafm))/(1-\alphafm)-1}
\newcommand\bigmessfm{( 2*(1-\alphafm) + (\omegafm-1)*(1-\alphafm) ) / (1-\alphafm) + 0.05 - 1}

  \addplot[
	blue, ultra thick,   
	domain= 0 : \alphafm,
	samples=50
  ] { 3-x };

  \addplot[
	blue, ultra thick,  
	domain= \alphafm : 1,
	samples=50
  ] { (2*(1-\alphafm) + (1-x) + (\omegafm -1)*(x-\alphafm))/(1-\alphafm)-x };

  \addplot[
	scatter,
	only marks, 
	point meta=explicit symbolic, 
	scatter/classes={
		a={mark=square*, blue} % 
	},
   ] table[meta=label] {
	x   y          label 
  	1   2.334   a  
  };

  \addplot[
	blue, ultra thick, 
	domain= 1 : 2,
	samples=50
  ] { x*( ( (2/x)*(1-\alphafm) + (1-(1/x)) + (\omegafm-1)*((1/x)-\alphafm) ) / (1-\alphafm) ) - x };

  \addplot[
	blue, ultra thick,  
	domain= 2 : 5,
	samples=50
  ] { 2*( (x/2)*(1-\alphafm) + 0.5 + (\omegafm-1)*(0.5-\alphafm)) / (1-\alphafm) - 0.99*x - 0.02 };

%  \addplot[
%	scatter,
%	only marks, 
%	point meta=explicit symbolic, 
%	scatter/classes={
%		a={mark=square*, red} % 
%	},
%   ] table[meta=label] {
%	x   y          label 
%  	1   2.256689   a  
%  };

%  \addplot[red, dotted] coordinates {
%  	(1, 2.256689) (1/0.5, 2.093362)   
%  };

%  \addplot[red, ultra thick, dashed] coordinates {
%  	(1/0.5, 2.093362) (1/0.45, 2.2824489) (1/0.4, 2.5304375) (1/0.35, 2.8653429) (1/0.34, 2.9463853)
%        (1/0.33, 3.0331485) % (1/0.32, 3.1261594)  
 % };

\end{axis} 
\end{tikzpicture} 
\caption{Plot of $g_1(k)$ for $k\in [0,5]$.}\label{fig:plot_bound1} 
\end{figure} 

\subsection{The upper bounds of Le~Gall (2012)}

In a similar way, the second function $g_2$ is defined as follows: 
\[ g_2(k) = k f_{\text{LG12}}(k)-k, \text{ for every } k\in \{1\}\cup [2, +\infty). \] 
here, $g_2$ takes into account the upper bound $f_{\text{LG12}}(k)$ for 
$\omega(1,1,1/k)$, which was established by Le~Gall~in~\cite{Gall12}.
We remark that, at the current state of the art, the upper bounds of Le~Gall apply to $\omega(r,s,t)$ if and only if $r=s$.
For this reason, when $k\in (2,+\infty)\cap\Q$, we were able to apply Le~Gall's bounds on 
$\omega(k,1,2)$ only by relying on the following upper bound: 
\begin{align*}
\omega(k,1,2) & \leq  \omega(k,1,k) & \text{(for every  $k\geq 2$)} \\ 
              &  =   k\,\omega(1,1,1/k) & \text{(by homogeneity)} \\
	      & \leq k f_{\text{LG12}}(k). &  
\end{align*}
\begin{figure}[h!]
\centering
	\begin{tabular}[b]{| r | r |}
		\hline
	 	\multicolumn{1}{|c|}{$k$}  & \multicolumn{1}{c|}{$g_2$} \\
		\hline
		  1 & 2.256689 \\ 
                  $0.5^{-1}$ & 2.093362 \\
		  $0.45^{-1}$ &  2.2824489 \\
                  $0.4^{-1}$ &  2.5304375 \\ 
                  $0.35^{-1}$ &  2.8653429 \\ 
                  $0.34^{-1}$ &  2.9463853 \\ 
                  $0.33^{-1}$ &  3.0331485 \\  
		\hline
	\end{tabular}
\caption{Some values of $g_2$}\label{tab:LeGallData}	
\end{figure}
In addition, we applied Le~Gall's bounds on $n\times n$ by $n\times n^2$ matrix products by considering the complexity 
exponent $\omega(1,1,2)$, which is actually one of those explicitly studied by Le~Gall~in~\cite{Gall12}. 
Notice that, when $k\in (1,2)$, it is not possible to apply the results of Le~Gall~\cite{Gall12} 
to bound $\omega(k,1,2)$, because $k\neq 1, k\neq 2$ and $1\neq 2$ so that the above mentioned 
condition (\ie that $r=s$ in $\omega(r,s,t)$) doesn't apply in that case. 
This explains why $g_2(k)$ is defined on $k\in \{1\}\cup [2, +\infty)$.
The qualitative behaviour, and many exact values, of $f_{\text{LG12}}$ were evaluated in \cite{Gall12}, 
by solving a non-linear optimization problem with the computer program Maple (see ``\cite{Gall12}, Section~1, page~4, Table~1 and Figure~1"). 
Here above, in \figref{tab:LeGallData}, we provide some data for $g_2(k)$.
This allows us to show the qualitative behaviour of $g_2$, 
as it is traced in \figref{fig:plot_bound} with a dashed red colored line.
In summary, the results in~\cite{Gall12} allow us to assert that 
$g_2(1)=2.256689 > 2.093362=g_2(2)$ and that $g_2(k)$ is monotone increasing in $[2, +\infty)$.
Concerning its global minimization, since $g_2(1)=2.256689 > 2.093362=g_2(2)$ 
and since $g_2(k)$ is monotone increasing in $[2, +\infty)$, the following holds:
\[
g_2(2) = \min_{k\in \{1\}\cup[2, +\infty)} g_2(k) = 2.093362. 
\]

The qualitative behaviour of $g_2$ is traced in \figref{fig:plot_bound1}, with a dashed red colored line. 
The graphic shows that $g_1$ and $g_2$ perfectly agree on their argument of minimum value, which is $k=2$.

\begin{figure}[h!] 
\centering
\begin{tikzpicture}
\begin{axis}[xmax=5, ymax=3.5, xlabel={$k$}, 
legend entries={\text{$g_1(k)$, ref.\cite{Huang98}},
                \text{$g_2(k)$, ref.\cite{Gall12}}}, 
legend pos=outer north east
]
\addlegendimage{only marks,blue}
\addlegendimage{only marks,red}

\newcommand\alphafm{0.30298}
\newcommand\omegafm{2.3728639}
\newcommand\messfm{(2*(1-\alphafm)  + (\omegafm -1)*(1-\alphafm))/(1-\alphafm)-1}
\newcommand\bigmessfm{( 2*(1-\alphafm) + (\omegafm-1)*(1-\alphafm) ) / (1-\alphafm) + 0.05 - 1}

  \addplot[
	blue, ultra thick,   
	domain= 0 : \alphafm,
	samples=50
  ] { 3-x };

  \addplot[
	blue, ultra thick,  
	domain= \alphafm : 1,
	samples=50
  ] { (2*(1-\alphafm) + (1-x) + (\omegafm -1)*(x-\alphafm))/(1-\alphafm)-x };

  \addplot[
	scatter,
	only marks, 
	point meta=explicit symbolic, 
	scatter/classes={
		a={mark=square*, blue} % 
	},
   ] table[meta=label] {
	x   y          label 
  	1   2.334   a  
  };

  \addplot[
	blue, ultra thick, 
	domain= 1 : 2,
	samples=50
  ] { x*( ( (2/x)*(1-\alphafm) + (1-(1/x)) + (\omegafm-1)*((1/x)-\alphafm) ) / (1-\alphafm) ) - x };

  \addplot[
	blue, ultra thick,  
	domain= 2 : 5,
	samples=50
  ] { 2*( (x/2)*(1-\alphafm) + 0.5 + (\omegafm-1)*(0.5-\alphafm)) / (1-\alphafm) - 0.99*x - 0.02 };

  \addplot[
	scatter,
	only marks, 
	point meta=explicit symbolic, 
	scatter/classes={
		a={mark=square*, red} % 
	},
   ] table[meta=label] {
	x   y          label 
  	1   2.256689   a  
  };

  \addplot[red, dotted] coordinates {
  	(1, 2.256689) (1/0.5, 2.093362)   
  };

  \addplot[red, ultra thick, dashed] coordinates {
  	(1/0.5, 2.093362) (1/0.45, 2.2824489) (1/0.4, 2.5304375) (1/0.35, 2.8653429) (1/0.34, 2.9463853)
        (1/0.33, 3.0331485) (1/0.32, 3.1261594) (1/0.31, 3.2260097) (1/0.30298, 3.3005479)
  };

  \addplot[
	red, ultra thick, dashed,   
	domain= 1/0.30298 : 5,
	samples=50
  ] { x*2-x };

\end{axis} 
\end{tikzpicture} 
\caption{Plot of $g_1(k)$ for $k\in [0,5]$, and of $g_2(k)$ for $k\in\{1\}\cup[2, +\infty)$.}\label{fig:plot_bound1} 
\end{figure}
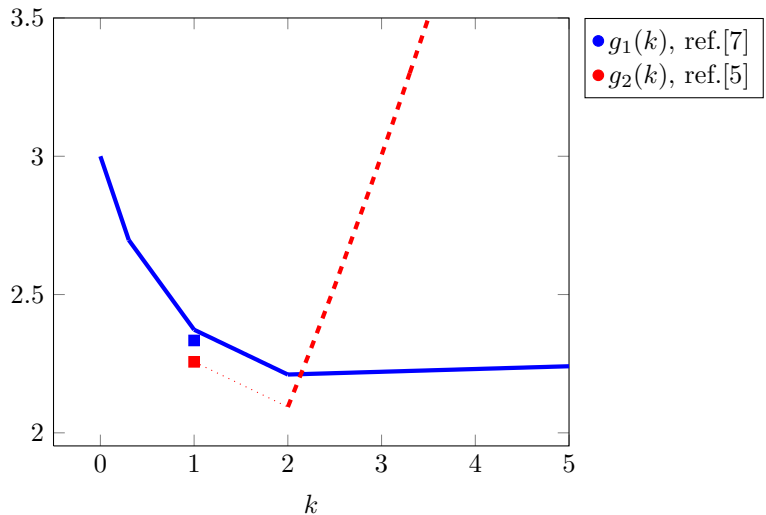 

\paragraph{Conclusion of Appendix A.}
In summary, both estimates~\cite{Gall12}~and~\cite{Huang98} indicate that the minimum complexity comes at $k=2$, 
namely, they both indicate that the optimal size of the batch of maximal cliques $\B$ is given by $|\B|=n^2$.

\section{\appendixname~B: A Reduction from $\I^{P}$ to {\sc QSFI}}
\label{app:bitwise_and}
In this section it is shown how to reduce $\I^{P}$ (\ie the problem of generating all the maximal clique children $C$ of 
any maximal clique parent $P$) to the following query problem named {\sc QSFI}.

\smallskip
\framebox{
\begin{minipage}{12cm}
\textsc{Problem:} {\sc Query-Set-Family-Intersection (QSFI)}.

\rule{\textwidth}{1pt}
They are fixed two positive integers $m,n\in\N$ 
and a family of sets $\{S_k\}_{k\in [m]}$ such that $S_k\subseteq [n]$ for every $k\in [m]$.

%\rule{\textwidth}{0.5pt}
\textsc{Task:} Fixed $\langle m, n, \{S_k\}_{k\in [m]}\rangle$, 
decide (possibly many) queries $Q(P,k)$ of the following form: 
\[Q(P,k)=\text{``is it true that $P\cap S_k\neq\emptyset$ holds ?"}, \]
where $P\subseteq [n]$ and $k\in [m]$.
\end{minipage}
}
\smallskip

The reduction to {\sc QSFI} follows here below.
\begin{Prop}[Reduction from $\I^{P}$ to {\sc QSFI}]\label{prop:reduction2SFI}
Let $P$ be a maximal clique of any given $n$-vertex graph $G=(V,E)$. 
For every $i,j\in [n]$, define the following sets $A_i,B_j\subseteq [n]$: 
\[A_i=V_{< i}\cap \Gamma(i) \text{ and } B_j=\Gamma(j).\] 
Moreover, let $m=n^2$ and consider the set family $\{S_{i,j}\}_{i,j\in [n]}$ which is defined as follows:
\[
	S_{i,j} = A_i\setminus B_j \text{, for every $i,j\in [n]$}. 
\]
Now, consider an instance of QSFI in which $\langle m, n, \{S_{i,j}\}_{i,j\in [n]}\rangle$ is fixed as we have just mentioned. 
Also recall that $(i,j)$ is \emph{good} with respect to $P$ 
if and only if there exists $u\in P_{< i}\cap \Gamma(i)$ such that $\{u, j\}\not\in E$.
Then, the following holds:
\[
(i,j)\in [n]\times [n] \text{ is good w.r.t. } P \iff Q(P, (i,j)) \text{ is } \texttt{YES} \iff P\cap S_{i,j}\neq\emptyset.
\]
\end{Prop}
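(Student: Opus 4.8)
The plan is to prove the stated triple equivalence by a short chain of elementary set manipulations, since everything here is really an unfolding of definitions. Note first that the family $\{S_{i,j}\}_{i,j\in[n]}$ does not depend on $P$: both $A_i=V_{<i}\cap\Gamma(i)$ and $B_j=\Gamma(j)$ are determined by $G$ and the indices alone, so it is legitimate to fix $\langle m,n,\{S_{i,j}\}_{i,j\in[n]}\rangle$ once as a {\sc QSFI} instance and then answer the queries $Q(P,(i,j))$. The rightmost equivalence, $Q(P,(i,j))$ is $\texttt{YES}$ iff $P\cap S_{i,j}\neq\emptyset$, is then immediate from the definition of a {\sc QSFI} query together with the choice $S_{i,j}=A_i\setminus B_j$; there is nothing to prove beyond recalling these definitions.

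The substance is the leftmost equivalence, namely that $(i,j)$ is good with respect to $P$ iff $P\cap S_{i,j}\neq\emptyset$. First I would rewrite the ``good'' condition in terms of $A_i,B_j$. By definition $(i,j)$ is good w.r.t. $P$ iff there exists $u\in P_{<i}\cap\Gamma(i)$ with $\{u,j\}\notin E$. Since $V=[n]$ we have $V_{<i}=[i-1]$, hence $P_{<i}=P\cap[i-1]=P\cap V_{<i}$, and intersecting with $\Gamma(i)$ gives $P_{<i}\cap\Gamma(i)=P\cap(V_{<i}\cap\Gamma(i))=P\cap A_i$. Next, because $G$ is simple (in particular it has no self-loops), the condition $\{u,j\}\notin E$ is equivalent to $u\notin\Gamma(j)=B_j$; the borderline case $u=j$ is consistent, as then $\{u,j\}$ is not an edge and $j\notin\Gamma(j)$. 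Combining these two observations, the existence of such a $u$ is equivalent to $(P\cap A_i)\setminus B_j\neq\emptyset$, i.e. to $P\cap(A_i\setminus B_j)=P\cap S_{i,j}\neq\emptyset$. Chaining this with the rightmost equivalence yields the claimed statement.

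I do not expect a real obstacle here: the only step requiring a moment of care is the bookkeeping identity $P_{<i}\cap\Gamma(i)=P\cap A_i$ together with the translation of non-adjacency into non-membership in $B_j$, and both rely only on the convention $V=[n]$ and on $G$ being simple. As an aside, this proposition is nothing but the single-clique specialization of Proposition~\ref{prop:reduction} (take $\B=\{P\}$, so that the $(i,j)$-entry of the row of $M_{\B,G}$ counts $|P\cap(A_i\setminus B_j)|$); one could therefore also derive it by directly invoking that reduction, but the self-contained argument above is shorter and makes the connection to {\sc QSFI} transparent.
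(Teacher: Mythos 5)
Your proof is correct and follows essentially the same route as the paper's: both reduce the ``good'' condition to the nonemptiness of $P\cap\big((V_{<i}\cap\Gamma(i))\setminus\Gamma(j)\big)$ and observe that this set depends only on $(i,j)$. You merely make the bookkeeping identities $P_{<i}\cap\Gamma(i)=P\cap A_i$ and $(P\cap A_i)\setminus B_j=P\cap(A_i\setminus B_j)$ explicit, which the paper leaves implicit.
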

\begin{proof}
The proof is similar to that of Proposition~\ref{prop:reduction}.
To start with, observe that $j\in V$ is adjacent to all the vertices in $P_{< i}\cap \Gamma(i)$ if and only if 
$\big( P_{< i}\cap \Gamma(i)\big) \setminus \Gamma(j)=\emptyset$. Equivalently, \[(i,j)\in [n]\times[n] \text{ is good \wrt $P$  } \iff
 P \cap\Big(\big( V_{< i}\cap \Gamma(i)\big) \setminus \Gamma(j)\Big)\neq\emptyset.\] 
Clearly, $\big( V_{< i}\cap \Gamma(i)\big) \setminus \Gamma(j)$ 
depends only on $i,j$ and not on $P$, so that one can safely write this set as $S_{i,j}=A_i\setminus B_j=\big( V_{< i}\cap \Gamma(i)\big) \setminus \Gamma(j)$. 
Thus, in order to assess whether $(i,j)$ is good with respect to $P$, 
it is sufficient to check whether or not $P\cap S_{i,j}\neq \emptyset$ holds.
\end{proof}

In practice, in order to test the $Q(P,k)$ queries, one could advantageously exploit some fast set intersection algorithms based on Bitwise-AND and SIMD instructions. 
However, it currently remains an open question to determine how these techniques compare in practice with some other well known algorithms for {\mainproblem} such as the Bron-Kerbosch and derived algorithms.

\end{document}